\newcommand*\samethanks[1][\value{footnote}]{\footnotemark[#1]}
\crefname{equation}{Constraint}{Constraints}
\crefname{theorem}{Theorem}{Theorems}
\crefname{definition}{Definition}{Constraints}
\crefname{figure}{Figure}{Figures}
\crefname{section}{Section}{Sections}
\crefname{lemma}{Lemma}{Lemmas}
\crefname{proposition}{Proposition}{Propositions}
\crefname{appendix}{Appendix}{Appendixes}
\crefname{nclaim}{Claim}{Claims}
\def \fw {4 \cdot \omega}
\def \hbp {\hat{B'}}
\newcommand{\floor}[1]{\lfloor #1 \rfloor}
\newcommand{\mra}{\mbox{\textsc{Sum-MRA}}}
\newcommand{\smr}{\mbox{\textsc{Sum-MR}}}
\newcommand{\stp}[2]{\mbox{Append-path($#1$, $#2$)}}
\newcommand{\stpn}{\mbox{Append-path}}
\title{Approximation Algorithms for Movement Repairmen}
\author{
MohammadTaghi Hajiaghayi\inst{1}\thanks{Supported in part by NSF CAREER award 1053605, NSF grant CCF-1161626,
      ONR YIP award N000141110662, DARPA/AFOSR grant FA9550-12-1-0423,
      and a University of Maryland Research and Scholarship Award (RASA).}
\and Rohit Khandekar\inst{2} \and M. Reza Khani\inst{3}\samethanks \and
Guy Kortsarz \inst{4} \thanks{Supported in part by NSF award number 434923}
}
\institute{University of Maryland,
      College Park, MD; and AT\&T Labs,
      Florham Park, NJ. \protect\url{hajiagha@cs.umd.edu}.
\and Knight Capital Group, Jersey city, NJ. rkhandekar@gmail.com.
\and University of Maryland, College Park, MD. khani@cs.umd.edu.
\and Rutgers University, Camden, NJ. guyk@camden.rutgers.edu.
}
\date{\today}
\begin{document}
\maketitle
\begin{abstract}
In the {\em Movement Repairmen (MR)} problem we are given a metric
space $(V, d)$ along with a set $R$ of $k$ repairmen $r_1, r_2,
\ldots, r_k$ with their start depots $s_1, s_2, \ldots, s_k \in V$ and
speeds $v_1, v_2, \ldots, v_k \geq 0$ respectively and a set $C$ of
$m$ clients $c_1, c_2, \ldots, c_m$ having start locations $s'_1,
s'_2, \ldots, s'_m \in V$ and speeds $v'_1, v'_2, \ldots, v'_m \geq 0$
respectively. If $t$ is the earliest time a client $c_j$ is collocated
with any repairman (say, $r_i$) at a node $u$, we say that the client
is served by $r_i$ at $u$ and that its latency is $t$. The objective
in the (\smr{}) problem is to plan the movements for all repairmen and
clients to minimize the sum (average) of the clients latencies.  The
motivation for this problem comes, for example, from Amazon Locker
Delivery \cite{amazon} and USPS gopost \cite{gopost}.  We give the first $O(\log
n)$-approximation algorithm for the \smr{} problem.
In order to solve \smr{} we formulate an LP for the problem
and bound its integrality gap.  Our LP has exponentially many
variables, therefore we need a separation oracle for the dual LP.
This separation oracle is an instance of Neighborhood Prize
Collecting Steiner Tree (NPCST) problem in which we want to find a tree with
weight at most $L$ collecting the maximum profit from the clients by
visiting at least one node from their neighborhoods. The NPCST
problem, even with the possibility to violate both the tree weight and
neighborhood radii, is still very hard to approximate.  We deal with
this difficulty by using LP with geometrically increasing segments of
the time line, and by giving a {\em tricriteria} approximation for the
problem.  The rounding needs a relatively involved analysis.  We give
a constant approximation algorithm for \smr{} in Euclidean Space where
the speed of the clients differ by a constant factor.  We also give a
constant approximation for the makespan variant.
\end{abstract}

\section{Introduction}
In the well-known {\em Traveling Repairman (TR)} problem,
the goal is to find a tour to cover a set of clients such that the sum
of latencies seen by the clients is minimized. The problem is also known as the {\em minimum latency problem}, see \cite{GK98},
the {\em School-bus driver problem}, see \cite{W93} etc.
This problem is well studied in the operations research literature and has lots of applications in real world, see for example~\cite{BCR93}. The problem is NP-Hard even in tree metrics \cite{S06}. Blum et al. \cite{BCC+94} give the first constant-factor approximation algorithm for the TR problem. They also observe that there is no PTAS ($(1+\epsilon)$-approximation algorithm for an arbitrary constant $\epsilon > 0$) for the problem unless $P = NP$. After a sequence of improvements,
Chaudhuri et al. \cite{CGRT03} give a $3.59$-approximation algorithm for TR which is the current best approximation factor for this problem.

Fakcharoenphol et al. \cite{FHR03} generalize the TR problem to the $k$-Traveling Repairman ($k$-TR) problem in which instead of one repairman, we can
use $k$ repairmen to service the clients where all the repairmen start
from the same depot. They give a $16.994$-approximation algorithm for $k$-TR.
Chekuri and Kumar \cite{CK04} give a $24$-approximation algorithm for the ``multiple-depot'' version of the $k$-TR problem where the repairmen can start from different depots.

Chakrabarty and Swamy \cite{CS10} give a constant-factor approximation algorithm for the classical TR problem by introducing two new LPs. Their work is significant as it is the first LP approach to solve the problem. 

We generalize \cite{CS10} for the $k$-TR problem by allowing the repairmen
to start from different starting depots and to have different speeds.
More importantly, we give the clients ability to move with different speeds, which makes the problem significantly harder.
We formally define \smr{} as follows.
\begin{definition}
\label{def-problem}
In the \smr{} problem the inputs are given as follow.
\begin{itemize}
\item A metric space $\M = (V, d)$ where $V$ is the set of nodes and $d: (V \times V) \rightarrow \mathbb{Q}^+$ is the distance function.
\item A set $R$ of $k$ repairmen $r_1, r_2, \ldots, r_k$. Each repairman $r_i$ has a start depot $s_i \in V$ and speed $v_i \in \mathbb{Q}^+$.
\item A set $C$ of $m$ clients $c_1, c_2, \ldots, c_m$. Each client $c_j$ has a start location $s'_j \in V$ and speed $v'_j \in \mathbb{Q}^+$.
\end{itemize}
A solution to the problem consists of the following.
\begin{itemize}
\item A pair $(u_j, t_j)$ for each client $c_j$ such that $c_j$ can reach node $u_j$ by time $t_j$ considering its speed $v'_j$ (\ie $d(s'_j, u_j) \leq v'_j \cdot t_j$).
\item A path $p_i$ for each repairman $r_i$. In general $p_i$ may not be a simple path and can contain a node or an edge multiple times. Repairman $r_i$ can travel along $p_i$ with maximum speed $v_i$.
\item For each pair $(u_j, t_j)$ assigned to client $c_j$ there has to be at least one repairman ($r_i$) such that $r_i$ visits $u_j$ at time $t_j$ when it travels path $p_i$.
\end{itemize}
The objective for \smr{} is to minimize $\sum_{j = 1}^m t_j$.
\end{definition}

The problem is very natural and is also motivated by the following
real-world scenario. Amazon Locker Delivery is an optional shipping
method in Amazon online stores.  In this method clients have an option
to select a certain locker location to pick up their purchased items.
Afterwards, Amazon puts the items into a locker in the specified
location and sends the locker number and its key combination, to the
customer. The package can be picked up by the customer who can go to
the locker location by her own means. A very similar delivery option is also offered by the United States Postal Service which is known as gopost \cite{gopost}.

Our algorithm can be used directly in order to plan the movements
to minimize the average latency (or the maximum latency). Here the locations
of the Amazon stores, clients' homes and locker locations can be thought as
the nodes in the metric space in \smr{} and the repairmen are the
shipping vehicles starting from the Amazon stores with different speeds.
Moreover, we can take as input how customers are going to pick up their
packages (\eg by a car, public transport, bike, and etc.) which realizes the different speeds for the clients.   Note that unlike
\smr{} in this scenario it is not necessary for both a repairman and a
client to meet at the same node and the same time in order to serve;
but if a repairman visits a node at time $t$ a client can visit the
node at any time after $t$ and still get served.  We formalize these
methods of serving and show that the difference in the
objective of \smr{} for the two methods is at most $3 + \epsilon$ in \cref{sec:prel}.

In \cref{connections} we describe connection of our problem to the movement
framework, neighborhood TSP problems, and orienteering problems
respectively.  In \cref{sec-results} we give the outlines of our
techniques and summarize all our results. \cref{sum-sec} contains the
detailed explanation of our method to solve \smr{} in three
subsections. Subsection \ref{sec:prel} contains the necessary
preliminaries, in Subsection \ref{sec:lp} we give our LP formulation,
show how to solve it in Subsection \ref{solving-plp}, and finally in
Subsection \ref{sec:rounding} we show how to round a fractional
solution to the LP to get an integral solution to the \smr{}. It turns
out that the separation oracle for our LP is a generalization of the
Neighborhood TSP problem (to be defined in Subsection
\ref{sec-intro-ntsp}), we give the results related to the separation
oracle problem in \cref{npcst-sec}. In \cref{sec:euclidean} we give our result for the Euclidean space. \cref{max-mr} contains all the
materials related to the Max-MR problem in which instead of minimizing 
the sum of the latencies  seen by the clients we want to minimize the 
latency of the last client we visit.

\section{Connection of MR to the Other Class of Problems}
\label{connections}
Chakrabarty and Swamy \cite{CS10} define a general problem called Minimum Latency Uncapacitated Facility Location (MLUFL). In MLUFL we are given a set $F$ of $n$ facilities with opening costs $\{f_i\}$, a set $D$ of $m$ clients, a root node $r$, and connection costs $\{c_{ij}\}$ for connecting client $j$ to facility $i$. The objective is to select a subset $F'$ of facilities to open, find a path $p$ staring from $r$ to visit all the facilities in $F'$, and assign each client $j$ to a facility $\Phi(j)$, to minimize $\sum_{i \in F'} f_i + \sum_{j \in D} \left(c_{\Phi(j)j} + t_j\right)$ where $t_j$ is the distance of $\phi(j)$ from $r$ in the path $p$. In the {\em related} MLUFL problem the  connection cost $c_{\Phi(j)j}$ is the distance of client $j$ to facility $\Phi(j)$ in the metric space. Chakrabarty and Swamy \cite{CS10} provide a constant factor approximation algorithm for the related MLUFL problem. They also generalize it to the case when instead of one activating path $p$ we can have $k$ activating paths. This result gives a constant factor approximation algorithm for the special case of \smr{} when all the repairmen have the same speed and start from the same depot. In \smr{} we want to minimize $\sum_{j \in D} \max\left(c_{\Phi(j)j}, t_j\right)$ as opposed to $\sum_{j \in D} \left(c_{\Phi(j)j} + t_j\right)$ in related MLUFL where the facility opening costs are zero; but note that they have a multiplicative difference of at most $2$.

In the rest of this section we show how the MR problem is related to the other well-studied problems in computer science. 
In particular we show its connections to the movement framework, neighborhood TSP problems, and Orienteering problems.
\subsection{Connection to the movement framework}
\label{intro-movement}
\smr{} can be defined under the {\em movement} framework first introduced by Demaine et al.~\cite{DHM+09}. In the movement framework, we are given a general weighted graph and pebbles with different colors are placed on the nodes of the graph forming a starting configuration. The goal is to move the pebbles such that the final configuration of the pebbles meets a given set of properties. We can think of the repairmen as blue pebbles and the clients as red pebbles on the metric completion of the graph. Each red or blue pebble can move with different speeds. The latency of a red pebble is the earliest time it is collocated with a blue pebble. The objective in \smr{} is to minimize the total latency.

The paper by Demaine et al.~\cite{DHM+09} inspired several other
papers on the movement problems and there are a handful of recent approximation algorithms for them as well. Of primary relevance to this paper, is the work of
Friggstad and Salavatipour~\cite{friggstad-salavatipour-focs} who
consider minimizing movement in the facility location setting in
which both facilities and clients are mobile and can move. The
quality of a solution can be measured either by the total distance
(average distance) clients and facilities travel or by the maximum
distance traveled by any client or facility. They obtain constant-factor approximation algorithms for these problems and recently, Friggstad \etal \cite{AFS13} improve the constant factor to $(3 + \epsilon)$ for a slightly more general problem. Very recently,
Berman, Demaine, and Zadimoghaddam~\cite{BDZ11} obtain a constant-factor approximation algorithm for minimizing maximum movement to
reach a configuration in which the pebbles form a connected
subgraph. This result is very interesting since Demaine et
al.~\cite{DHM+09} show with the total sum movement object function,
the connectivity movement problem is $\Omega(n^{1-\epsilon})$
inapproximable, and for which there is only an
$\tilde{O}(n)$-approximation algorithm. Demaine, Hajiaghayi,
and Marx~\cite{DHM09} consider the problem when we have a relatively
small number of mobile agents (e.g., a team of autonomous robots,
people, or vehicles) moving cooperatively in a vast terrain or
complex building to achieve some task. They find optimal solutions for several movement problems in polynomial time where the number of pebbles is constant.  Finally there are various
specific problems considered less formally in practical scenarios
\cite{rus:icra04,rus:iser04,habfm-ardrue-03,LaValle06,ReifW95,MRS03,DMMMZ97,JBQZ04,StrijkW01,JQQZC03}
as well.

\subsection{Connection to the Neighborhood TSP problem}
\label{sec-intro-ntsp}
Suppose we guess in advance that a client $c$ is going to be served by time $t_c$. \footnote{We are able to estimate time $t_c$ for each client with the help of our LP for \smr{}.} Client $c$ can go to any node which is reachable from its starting location by time $t_c$ in order to be served. The set of such reachable nodes defines a neighborhood for each client. On the other hand, repairmen's task is to visit a node in each client's neighborhood in order to serve him. In fact, giving the clients ability to move in \smr{} can be thought as assigning a neighborhood to each client.

Neighborhood problems are well-studied in the theoretical computer science community.
A well-studied problem in this category is the Neighborhood TSP (NTSP) problem where we are given a set of clients each with a neighborhood and the goal is to find a minimum tour to visit at least one node from the neighborhood of each client. Visiting a neighborhood instead of a node makes the problem considerably harder; interestingly we will show an $\Omega(\log^{2 - \epsilon})$-hardness for the NTSP. This hardness is indeed the source of subtlety in \smr{}. 

NTSP is especially studied for the Euclidean Space in the CS community, as it has routing-related and VLSI design applications \cite[Chapter~15]{M00} \cite{RW89}. Euclidean Neighborhood TSP (ENTSP) is known to be APX-hard \cite{BGK+05,SS06,EFS06}. A neighborhood is called fat if we can fit a disc inside the neighborhood such that the radius of the disc is at least a constant fraction of the radius of the neighborhood, where the radius of a neighborhood is half of the distance between its two farthest points. A PTAS for ENTSP is known, when the neighborhoods are fat, roughly the same size (their sizes differ in at most a constant factor), and overlap with each other in at most a constant number of times\cite{DM01,FG04,M07}. When neighborhoods are connected and disjoint Mitchell \cite{M10} gives a constant-factor approximation algorithm for the problem. Elbassioni \etal \cite{EFS06} give a constant-factor approximation for ENTSP when the neighborhoods are roughly the same size, convex, fat, and can intersect.

In this paper we generalize ENTSP to the prize-collecting version. In the prize collecting ENTSP we are given a budget $B$ and each neighborhood is assigned a profit, the objective is to find a tour of maximum length $B$ to maximize the sum of profits of the neighborhoods the tour intersects.  We give a constant-factor approximation algorithm for the prize-collecting ENTSP when the neighborhoods are fat and roughly the same size and can intersect. As a result of this algorithm, we obtain a constant factor approximation algorithm to \smr{} in Euclidean Space.

\subsection{Connection to the Orienteering Problem}
\label{intro-orienteering}
It turns out that the separation oracle for the dual of our LP is closely related to the {\em orienteering problem}. In the orienteering problem we are given a graph $G = (V, E)$, two nodes $s,t \in V$ and a length bound $B$ and the goal is to find a tour starting from $s$ and ends in $t$ with length at most $B$ that visits maximum number of nodes. Orienteering is shown to be NP-Hard via an easy reduction from the TSP problem and it is also APX-hard \cite{BCK+03}. Blum \etal \cite{BCK+03} give the first constant-factor approximation algorithm with ratio $4$ for orienteering which is improved to $3$ by Bansal \etal \cite{BBC+04}. Chekuri \etal \cite{CKP08} give a ($2 + \epsilon$)-approximation algorithm for the problem in the undirected graphs and an $O(\log^2 \opt)$-approximation algorithm in directed graphs, which are the current best approximation factors for orienteering.


\section{Results and techniques}
\label{sec-results}
In this section we summarize all our results along with the overview of their proofs. All the ideas explained here are new in this context.

Our main result is an $O(\log n)$-approximation algorithm for \smr{}. More precisely we prove the following theorem.
\begin{theorem}
\label{const-theorem}
There is an $O(\log n)$-approximation algorithm for the \smr{} problem which also upper bounds the integrality gap of its LP formulation.
\end{theorem}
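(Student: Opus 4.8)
\emph{Proof plan.} We follow the LP-rounding paradigm of Chakrabarty and Swamy, generalized to cope with distinct depots, distinct speeds, and---the real difficulty---mobile clients. The first step is a time-indexed LP relaxation (the formulation promised in \cref{sec:lp}). Discretize the time line into geometrically growing windows $\tau_\ell = 2^\ell$, $\ell = 0,1,\dots,\Lambda$, where $2^\Lambda$ bounds the latency of any sensible schedule. For each repairman $r_i$ and scale $\ell$ create a variable $x^i_{P,\ell}$ for every walk $P$ starting at $s_i$ of $d$-length at most $v_i\tau_\ell$, with the constraint $\sum_P x^i_{P,\ell}\le 1$; for each client $c_j$ create variables $y_{j,\ell}\ge 0$ with $\sum_\ell y_{j,\ell}=1$, read as the fraction of $c_j$ served during window $\ell$. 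The coupling constraint states that the mass of $c_j$ served by the end of window $\ell$ is at most the total path mass (over all repairmen and all scales $\le\ell$) that enters the reachable ball $B(s'_j,\,v'_j\tau_\ell)$. Minimize $\sum_j\sum_\ell \tau_\ell\,y_{j,\ell}$. Any integral schedule induces a feasible point whose value is within $O(1)$ of its cost---the geometric discretization and the ``meet-at-a-node'' reduction of \cref{sec:prel} each lose only a constant---so the LP optimum is $O(1)\cdot\mathrm{OPT}$, hence a lower bound up to constants.

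The second step is to solve (a relaxation of) this LP. There are exponentially many path variables, so we run the ellipsoid method on the dual. For a fixed repairman--scale pair and a fixed profit vector on the clients, the dual separation problem asks for a walk from $s_i$ of $d$-length at most $v_i\tau_\ell$ that collects more than a threshold of client profit by entering client neighborhoods---exactly a rooted, budgeted instance of the Neighborhood Prize-Collecting Steiner Tree problem of \cref{sec-intro-ntsp}. As NPCST is hard even allowing both budget and radius slack, we can only use the \emph{tricriteria} approximation of \cref{npcst-sec}: it returns a walk of $d$-length $O(v_i\tau_\ell)$, with each client neighborhood inflated to $B(s'_j,\,O(v'_j\tau_\ell))$, collecting profit within an $O(\log n)$ factor of optimal. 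Plugging this approximate oracle into the standard ellipsoid machinery for LPs with exponentially many variables (as in Chakrabarty--Swamy / Carr--Vempala) produces, in polynomial time, a fractional $(x,y)$ of objective at most the true LP optimum that is feasible for the \emph{relaxed} LP in which every path budget is enlarged by $O(1)$, every neighborhood is inflated by $O(1)$, and---here the logarithm appears---each client's total served fraction is only $\Omega(1/\log n)$ instead of $1$, that mass still lying in windows whose scale is comparable to the client's fractional latency.

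The third step is the rounding. Fix the relaxed fractional solution. Inside a window $\ell$, the variables $\{x^i_{P,\ell}\}_P$ give, for each $r_i$, a fractional distribution over short walks; aggregated over $i$ they cover an $\Omega(1/\log n)$-fraction of the client demand nominally assigned to window $\ell$. A TSP/orienteering-style rounding---sample a walk, or double-and-shortcut an appropriate fractional tree---extracts for each $r_i$ one genuine walk of $d$-length $O(v_i\tau_\ell)$ keeping a constant fraction of the covered profit. Concatenating, for each $r_i$ in order of increasing $\ell$, the walks chosen at scales $0,1,\dots,\ell$ yields a single walk whose prefix through scale $\ell$ has length $\sum_{\ell'\le\ell}O(v_i\tau_{\ell'})=O(v_i\tau_\ell)$ by the geometric sum; hence a client covered by $r_i$ at scale $\ell$ is actually reached, and it can itself travel within its $O(1)$-inflated neighborhood in extra time $O(\tau_\ell)$, by time $O(\tau_\ell)$. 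Running $O(\log n)$ independent repetitions of this construction (interleaved over the windows so the geometric length bound is unaffected) makes each client served with constant probability by time $O(\log n)\cdot\tau_{\ell_j}$, where $\tau_{\ell_j}$ is comparable to its fractional latency; summing over clients gives an integral schedule of expected cost $O(\log n)\cdot\mathrm{OPT}$, which a conditional-expectations argument derandomizes. The same chain of inequalities certifies that the integrality gap of the LP is $O(\log n)$.

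The main obstacle is this rounding step: one must simultaneously (i) turn the per-window fractional path distributions into one walk per repairman while losing only constants in both length and collected profit, (ii) honor the neighborhood inflation so a client truly reaches a visited node in time, (iii) stitch the $\Theta(\log)$ windows together so that latencies telescope geometrically rather than additively, and (iv) interleave the $O(\log n)$ boosting repetitions across repairmen and windows without inflating either the lengths or the profit accounting. A secondary subtlety is verifying that the approximate-separation/ellipsoid composition indeed returns a point of cost at most the exact LP optimum that is feasible for exactly the stated relaxation, so that no extra loss hides there.
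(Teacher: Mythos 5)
Your plan follows the paper's approach essentially step for step: the same geometric time-indexed configuration LP (shown to be within a constant of OPT via the indirect-serving and power-of-two discretization lemmas), the same dual separation via a tricriteria NPCST oracle with an ellipsoid/Farkas argument recovering a point of cost at most the LP optimum that is feasible for a relaxed LP, and the same scale-by-scale path sampling, concatenation with geometric telescoping of lengths, and derandomization. The only deviations are bookkeeping: the guarantee you quote for the oracle, $\bigl(O(1),O(1),O(\log n)\bigr)$, is the referee's alternative algorithm rather than the paper's main $\bigl(O(\log n),O(\log n),2\bigr)$ theorem (either suffices via the paper's general reduction), and the paper's relaxed LP keeps full client coverage while allowing each repairman $\omega$ paths per scale--- it then runs $4\omega$ rounds per scale so the unserved fractional mass shrinks by a factor of $4$ per scale, which is precisely what makes the latency-weighted geometric sum converge and guarantees every client is eventually served; your ``constant success probability per scale'' must be tuned to beat the doubling of the time-stamps in exactly this way, and your variant of the relaxation (coverage $\Omega(1/\log n)$ with mass at the right scales) is most cleanly obtained as a rescaling of the paper's full-coverage, $\omega$-multiplicity formulation.
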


We present the novel properties (in this context) of our techniques.
First we relax conditions on serving the clients. If a client
collocates with a repairman at a certain node of the metric space
during the movements we say it gets served {\em perfectly}. On the
other hand if a client visits a node through which a repairman has
passed no later than the arrival of the client, we say it gets served
{\em indirectly}. We design a procedure that transforms any solution
to the \smr{} problem where the clients are served indirectly to a
solution where all the clients are served perfectly by increasing the
total latency with a multiplicative factor at most $3 + \epsilon$. We
solve \smr{} for the case when we serve the clients indirectly and use
the procedure to serve the clients perfectly.  We give an LP
formulation for \smr{} (to serve the clients indirectly) and bound its
integrality gap. However there are two major challenges in order to do
so. First, solving the LP which has exponentially many variables and
second, rounding a solution to the LP efficiently to an integral
solution.

In order to solve the LP we need a separation oracle for its dual which turns out to be the following problem.
\begin{definition}
\label{npcst-def}
Neighborhood Prize Collecting Steiner Tree (NPCST): An instance of the NPCST problem consist of an ordered tuple $(V, d, r, C, L)$ where $V$ is the set of nodes, $d$ is a metric distance function on the set $V$, $r \in V$ is the root node, $C$ is the set of clients, and $L$ is the cost budget. Each client $c \in C$ is associated with a profit $\theta_c$ and a neighborhood ball $\B(c, t_c)$ which contains all the nodes $u$ with $d(u,c) \leq t_c$. The goal is to find a Steiner tree $T_{\OPT}$ such that $cost(T_{\OPT}) \leq L$ and the sum of the profits of the clients whose $\B$-ball hits $T_{\OPT}$ is maximized.
\end{definition}
The vehicle routing problems become significantly harder when instead of visiting a node,  it is sufficient to visit a neighborhood around it. For example in the Neighborhood Steiner Tree (NST) problem, we are given a graph $G$ with a set of clients $C$ where each client $c$ is associated with a neighborhood ball. The objective for NST is to find a tree $T$ with minimum weight that serves at least one node from each client's neighborhood ball. We will prove the following hardness result about the NST problem which shows the source of difficulty in our problem.
\begin{theorem}
\label{npcst-hardness}
There is no $O(\log^{2 - \epsilon} n)$-approximation algorithm for the NST problem unless NP has quasi-polynomial Las-Vegas algorithms.
\end{theorem}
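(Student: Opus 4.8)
The plan is to reduce from the Group Steiner Tree (GST) problem restricted to tree instances, for which Halperin and Krauthgamer established that no $O(\log^{2-\epsilon}n)$-approximation exists unless NP has quasi-polynomial-time Las-Vegas algorithms. Recall that a GST instance is an edge-weighted tree $T$ with a root $\rho$ and groups $g_1,\dots,g_k\subseteq V(T)$, and we seek a minimum-weight subtree of $T$ containing $\rho$ and at least one vertex of each $g_i$. The only structural difference between GST and NST is that the targets in NST are metric balls $\B(\cdot,\cdot)$ rather than arbitrary vertex subsets, so the core of the reduction is a gadget that realizes an arbitrary group as a ball while tightly controlling which vertices fall inside it.

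The gadget is as follows. Fix a weight $W$ strictly larger than the total weight $w(T)$ of the whole tree. Form the NST graph $G$ from $T$ by keeping all vertices and edges of $T$; for each group $g_i$ adding a fresh vertex $c_i$ joined to every $v\in g_i$ by an edge of weight $W$; declaring $c_1,\dots,c_k$ to be clients with neighborhood radius $t_{c_i}=W$; and (to enforce the root) adding one more client $c_0$ located at $\rho$ with radius $0$, which forces $\rho$ into any feasible NST tree. The first thing I would verify is the ball-containment identity: since every $G$-neighbor of $c_i$ lies in $g_i$, any path from a vertex $u\in V(T)$ to $c_i$ must finish with a weight-$W$ edge, so $d_G(u,c_i)=W+d_T(u,g_i)$, and (heavy edges make any reroute through another $c_j$ cost at least $2W>w(T)$) shortest distances among vertices of $T$ are unchanged. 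Hence $\B(c_i,W)=\{c_i\}\cup g_i$ exactly, independent of how large $W$ is, and $\B(c_0,0)=\{\rho\}$.

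Next I would prove $\mathrm{OPT}_{\mathrm{NST}}=\mathrm{OPT}_{\mathrm{GST}}$. For ``$\le$'', an optimal rooted GST subtree is already a feasible NST solution of the same weight: it contains $\rho$ (serving $c_0$) and a vertex of each $g_i$ (serving $c_i$). For ``$\ge$'', I would take any feasible NST tree $T'$ and clean it up without increasing its weight until it uses none of the new vertices $c_i$: if $c_i$ is a leaf of $T'$ it hangs off a single weight-$W$ edge to some $v\in g_i\subseteq T'$, so deleting $c_i$ and that edge keeps $T'$ connected, still serves $c_i$ via $v$, and saves $W>0$; if $c_i$ has degree $d\ge 2$ in $T'$ it is being used to join $d$ vertices of $g_i$ at cost $dW$, which we can instead reconnect inside $T$ at cost at most $(d-1)w(T)<dW$ and then pass to a spanning tree of the result. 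Iterating over all $i$ leaves a connected subtree $T'\subseteq T$ of no larger weight that still contains $\rho$ and hits every $g_i$ (it hits $\B(c_i,W)=\{c_i\}\cup g_i$ but no longer contains $c_i$), i.e.\ a feasible GST solution; so $w(T')\ge \mathrm{OPT}_{\mathrm{GST}}$. The same cleanup turns any $\alpha$-approximate NST solution into an $\alpha$-approximate GST solution, so the reduction is approximation-preserving. Since $G$ has $n_{\mathrm{NST}}=n_{\mathrm{GST}}+k=\mathrm{poly}(n_{\mathrm{GST}})$ vertices, $\log^{2-\epsilon} n_{\mathrm{NST}}=\Theta(\log^{2-\epsilon} n_{\mathrm{GST}})$, and an $O(\log^{2-\epsilon}n)$-approximation for NST would contradict the Halperin--Krauthgamer bound unless NP has quasi-polynomial-time Las-Vegas algorithms.

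I expect the main obstacle to be the cleanup argument in the ``$\ge$'' direction — specifically, ruling out that the artificial client vertices $c_i$ are exploited as cheap Steiner connectors inside the optimal NST tree. This is precisely why every edge incident to a $c_i$ gets the single heavy weight $W>w(T)$ and why $t_{c_i}$ is set to that same $W$: it simultaneously pins the ball $\B(c_i,W)$ to $\{c_i\}\cup g_i$ and makes the ``delete a leaf / reroute a detour'' replacements strictly decreasing. Lining up the inequalities $W>w(T)$ and $(d-1)w(T)<dW$ with the degree-based case analysis, and checking that taking spanning trees after rerouting does not destroy feasibility for the other balls, is where the care is needed; everything else is bookkeeping.
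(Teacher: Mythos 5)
Your proposal is correct and follows essentially the same route as the paper: a reduction from Group Steiner Tree in which each group gets a dummy client vertex attached by uniformly heavy edges, the client's ball radius equals that edge weight so the ball is exactly the group (plus the dummy), and a cleanup argument shows the dummies are never useful as Steiner connectors, making the reduction approximation-preserving. The only differences are cosmetic: you take $W>w(T)$ on the connector edges and add a radius-$0$ client for the root, whereas the paper starts from the (unrooted) metric/tree GST instance and uses the diameter $M'$ as the connector weight and ball radius.
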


To avoid this hardness we allow relaxing the NPCST constraints. More formally we accept a tri-criteria approximation algorithm for NPCST as our separation oracle defined formally below.
\begin{definition}
A $(\sigma, \phi, \omega)$-approximation algorithm for the instance $(V, d, r, C, L)$ of the NPCST problem finds a Steiner tree $T$ with the following properties; $T$ is said to hit a client $c$ with $\B$-ball $\B(c,t_c)$ if $T$ has at least one node in $\B(c, t_c \cdot \sigma)$, the weight of $T$ is at most $\phi \cdot L$, and sum of the profits of the clients got hit by $T$ is at least $\frac{1}{\omega} \opt$ where $\opt$ is the amount of profit an optimum tree collects with no violation in any bound.
\end{definition}

Accepting a tri-criteria approximation algorithm for NPCST has two benefits. Firstly, it reduces the difficulty of solving the NPCST problem to avoid the hardness results similar to \cref{npcst-hardness}. Secondly, later when we transform the solution of the algorithm to  a solution of \smr{}, it allows the approximation factor on the traveling time for a client and a repairman to reach to a certain node (latency) to get split  between both the client (violating its neighborhood) and the repairman (violating the weight of the tree). However a solution to NPCST resulting from a tri-criteria approximation algorithm is harder to transform to a solution of \smr{}.
We prove the following general theorem to transform any tri-criteria approximation algorithm to the NPCST problem to an efficient approximation algorithm for \smr{}.
\begin{theorem}
\label{general-thm}
Given a $(\sigma, \phi, \omega)$-approximation algorithm for NPCST, we can find an $O(\max(\sigma, 2\phi) \cdot \omega)$-approximation algorithm for \smr{}.
\end{theorem}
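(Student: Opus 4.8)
\medskip
\noindent\emph{Proof plan.}
We argue about the indirect-service version of \smr{} and recover perfect service only at the very end, via the transformation of \cref{sec:prel}, at the cost of one more $(3+\epsilon)$ factor that the final $O(\cdot)$ absorbs; write $\opt$ for the optimum of the given instance. The plan has three stages: (i) formulate a time-indexed LP for \smr{} whose time axis is cut into geometrically growing windows; (ii) solve a relaxed version of this LP by running the ellipsoid method on its dual with the $(\sigma,\phi,\omega)$-approximation for NPCST as an approximate separation oracle; and (iii) round the resulting fractional solution into $k$ concrete repairman walks, charging each client's latency against the LP cost.

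\medskip
\noindent\emph{The LP and its dual.}
Index windows by $i$ and set $t_i=2^i$. A \emph{scale-$i$ configuration} for repairman $r_\ell$ is a tree $T$ rooted at $s_\ell$ of length at most $v_\ell t_i$; it \emph{serves} client $c_j$ if $T$ contains a node within distance $v'_j t_i$ of $s'_j$. The LP has a nonnegative variable $z_{\ell,T,i}$ for every configuration, a budget constraint $\sum_T z_{\ell,T,i}\le 1$ for each $(\ell,i)$, a covering constraint $\sum_{\ell,i}\sum_{T\text{ serves }c_j} z_{\ell,T,i}\ge 1$ for each client, and objective $\sum_j\sum_i t_i\cdot(\text{extent to which }c_j\text{ is served at scale }i)$. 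An optimal \smr{} schedule induces a feasible integral point, so the LP optimum is at most $\opt$. The LP has exponentially many variables; in its dual there is one constraint per configuration, and separation amounts, after rescaling the client profits by their dual prices, exactly to an NPCST instance whose cost budget $L$ is the scale-$i$ length bound of $r_\ell$. Plugging the $(\sigma,\phi,\omega)$-approximation into the ellipsoid method as an approximate separation oracle and using the standard round-or-separate analysis, we obtain, while touching only polynomially many configurations, a feasible fractional solution $z$ for the \emph{relaxed} LP in which every configuration is allowed a tree of length up to $\phi L$ and a service radius scaled up by $\sigma$, whose objective value is $O(\omega)\cdot\opt$.

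\medskip
\noindent\emph{Rounding.}
For each client $c_j$ let $i_j$ be the least scale at which configurations of scale at most $i_j$ serve $c_j$ to total extent at least $\tfrac12$; since $c_j$ is served to total extent at least $1$, at least half of its mass lies at scales at least $i_j$, so $c_j$ contributes at least $\tfrac12 t_{i_j}$ to the objective, and hence $\sum_j 2^{i_j}=O(\omega)\cdot\opt$. It therefore suffices to produce walks in which every $c_j$ is served by time $O(\max(\sigma,2\phi))\cdot 2^{i_j}$. To this end we round the windows independently: for each pair $(\ell,i)$ we draw a constant number of configurations from the sub-distribution $\{z_{\ell,T,i}\}_T$, and $r_\ell$'s walk is the concatenation, over increasing scales, of all the chosen trees, each realized as an Euler tour. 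Two geometric-series facts drive the analysis. First, since the scale-$i$ trees have length $O(\phi v_\ell t_i)$, the total length $r_\ell$ has travelled by the end of scale $i$ is $O(\phi v_\ell t_i)$, so $r_\ell$ visits every node of a scale-$i$ tree it has chosen by time $O(\phi)\cdot 2^i$; the client can reach such a node within $\sigma\cdot 2^i$ time because its ball was inflated only by $\sigma$, and the meeting time is the larger of the two, whence $\max(\sigma,2\phi)$ (the $2$ being the tree-to-tour factor). Second, because $c_j$'s serving mass at scales at most $i_j$ is at least $\tfrac12$ and we sample enough configurations per window, the probability that no chosen tree hits $c_j$ through scale $i_j$ is a small constant; deferring the few clients missed at scale $i_j$ to later scales and arguing that these deferrals telescope, we conclude that every client is served at a scale within a constant factor of $2^{i_j}$. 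Summing, $\sum_j\mathrm{lat}(c_j)=O(\max(\sigma,2\phi)\cdot\omega)\cdot\opt$, and applying the indirect-to-perfect transformation completes the proof.

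\medskip
\noindent\emph{Main obstacle.}
The delicate part is the rounding and its interaction with how the LP is solved: one must route the three losses --- the client-radius blow-up $\sigma$, the tree-weight blow-up $\phi$, and the coverage factor $\omega$ --- so that they combine to $O(\max(\sigma,2\phi)\cdot\omega)$ rather than compound into an extra $\log n$ coming from the window structure, or an extra $\omega$ coming from re-covering clients one has already paid for. Concretely, keeping each repairman's cumulative walk a geometric sum across windows, sampling just enough configurations per window that the per-scale miss probability is a sufficiently small constant (so the deferral of missed clients telescopes rather than diverges), and splitting ``latency $=$ travel time'' cleanly into the client's radius term and the repairman's tree-to-tour term, is the technical heart of the argument.
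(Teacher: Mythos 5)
Your overall architecture is the same as the paper's (a configuration LP over geometrically increasing time scales, solved via the ellipsoid method with the tri-criteria NPCST algorithm as an approximate separation oracle, then a per-scale randomized rounding whose latency is charged to the LP objective), but two of your intermediate claims are exactly the delicate points, and as stated they do not hold. First, you assert that round-or-separate yields a fractional solution that keeps the budget constraint $\sum_T z_{\ell,T,i}\le 1$ per repairman and window while only inflating the objective to $O(\omega)\cdot\opt$. The Farkas-based argument (the paper's \cref{relaxation}) gives the opposite trade-off: since the oracle only certifies ``no inflated configuration collects more than $\beta_{r,t}/\omega$,'' the infeasibility certificate produces a solution of a relaxed LP in which the per-window budget becomes $\omega$ (\cref{rplp1}) while the objective stays at most $(1+\epsilon)\opt$; I do not see how to push the $\omega$ into the objective instead (scaling $\beta\mapsto\omega\beta$ wrecks the dual objective, and the constraints $\lambda_c\le t+\sum\theta$ are not homogeneous), and you give no argument. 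This is not cosmetic: your rounding samples only a constant number of configurations per window from what you assume is a sub-distribution of total mass at most $1$; with mass up to $\omega$ you must sample with probabilities $x/\omega$ and repeat $\Theta(\omega)$ times per window, which is precisely where the factor $\omega$ enters the latency in the paper (the $4\omega$ executions of \stpn{} and the $16\mu\omega q$ latency bound in \cref{rounding}).

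Second, your covering constraint is aggregated over all scales, $\sum_{\ell,i}\sum_{T \text{ serves } c_j} z_{\ell,T,i}\ge 1$, so the LP is free to place all of a client's coverage at scales $\le i_j$ and none later; then a client missed by the constantly-many samples drawn at those scales (an event of constant probability) may never be hit by any sampled configuration at any later scale, and your ``deferrals telescope'' step --- which is the crux of the whole rounding --- has nothing to stand on: you get neither geometric decay of the miss probability across scales nor a proof that every client is eventually served, which a sum-of-latencies bound requires. The paper's LP avoids this by making the covering constraint cumulative per scale (\cref{plp2}, \cref{rplp2}): the scale-$t$ path variables alone must cover all demand accumulated up to time $t$, so fresh coverage of the not-yet-served clients is available at every subsequent scale; this is what drives the geometric decrease of the unserved mass (\cref{claim:stp}), and together with the extended horizon $2^{\lceil\log T\rceil+\lceil\log m\rceil/2+1}$ and the derandomized greedy selection (\cref{coverage}, \cref{claim:derandomize}) it guarantees that all clients are served and that the total latency telescopes to $O(\mu\omega)\cdot\opt$. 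To repair your proof you would need to (i) adopt the relaxed-budget form of the LP guarantee and move the $\omega$ into the number of sampled configurations per window, and (ii) add the per-scale cumulative covering constraints (or an equivalent monotonicity device) so that the deferral argument, and termination, can actually be proved; you would then essentially be reproducing \cref{relaxation,coverage,rounding}. (Derandomization is also left unaddressed, but that is minor compared to the two issues above.)
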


Proving \cref{general-thm} has two parts. The first part is to use the tri-criteria approximation algorithm to find a feasible solution for our LP and the second part is to round the feasible solution. For the first part we introduce a new relaxed LP for \smr{} to absorb the violations of the tri-criteria approximation algorithm while keeping the optimal value of the relaxed LP to be at most the optimal value of the original LP. Then we show that using the $(\sigma, \phi, \omega)$-approximation algorithm for NPCST, we can find a feasible solution to the relaxed LP with the objective value at most the optimum value of the original LP.

For the second part of the proof, we round the feasible solution found in the previous part to an integral solution for \smr{} with the total latency at most $O(\max(\sigma, 2\phi) \cdot \omega)$ times the optimal value of our LP.
Our algorithm (later given in \cref{mra}) for the rounding part is easy to state and implement but needs a relatively complicated analysis.
The algorithm runs in several steps where each step represents a time-stamp. The time-stamps increase geometrically, \ie the time-stamp of a step is twice as the time-stamp of the previous step. At each step we randomly select a tour for each repairman from the set of all tours with the length at most the time-stamp times the repairman's speed. The random selection is done using the LP values. The output of our algorithm is the concatenation of all the tours selected at each step.
The idea for the analysis of our algorithm is as follows. Let the time-stamp for a certain step be $2^a$ and $F$ be the (fractional) number of clients that are served by time $2^a$ according to the LP values. We show the expected number of clients that our algorithm serves in the step is at least $\frac{3F}{4}$. We show that this condition is enough to bound the total latency of the clients. Finally, we show that our algorithm can be derandomized. The derandomization is done by a recursive algorithm which takes an arbitrary subset ($R'$) of $R$ and selects a path for a repairman $r$ in $R'$ and calls itself with parameter $R' \setminus \{r\}$. It selects a path for $r$ which covers the maximum number of clients from the set of clients that are served fractionally by repairmen of $R'$ in the LP solution but not served by the paths we have selected till now. We prove that the greedy algorithm serves at least $\lceil \frac{3F}{4} \rceil$ clients by induction on the size of set $R'$.

We prove the following theorem about the NPCST problem for the general metrics which is of independent interest and non-trivial. In order to prove \cref{const-theorem}, we plug this result about the NPCST problem  in \cref{general-thm}. 
\begin{theorem}
\label{npcst-thm}
There is an $\left( O(\log n), O(\log n), 2 \right)$-approximation algorithm for the NPCST problem in general metrics.
\end{theorem}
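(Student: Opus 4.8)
The plan is to reduce NPCST, via a tree embedding, to a budgeted rooted Steiner tree problem \emph{on a tree} in which each client's neighborhood has collapsed to (the leaves hanging below) a single tree node; the $O(\log n)$ distortion of the embedding will account for the blow-ups in the weight bound ($\phi$) and in the neighborhood radii ($\sigma$), while the near-integrality of the resulting tree problem will account for the constant profit loss ($\omega=2$). A little preprocessing comes first. If the optimum tree $T^{*}$ (of weight $\le L$) collects a client $c$ with $t_c\ge L$, then $T^{*}$ reaches a vertex $u$ with $d(r,u)\le L$, so $d(r,c)\le 2t_c$ and $r$ alone hits $\B(c,2t_c)$; since $\sigma=\Omega(1)$ is allowed we may discard all such clients and assume $t_c<L$. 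Because every tree of weight $O(\log n)L$ lies inside $\B(r,O(\log n)L)$ and distances below $L/n^{2}$ contribute $O(L/n)$ in total, a standard rescaling lets us assume all relevant distances --- and hence, after rounding each $t_c$ up to a power of $2$ (a factor $2$ in $\sigma$), all the radii --- lie in a $\mathrm{poly}(n)$ range, i.e.\ in $O(\log n)$ scales.

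Next I would sample an HST $H$ for $(V,d)$ from the FRT distribution, so that $d\le d_H$, $\mathbb{E}[d_H(u,v)]\le O(\log n)\,d(u,v)$, and every ball $\B(u,\rho)$ sits inside a single level-$s$ cluster of $H$ with probability $\ge 1-O(\rho\log n/2^{s})$. For a client $c$ with $t_c=2^{j}$, choose $s_c=j+\log\log n+O(1)$ so that this probability is $\ge 1/2$ for $\B(c,2t_c)$, let $a_c$ be the level-$s_c$ ancestor in $H$ of a vertex nearest $c$, and let $K_c$ be the set of leaves below $a_c$; then $K_c$ has $d$-diameter $O(2^{s_c})=O(t_c\log n)$ and every vertex of $K_c$ lies within $O(t_c\log n)$ of $c$. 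Two facts drive the reduction: (i) a root-containing subtree of $H$ that contains some leaf of $K_c$ has a vertex within $O(t_c\log n)$ of $c$; and (ii) if $T^{*}$ collects $c$, then with probability $\ge 1/2$ the vertex it uses for $c$ lies in $K_c$, so the image of $T^{*}$ in $H$ (its vertices together with the $H$-paths joining them, of expected weight $O(\log n)L$) reaches $K_c$. Hanging off each node $a_c$ a dummy leaf of profit $\theta_c$ by an edge of weight $\Theta(2^{s_c})$ --- enough to afterwards descend from $a_c$ to a genuine vertex of $K_c$ --- turns the problem into a budgeted rooted Steiner tree on the tree $H$: pick a root-containing subtree of weight $\le B:=c_0\log n\cdot L$ maximizing the profit of its dummy leaves.

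That tree problem is a tree-structured knapsack, hence admits an FPTAS (a $2$-approximation suffices). Running it and combining with the two facts above: by Markov, the image of $T^{*}$ has weight $\le B$ with constant probability while collecting, in expectation, profit $\ge\tfrac12\opt$, so $H$ carries a feasible subtree of profit $\Omega(\opt)$, and (repeating the sampling a few times and keeping the best, or derandomizing FRT) the subroutine returns a subtree $T_H$ of weight $\le B$ and profit $\ge\tfrac12\opt$. Now delete the dummy edges and instead spend the weight they reserved to descend from each used $a_c$ to an actual vertex of $K_c$ (the weight stays $\le B$, even when the descents overlap), and replace $T_H$ by a tree in $(V,d)$ on its vertex set, of no greater weight since $d\le d_H$. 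The result is a Steiner tree of weight $O(\log n)L$ that, for every collected client $c$, contains a vertex within $O(t_c\log n)$ of $c$, and collects profit $\ge\tfrac12\opt$ --- the desired $(O(\log n),O(\log n),2)$-approximation. Plugging it into \cref{general-thm} gives the $O(\log n)$-approximation of \cref{const-theorem}.

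The crux --- and why the statement is non-trivial --- is keeping $\omega$ constant. A direct Garg--Konjevod--Ravi rounding of a group-Steiner relaxation with the groups $\B(c,t_c)$ hits each group only with probability $\Omega(1/\mathrm{height})$, which forces $\omega=\Omega(\log n)$; and by \cref{npcst-hardness} one cannot do essentially better for the variant that must collect \emph{every} client. Collapsing each neighborhood to the leaves below one HST node is exactly what circumvents this: it spends the full $O(\log n)$ both on the radius ($\sigma$) and on the tree weight ($\phi$), but what is left is a problem on a tree whose LP has constant integrality gap, so no part of the hardness is charged to the profit. The one piece needing care is making the third step go through \emph{simultaneously} for a constant fraction of the profit that $T^{*}$ collects, i.e.\ controlling the FRT padding failures jointly with the Markov bound on the weight blow-up; everything else is standard.
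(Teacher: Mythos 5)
Your route is genuinely different from the paper's. The paper also starts from the FRT embedding, but it never tries to keep each neighborhood inside one cluster: it replaces the hard radius constraints by a single aggregated ``service cost'' budget $\sum_{c\in C'}\theta_c\, d_T(c,H)/t_c$, solves the resulting bicriteria tree problem (TSCST) exactly on each tree by a scaled dynamic program, guesses the service budget in powers of two, and then uses Markov twice --- once to find one tree of the distribution in which both the length and the service cost of $T^*$ blow up by only $O(\log n)$, and once at the end to turn the aggregate service cost back into a per-client radius violation of $O(\log n)$ for at least half the profit (this is exactly where $\omega=2$ comes from). Your proposal instead uses the per-client padding property of FRT to collapse each neighborhood to a single HST node and reduces to a budgeted prize-collecting tree problem with profits on dummy leaves. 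That idea can be made to work, and it is an attractive alternative, but as written it has a concrete gap.

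The gap is in the feasibility (budget) side of your reduction. You argue ``the image of $T^{*}$ has weight $\le B$ with constant probability while collecting, in expectation, profit $\ge\frac12\opt$, so $H$ carries a feasible subtree of profit $\Omega(\opt)$'' --- but in your reduced instance a subtree collects the profit of a client only if it also contains that client's dummy edge of weight $\Theta(2^{s_c})=\Theta(t_c\log n)$, and you attach one such edge \emph{per client}. The image of $T^{*}$ together with these dummy edges need not fit in $B=O(\log n)\,L$: the extra cost is $\Theta(\log n)\sum_{c\in C^{*}}t_c$, which is not bounded by $O(\log n)\,L$. Concretely, take $m$ clients of radius $t_c\approx L/2$ whose balls all contain $r$; then $T^{*}=\{r\}$ has cost $0$ and collects profit $m\theta$, these clients (when padded) all attach at nodes $a_c$ with $2^{s_c}=\Theta(L\log n)$, and any subtree of weight $\le c_0\log n\cdot L$ can afford only $O(1)$ dummy edges, so the optimum of your reduced instance is $O(\theta)$, not $\Omega(m\theta)$, and the algorithm returns essentially nothing. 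The fix is to charge the descent cost once per \emph{attachment node}, not once per client: hang a single dummy leaf at each used node $a$ carrying the summed profit of all clients mapped to $a$ (or use zero-weight dummy leaves and pay for the final descents by charging each one to the distinct parent/child edge of $a$ already present in $T_H$); then the extra weight is $O(1)$ times the weight of the image of $T^{*}$ by a per-node charging argument, and your analysis goes through. Two smaller points: internal HST nodes are not vertices of $V$, so ``a tree in $(V,d)$ on its vertex set'' needs the standard Euler-tour shortcutting over the leaves (a constant factor); and the joint padding-plus-Markov step, which you correctly flag, yields $\omega=O(1)$ rather than literally $2$ unless you tune the padding level and the Markov thresholds --- this is harmless for \cref{general-thm} and \cref{const-theorem}, whereas the paper gets $\omega=2$ directly from Markov on the aggregate service cost.
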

Remember that in the NPCST problem we have to find a tree $T$ to maximize the number of clients whose $\B$-ball contain a node of $T$. To prove the above theorem we embed the graph into a distribution of tree metrics\cite{FRT04}. Note that the $\B$-balls for the clients are not preserved in the tree metrics. We define a new problem on the tree metrics as follows. Given a budget $L'$ we want to find a tree $T$ with weight at most $L$ to maximize the size of the set of served clients $C'$ where $T$ serves set $C'$ if the sum of distances of the clients in $C'$ to $T$ is at most $L'$. We solve the new problem efficiently in the tree metrics with dynamic programming. Finally we show by violating the radii of the $\B$-balls of the clients by a constant factor, $T$ actually serves a good fraction of the clients in $C'$  (by Markov's Inequality)  through hitting their $\B$-balls.

An anonymous referee pointed out that we can also obtain an $\left(O(1), O(1), O(\log n) \right)$-approximation algorithm for the NPCST problem in general metrics using the ideas in \cite{GKKRY01} and \cite{SK04}. This algorithm is interesting since as opposed to the algorithm of \cref{npcst-thm} there is no violation on the cost of the tree and the neighborhoods' radii but it collects an $O(\log n)$ fraction of the optimal profit. Note that by plugging this algorithm into \cref{general-thm} we get the same result for the NPCST problem as in \cref{const-theorem}. The description of the algorithm and an outline of the proof given by the referee is brought in \cref{ref-alg}. 

Motivated from the application of \smr{} in Amazon Locker Delivery \cite{amazon} and USPS gopost \cite{gopost} which occurs in the Euclidean space, we prove the following theorem to get a constant-factor approximation algorithm for \smr{} in the Euclidean space. The neighborhood problems are also especially studied in the geometric settings. The usual assumption in the neighborhood TSP problems  for getting a constant factor approximation (see \cref{sec-intro-ntsp}) is to assume the radius of the biggest neighborhood is at most a constant factor larger than the smallest one. We plug the following theorem in \cref{general-thm} to get a constant factor approximation algorithm for \smr{}. Here the radius constraints means the maximum speed of the clients is at most a constant factor larger than the minimum speed which is an acceptable constraint for the package delivery problem motivating \smr{}.

\begin{theorem}
\label{npcst-2d}
There is an $(O(P), O(1), O(1))$-approximation algorithm for the NPCST problem in the Euclidean space where the radius of the greatest neighborhood is at most $P$ times larger than the radius of the smallest neighborhood.
\end{theorem}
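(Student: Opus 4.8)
The plan is to reduce the Euclidean NPCST instance to a bounded number of "uniform-radius" subinstances, and then to handle each uniform-radius subinstance by a quadtree/shifted-grid argument combined with an existing prize-collecting tree primitive. First I would bucket the clients by the magnitude of their neighborhood radius: since the largest radius $t_{\max}$ is at most $P$ times the smallest radius $t_{\min}$, the interval $[t_{\min}, t_{\max}]$ is covered by $O(\log P)$ multiplicative buckets of ratio $2$. Within a single bucket all radii are within a factor $2$ of a common value $\rho$, so if we are willing to pay an $O(1)$ factor in $\sigma$ we may assume every neighborhood in the bucket is exactly the ball of radius $\rho$ around its client. Running the algorithm on each bucket separately and returning the best tree loses only an $O(\log P)$ factor in the profit-approximation $\omega$; since the statement allows $\sigma = O(P)$ and I will in fact get $\sigma = O(1)$ per bucket, this is consistent with the claimed $(O(P),O(1),O(1))$ — in fact the $O(P)$ slack can be spent instead to merge all buckets at once by scaling every neighborhood up to radius $t_{\max}$, which is at most $P \cdot t_{\min}$, thereby collapsing to a genuinely uniform-radius instance with a single global radius and no loss in $\omega$. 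I would present the cleaner of these two reductions (the second one) as the main reduction.

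So it remains to solve uniform-radius Euclidean NPCST with $\sigma = O(1)$, $\phi = O(1)$, $\omega = O(1)$. Here the key geometric fact is that, after scaling so the common neighborhood radius is $1$, two clients whose centers are within distance $O(1)$ have overlapping (dilated) neighborhoods, and any single node of the tree within distance $\sigma$ of a center "hits" that client. I would impose a randomly shifted grid of side length $\Theta(1)$ on the plane; inside any one grid cell all clients are mutually within constant distance, so a \emph{single extra node} placed anywhere in a nonempty cell hits every client assigned to that cell. Contract each nonempty cell to a representative point carrying the total profit of its clients. This turns the problem into an ordinary (point) Prize-Collecting Steiner Tree / budgeted tree problem on $O(n)$ representative points with a length budget $L$: find a tree of length $\le \phi L$ collecting a constant fraction of the optimal profit. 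For that I would invoke a known constant-factor bicriteria primitive for budgeted prize-collecting Steiner tree (e.g.\ via the $k$-MST / quota-tree machinery, as is already used for orienteering-type problems cited in \cref{intro-orienteering}), which gives a tree of length $O(L)$ collecting $\Omega(\opt)$ profit. Adding, for each cell the tree touches, one short local edge of length $O(1)$ to reach the nearest client center costs at most an additive $O(1)\cdot(\text{number of touched cells})$; a charging argument (each touched cell already contributes $\Omega(1)$ to the tree's length, or can be reached by re-routing) keeps the total length $O(L)$, so $\phi = O(1)$, while each touched client is hit within $\sigma = O(1)$ times its radius.

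The main obstacle, as I see it, is the comparison with the \emph{optimal} tree: $\opt$ is defined with no violation, so I must argue that the constant-factor-dilated, cell-aggregated instance still has a solution of profit $\Omega(\opt)$ and length $O(L)$. The clean way is the standard randomized-shift / hierarchical-decomposition argument: take the optimal no-violation tree $T^\ast$, overlay the random grid, and observe that $T^\ast$ restricted to the occupied cells, together with one representative per occupied cell, forms a feasible solution for the aggregated instance; its expected length is $O(\mathrm{cost}(T^\ast)) = O(L)$ (the expected overhead per cell is $O(1)$ and can be charged to the portion of $T^\ast$ inside that cell, with the randomized shift ensuring the charging is bounded in expectation), and it collects the full profit of every client that $T^\ast$ served. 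Fixing the shift that realizes the expectation, discarding (Markov) the $O(1)$-fraction length overflow, and then running the point-PCST primitive yields the claimed bounds. I would also need to double-check that a uniform-radius \emph{disk}-neighborhood is "fat" in the sense used in \cref{sec-intro-ntsp}, so that the constant in $\sigma$ does not secretly depend on $P$; since all disks here have literally equal radius this is immediate. The remaining steps — assembling the per-bucket trees (if the first reduction is used), and verifying the $2$ in the $\omega$ slot can be absorbed — are routine.
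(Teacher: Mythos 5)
Your high-level plan is the same as the paper's: discretize the plane at scale $\Theta(t_{\max})$, move each client's profit onto a tile representative, call a constant-factor budgeted prize-collecting Steiner tree routine (the paper uses the $(4+\epsilon)$-approximation of \cite{CKP08} with budget $5L$), and argue feasibility by projecting the optimal tree $T^*$ onto tiles; the $O(P)$ in $\sigma$ arises in both cases because the tile size is keyed to $t_{\max}\le P\,t_c$. However, your two key analysis steps have genuine gaps. First, the profit accounting: you assign each client's profit only to the cell containing its center, and then claim that $T^*$ restricted to the cells it touches ``collects the full profit of every client that $T^*$ served.'' That is false as stated -- $T^*$ may hit a client's ball while passing only through a \emph{neighboring} cell, never the client's own cell, so the projected solution misses that client's profit. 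The paper avoids this by assigning each client to \emph{every} occupied tile its ball intersects, and then controlling the resulting multiplicity geometrically (a ball of radius at most $M/2$ meets at most $3$ hexagons of side $M$, and at most one tile of each color), which is exactly where the $7$-coloring and the factor $12+\epsilon$ in $\omega$ come from. Your version needs an analogous device (e.g., assign profit to all $O(1)$ cells the ball meets and divide by the multiplicity, or augment the projected solution with all cells within distance $t_{\max}$ of $T^*$).

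Second, the length charging: the claim that ``each touched cell already contributes $\Omega(1)$ to the tree's length'' is false per cell (a tree can clip a cell corner with negligible length), and the random shift does not repair this -- the number of cells touched is essentially shift-independent. What is true is the aggregate packing bound: a connected tree of length $\ell$ meets $O(\ell/M+1)$ cells of side $\Theta(M)$, so the total reconnection overhead is $O(L)+O(M)$; the additive $O(M)$ term is not $O(L)$ when $L\ll t_{\max}$, and that regime needs a separate (easy) case, e.g., returning the root alone, which already serves every client an optimal tree of length $L\le t_{\max}$ can serve, within a constant dilation. The paper sidesteps this entirely with the coloring: tiles of the same color are at distance at least $M$ apart, so every inter-tile edge of the projected tree is charged to a ``creator path'' of length at least $\Omega(M)$ in $T^*$, giving a clean multiplicative blow-up of $5$ per color and hence $\phi=35$, with no additive term and no case analysis. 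Your argument is repairable along these lines, but as written both the feasibility of the aggregated solution and the $\phi=O(1)$ bound are not established. (Also note your first reduction, radius bucketing, would give $\omega=O(\log P)$, which does not meet the $\omega=O(1)$ requirement; your second reduction -- inflating all radii to $t_{\max}$ -- is the right one and matches what the paper implicitly does.)
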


Our last result is a constant-factor approximation algorithm for Max-MR.
\begin{theorem}
\label{thm:max-mr}
There is a constant-factor approximation algorithm for the Max-MR problem when the repairmen have the same speed.
\end{theorem}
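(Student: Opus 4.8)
The plan is to guess the optimal makespan, reduce the problem to a bicriteria neighborhood covering problem, and attack that covering problem with the NPCST machinery already built for the sum objective.

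First I would guess the optimum makespan $T^\ast$: up to a factor $1+\epsilon$ it suffices to try the geometrically spaced values between the smallest and largest finite ``event time'' $\max\bigl(d(s'_j,u)/v'_j,\ d(u,w)/v_i\bigr)$, so fix one guess $T\ge T^\ast$. Scaling time so that the common repairman speed equals $1$, a repairman path of length at most $\alpha T$ is traversed within time $\alpha T$, and client $c_j$ reaches exactly the ball $\B(s'_j,\,v'_j T)$ within time $T$. This gives the following bookkeeping: if we can find walks $W_1,\dots,W_k$ rooted at $s_1,\dots,s_k$, each of length at most $\alpha T$, such that every inflated ball $\B(s'_j,\beta v'_j T)$ is hit by some $W_i$, then letting repairman $i$ traverse $W_i$ (idling when necessary) and client $c_j$ walk to the node of $W_i$ inside its ball gives a feasible Max-MR solution of makespan at most $\max(\alpha,\beta)\cdot T$. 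Since the optimal Max-MR solution certifies exactly such walks for $T=T^\ast$ with $\alpha=\beta=1$, it is enough to give a constant bicriteria algorithm for this \emph{min--max neighborhood $k$-walk cover} problem (uniform budget, depot roots): produce walks of length $O(T)$ covering all balls with $O(1)$-inflated radii.

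For the covering problem I would write the natural LP with a variable $x_{i,P}$ for each repairman $i$ and each walk $P$ from $s_i$ of length at most $T$, constraints $\sum_P x_{i,P}\le 1$ for every $i$ and $\sum_{i,P:\,P\cap\B(s'_j,v'_jT)\neq\emptyset}x_{i,P}\ge 1$ for every client $j$; this LP is feasible whenever $T\ge T^\ast$. Its dual separation problem is precisely an NPCST instance, so \cref{npcst-thm} (or the $(O(1),O(1),O(\log n))$ variant) serves as the oracle, and via a round-or-separate/Lagrangian scheme one gets an approximately feasible fractional cover by walks of length $O(T)$ against $O(1)$-inflated balls. To round it I would re-use the engine from the proof of \cref{general-thm}: geometric time-stamps together with a randomized — then derandomized, by the greedy max-coverage induction on $|R'|$ — selection of one walk per repairman. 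In the makespan regime there is essentially a single active time-stamp of order $T$, so one application of the engine already yields $k$ rooted walks of length $O(T)$ covering a constant fraction of the clients' (inflated) balls.

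The crux is promoting ``a constant fraction covered per round'' to ``all clients covered'' without an $O(\log m)$ blow-up of the makespan; the obstacle is that the neighborhood/group nature of the cover is, in raw form, as hard as Group Steiner Tree (cf.\ \cref{npcst-hardness}). I would get around this using the coupling, special to this objective, between the makespan guess and the neighborhood radii: replacing the guess $T$ by $2T$ multiplies \emph{every} ball radius by $2$ while only doubling the budget, so a client missed by the current walks but within $O(v'_j T)$ of them becomes servable by appending a short local detour. Concretely, after one round the residual meeting nodes all lie within $T$ of some depot; I would span them by Steiner trees rooted at the depots, split and rebalance these into pieces of weight $O(T)$ in the style of the min--max tree-cover argument of Even--Garg--K\"onemann--Ravi--Sinha, Euler-tour each piece, and append it to the appropriate walk, concluding that $O(1)$ rounds suffice. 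This produces walks of length $O(T)$ hitting all clients' $O(1)$-inflated balls, so by the bookkeeping above and the search over $T$ we obtain a constant-factor approximation for Max-MR. The equal-speed assumption enters because a single budget $T$ then governs all $k$ walks simultaneously, which is what makes both the one-shot rounding and the balancing step go through — non-uniform repairman speeds are exactly where this plan would break.
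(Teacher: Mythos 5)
Your overall skeleton (guess the makespan $T$, reduce to covering the inflated balls $\B(s'_j, v'_j T)$ by depot-rooted walks of length $O(T)$, and search over $T$) matches the paper's, but the engine you plug in cannot deliver constant factors, and the patch you propose to fix that is missing its key ingredient. Concretely, the only NPCST oracles available are the $(O(\log n),O(\log n),2)$ one of \cref{npcst-thm} and the $(O(1),O(1),O(\log n))$ variant. Run through the LP-plus-rounding machinery of \cref{general-thm}, the first yields walks of length $O(\log n)\cdot T$ against $O(\log n)$-inflated balls, and the second covers only an $\Omega(1/\log n)$ LP-fraction per selected walk, so getting even a constant fraction covered forces $\Theta(\log n)$ path selections per repairman and again walks of length $O(\log n)\cdot T$. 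Hence your claim that ``one application of the engine already yields $k$ rooted walks of length $O(T)$ covering a constant fraction of the $O(1)$-inflated balls'' is not supported: in the makespan regime every violation multiplies the single budget $T$, so this route is stuck at $O(\log n)$, not $O(1)$.

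Your fallback (double $T$, append local detours, cover the residual by depot-rooted Steiner trees split into $O(T)$ pieces in the style of Even et al.) gestures at the right object but has no argument behind its crucial step: the ``residual meeting nodes'' are the unknown nodes where an optimal solution serves the residual clients, so ``span them by Steiner trees'' is not an algorithm, and covering the residual clients is still a neighborhood-covering problem --- which, as you yourself note via \cref{npcst-hardness}, is group-Steiner-hard in raw form. What actually defuses the hardness, and is the heart of the paper's proof, is that constant inflation of the radii lets you \emph{cluster} the balls: pick leaders greedily by smallest radius $v'_cT$, make $c'$ a slave of a leader $c$ when $\B(c,v'_cT)$ meets $\B(c',9v'_{c'}T)$; leader balls are then pairwise separated by at least $8$ times the larger radius, so contracting them to super-nodes, running the rooted min--max $k$-tree cover $4$-approximation of Even et al.\ with the depots as roots, doubling and Euler-touring, and uncontracting costs only a constant factor (each patch edge of length at most twice a radius is charged to a connecting segment of length at least $8$ times that radius), giving walks of length at most $10vT$ that hit every client's $10$-inflated ball. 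Without this clustering step your ``short local detours'' can be charged to nothing (a single walk may need $\Omega(m)$ detours of length $\Theta(v'_jT)$), and there is no bound on the number of rounds; with it, the LP/NPCST machinery in your first part is unnecessary --- the whole proof is the clustering plus one call to the tree-cover algorithm, followed by the binary search over $T$.
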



\section{The Sum Movement Repairmen Problem}
\label{sum-sec}
\subsection{Preliminaries}
\label{sec:prel}
First we formalize the conditions that have to be met in order to serve the clients. If a client collocates with a repairman at a node $u$ of the metric space at time $t$ we say it is served {\em perfectly} with latency $t$. On the other hand, if a repairman visits a node $u$ at time $t$ and a client visits $u$ at time $t' \geq t$ we say the client is served {\em indirectly} with latency $t'$.

The following lemma shows serving indirectly instead of perfectly does not change the total latency by more than a constant factor.
\begin{lemma}
\label{serving-method}
Suppose a solution ($sol$) to \smr{} has sum of latencies $l$ where all the clients are served indirectly, then $sol$ can be transformed to a solution ($sol'$) in which all the clients are served perfectly with sum of latencies at most $(3 + \epsilon) \cdot l$ where $\epsilon > 0$ is a fixed constant.
\end{lemma}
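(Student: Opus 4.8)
The plan is to convert an indirect solution into a perfect one by having each repairman "wait" appropriately so that the client can actually meet the repairman at a node, rather than merely arrive at a node the repairman has already left. The key difficulty is that in the indirect solution a client $c_j$ with served pair $(u_j, t_j)$ may arrive at $u_j$ long after the repairman $r_i$ passed through it; to make the meeting perfect we either slow down (pause) the repairman at $u_j$ until the client arrives, or we reroute the client. Simply pausing every repairman at every served node is too expensive — it can blow up the latency of all later clients on that repairman's path by much more than a constant factor. So the approach I would take is the standard geometric bucketing / doubling trick used for latency problems: group the clients by the dyadic interval $[2^a, 2^{a+1})$ containing their indirect latency $t_j$, and build the perfect solution in phases indexed by $a$.

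The key steps, in order, are: (1) Fix a repairman $r_i$ and the set of clients it serves indirectly, together with their arrival times along $p_i$. (2) For each dyadic scale $2^a$, consider the prefix of $p_i$ traversed by time $2^a$ (at speed $v_i$); every client served indirectly by $r_i$ with $t_j < 2^a$ has its node $u_j$ on this prefix. Re-traverse this prefix but now, at each node where a client is waiting, have the repairman pause until the client arrives — however, bound the total added delay in phase $a$ by $O(2^a)$, using the fact that every such client could itself have reached $u_j$ by time $t_j < 2^a$, so the client was already "close" in the time sense. The repairman's re-traversal of the prefix, plus the waiting, costs at most a constant times $2^a$ in total because both the path length up to time $2^a$ and all the relevant client deadlines are $O(2^a)$. (3) Concatenate these phases across $a = 0, 1, 2, \ldots$; since the phase costs form a geometric series, the time at which phase $a$ begins is $O(2^a)$, and a client with indirect latency in $[2^a, 2^{a+1})$ is now served perfectly by time $O(2^a) = O(t_j)$. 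Summing over all clients gives total latency $O(l)$. The extra $\epsilon$ and the precise constant $3$ come from tuning the geometric ratio (using base $(1+\epsilon')$ blocks rather than base $2$) and from the fact that waiting for a client costs at most the client's own deadline.

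The step I expect to be the main obstacle is (2): controlling the total waiting time incurred by one repairman within one phase. Naively, many clients could be waiting at many different nodes along the prefix, and if each forces the repairman to idle for nearly $2^a$ time units the total could be $\Theta(m \cdot 2^a)$, which is far too much. The fix is to observe that we do not need the repairman to wait at a node until the client physically walks there along the metric; instead, since the client has speed $v'_j$ and $d(s'_j, u_j) \le v'_j t_j$, the client can be routed to meet the repairman at whatever time the repairman is actually at $u_j$ in the re-traversal, provided that time is $\ge t_j$ — and in the re-traversal of the time-$2^a$ prefix the repairman reaches every node of the prefix within $O(2^a)$, which is $\ge t_j$ for all phase-$a$ clients up to the constant. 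So in fact no per-client waiting is needed beyond ensuring the repairman's re-traversal is slow enough (takes time $\Theta(2^a)$, not faster) that every phase-$a$ client's deadline has passed; this removes the dependence on $m$ entirely. Making this rerouting-of-the-client argument precise, and checking that a client is never asked to be in two places at once across consecutive phases, is the delicate part; everything else is the routine geometric-series bookkeeping.
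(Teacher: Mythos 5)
Your plan is essentially the paper's own proof: there, each client moves exactly as in $sol$ and simply waits at its assigned node $u_j$, while each repairman repeatedly re-traverses geometrically growing prefixes of its own path $p_i$ (round $x$ goes $\alpha^x$ time units out along $p_i$ and returns to the depot), and the latency of a client with indirect latency $q$ is charged to the total duration of the rounds preceding the round that picks it up, plus one final leg of length at most $q$. Your ``phases'' are exactly these rounds, and your resolution of the waiting issue is also the paper's: no repairman ever idles for a client; the client waits at $u_j$ and is collected at the repairman's next pass, so there is no dependence on $m$.

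The one place your sketch points the wrong way is the tuning of the ratio. You claim the constant $3$ is reached by using blocks of ratio $(1+\epsilon')$ rather than $2$; but in this scheme the dominant term is the total duration of all rounds preceding the serving round, which for ratio $\alpha$ is roughly $\frac{2\alpha}{\alpha-1}\,q$. Shrinking the ratio toward $1$ inflates this to $\Theta(q/\epsilon')$, so your tuning would only yield an $O(1/\epsilon')$ factor, while base $2$ gives a constant around $5$ (enough for Theorem~\ref{const-theorem}, but not the constant claimed in the lemma). The paper tunes in the opposite direction, setting $\alpha = 1+\frac{2}{\epsilon}$ large, so that the preceding rounds cost at most $\bigl(2+\frac{2}{\alpha-1}\bigr)q=(2+\epsilon)q$ and the final leg to $u_j$ costs at most $q$, for a total of $(3+\epsilon)q$. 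Finally, the delicate point you correctly isolate --- guaranteeing that the pass over $u_j$ in the round meant to serve the client occurs no earlier than the client's arrival time $q$ --- is precisely the step that requires the most care when the ratio is large, since a round long enough to reach $u_j$ may sweep past it before time $q$ and not return until considerably later; any complete write-up (yours or the paper's) has to account for this case explicitly rather than assume the outbound pass always lands after $q$.
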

\begin{proof}
Remember \cref{def-problem}, solution $sol$ assigns a path $p_i$ to each repairman $r_i$ and a node $u_j$ to each client $c_j$ such that $c_j$ can go to $u_j$ by time $t_j$ while a repairman has visited $u_j$ before or at time $t_j$. Suppose
repairman $r_i$ can travel $p_i$ in $t_i$ units of time considering its speed. In other words, the length of $p_i$ is at most $v_i \cdot t_i$ where $v_i$ is the speed of $r_i$. When $r_i$ serves the clients indirectly it is better for him to travel $p_i$ as fast as possible and does not wait for the clients to arrive since the clients can arrive in the nodes of $p_i$ later and are still served indirectly.

We design the movements in $sol'$ as follows. The movement for the clients are the same as $sol$, each client $c_j$ is assigned to same node $u_j$ as in $sol$ and go to the assigned node by time $t_j$. Each repairman $r_i$ starts from depot $s_i$ (its starting node) and goes one unit of time along path $p_i$ with its maximum speed $v_i$ and comes back to $s_i$, we refer to this as round $0$, then it goes $\alpha$ units of time and comes back to $s_i$ (round $1$) where $\alpha = 1 + \frac{2}{\epsilon}$. In general at each round $x$ it travels $\alpha^x$ units of time along $p_i$ and comes back. If at round $y$ the given time $\alpha^y$ is enough to travel $p_i$ completely, repairman $r_i$ travels $p_i$ completely and stays at the last node to finish time $\alpha^y$ and then comes back to $s_i$\footnote{In fact, if at round $y$ repairman $r_i$ comes back to $s_i$ as soon as it finishes traveling $p_i$ results in a better total latency in some cases. We avoid this because it is harder to analyze and explain. Moreover, in the worst case the total latency remains the same.}.

Now we prove that if a client is served indirectly with latency $q$ in $sol$ it will be served perfectly with latency $(3 + \epsilon)q$ in $sol'$. Suppose an arbitrary client $c_j$ is served indirectly with $r_i$ at time $q$ in a node $u_j$ of path  $p_i$. Note that $q$ either represents the time when both $r_i$ and $c_j$ arrive at $u_j$ or the time when $c_j$ arrives at $u_j$ but $r_i$ is already passed $u_j$. In $sol'$, when $c_j$ reaches $u_j$ it waits for repairman $r_i$ to visit $u_j$ after or at time $q$ during its back and forth travels.

Note that each round $x$ takes $2 \alpha^{x}$ units of time. Repairman $r_i$ can serve $c_j$ perfectly the first time it visits $u_j$ after time $q$. The first $\lfloor \log_{\alpha} q \rfloor$ rounds take $\sum_{x = 0}^{\lfloor \log_{\alpha} q \rfloor} 2 \alpha^{x}$ units of time which is equal to $2(\alpha^{\lfloor \log_{\alpha} q \rfloor + 1} - 1)/(\alpha - 1)$ and hence greater than $q$. Therefore $r_i$ serves $c_j$ perfectly at most at round $\lfloor \log_{\alpha} q \rfloor + 1$. At round $\lfloor \log_{\alpha} q \rfloor + 1$, repairman $r_i$ needs at most another $q$ units of time to reach to $u_j$. Therefore when $r_i$ travels at most $\sum_{x = 0}^{\lfloor \log_{\alpha} q \rfloor} 2 \alpha^{x} + q$ units of time, it visits $u_j$ and serves $c_j$ perfectly. Thus, the latency of client $c_j$ getting served perfectly is at most $\sum_{x = 0}^{\lfloor \log_{\alpha} q \rfloor} 2 \alpha^{x} + q \leq \left(3 + \frac{2}{\alpha - 1}\right) q$ which is equal to $(3 + \epsilon) \cdot q$ by replacing back $\alpha = 1 + \frac{2}{\epsilon}$. The lemma follows by applying the same argument to all the clients.
\qed
\end{proof}

We focus on finding a solution to the \smr{} problem where the clients are served indirectly and then transform it to a solution which serves the clients perfectly using \cref{serving-method}. Therefore, from now on whenever we use serving we mean serving indirectly.

We start with some important definitions.
\begin{definition}
Let neighborhood $\B(c, t_c)$ denote the set of all nodes whose distances from $s'_c$, the starting node of client $c$, are at most $t_c$.
\end{definition}

\begin{definition}
Let $\Po(r, t_r)$ denote the set of all non-simple paths (\ie they can visit nodes or edges multiple times) with length at most $t_r$ starting from $s_r$, the starting depot of $r$.
\end{definition}

Using the above two definition we can formalize the notion of serving as follows.
\begin{definition}
We call a repairman $r$ serves client $c$ or client $c$ getting served by $r$ at time $t$ if the path selected for $r$ hits neighborhood $\B(c, v'_c \cdot t)$ where $v'_c$ is the speed of client $c$.
\end{definition}

Let $mv$ be the maximum speed of all the clients and repairmen. We multiply all the edges of the graph by $2 \cdot mv$ which scales all the service times by factor $2 \cdot mv$. Now we can assume that the minimum service time a client can see is at least $1$.
Let $T$ be the largest service time a client can see, here we upper bound $T$ to be $\frac{2 \cdot MST(G)}{\min_i{v_i}}$ which is the units of time required to travel all the edges by the slowest repairman and hence serving all the clients.
We use set $Q = \{ 1, 2, \ldots, 2^i, \ldots, 2^{\left\lceil \log T \right\rceil + \left\lceil \log m \right\rceil / 2  + 1}\}$ to index geometrically increasing time-stamps. The greatest element of $Q$ is chosen such that all the clients are guaranteed to be served by our algorithm after this time-stamp. Note that we have $\left\lceil \log T \right\rceil + \left\lceil \log m \right\rceil / 2  + 1$ elements in $Q$ and hence its size is polynomially bounded by the size of input.

\subsection{LP formulation for \smr{}}
\label{sec:lp}
In this section we introduce an LP formulation for the \smr{} problem and show how to solve this LP approximately. We use the following LP for \smr{} inspired by the ideas from LPs introduced by Chakrabarty and Swamy \cite{CS10}.
\begin{align}
&\min &\sum_{c \in C} \sum_{t \in Q} &t \cdot y_{c,t}&\tag{{\bf PLP}} \label{plp}\\
&s.t.&\sum_{p \in \Po(r,v_r \cdot t)} x_{r,p,t} &\leq 1 &\forall t \in Q, \forall r \in R \label{plp1} \\
&&\sum_{r \in R}\ \sum_{p \in \Po(r, v_r \cdot t): p \cap \B(c, v'_c \cdot t) \neq \emptyset} x_{r,p,t} &\geq \sum_{t' \leq t} y_{c,t'} &\forall c \in C, \forall t \in Q  \label{plp2} \\
&&\sum_{t \in Q} y_{c, t} &\geq 1 &\forall c \in C \label{plp3}\\
&&x,y &\geq 0
\end{align}

The variable $x_{r,p,t}$ is the indicator variable showing whether repairman $r$ travels path $p\in \Po(r, v_r \cdot t)$ completely by time $t$. Note that if $p$ is in set $\Po(r, v_r \cdot t)$, from the definition of $\Po(r, v_r \cdot t)$, $r$ can complete traveling $p$ within time $t$. Variable $y_{c,t}$ is the indicator variable showing if client $c$ is served at time $t$.

Constraints (\ref{plp3}) guarantee that every client gets served. Constraints (\ref{plp1}) require each repairman $r$ to travel at most one path by time $t$. The amount $\sum_{t' \leq t} y_{c,t'}$ shows the fraction of service, client $c$ demands until time $t$ and the amount $\sum_{r \in R}\ \sum_{p \in \Po(r, v_r \cdot t): p \cap \B(c,v'_c \cdot t) \neq \emptyset} x_{r,p,t}$ shows the fraction of service $c$ gets from the repairmen until time $t$. Constraints (\ref{plp2}) guarantee that the total service from all the repairmen is at least as large as the demand from the client $c$. Note that Constraints (\ref{plp2}) just require the total demand by client $c$ should be served by the repairmen at anytime before time $t$ which is the case for serving indirectly.

Note that we only consider times that are in the set $Q$ but in a solution of \smr{} the clients may be served at any time.
\begin{lemma}
\label{lem:2approx}
The optimal value of \ref{plp} is at most twice the optimal solution of \smr{}.
\end{lemma}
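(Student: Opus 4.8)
The plan is to take an optimal integral solution to \smr{} and construct a feasible solution $(x,y)$ to \ref{plp} whose objective is at most twice the \smr{} optimum. Let the optimal \smr{} solution assign each client $c$ a node $u_c$ reached at time $\tau_c$ (served indirectly by some repairman $r(c)$), and let $p^*_r$ be the path traversed by repairman $r$. The only obstacle is that the times $\tau_c$ need not lie in the geometric grid $Q$; the factor of $2$ comes exactly from rounding each $\tau_c$ up to the next power of two.

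First I would define, for each client $c$, the rounded time $t_c := 2^{\lceil \log \tau_c \rceil} \in Q$ (using that after scaling all service times are at least $1$, and that $T$ as defined bounds the largest needed service time, so $t_c \in Q$), and set $y_{c,t_c} = 1$ and $y_{c,t} = 0$ for $t \neq t_c$. This immediately satisfies \cref{plp3}. Next, for each repairman $r$ and each $t \in Q$, I would set $x_{r, p, t} = 1$ for the single path $p$ obtained by truncating $p^*_r$ to its prefix of length at most $v_r \cdot t$ (and $0$ for all other paths); since $r$ with speed $v_r$ has traversed exactly that prefix by time $t$, this prefix lies in $\Po(r, v_r \cdot t)$, and at most one path per $(r,t)$ is chosen, so \cref{plp1} holds. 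For \cref{plp2}, fix $c$ and $t \in Q$ with $t \ge t_c$; then $t \ge t_c \ge \tau_c$, so by time $t$ the repairman $r(c)$ has already reached $u_c$, meaning $u_c$ lies on the length-$(v_{r(c)} t)$ prefix of $p^*_{r(c)}$, and $d(s'_c, u_c) \le v'_c \tau_c \le v'_c t$, so $u_c \in \B(c, v'_c t)$. Hence that chosen path intersects $\B(c, v'_c t)$ and contributes $1$ to the left side of \cref{plp2}, while the right side is $\sum_{t' \le t} y_{c,t'} = 1$; for $t < t_c$ the right side is $0$ and there is nothing to check.

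Finally, the objective of this solution is $\sum_{c} t_c \cdot y_{c,t_c} = \sum_c t_c = \sum_c 2^{\lceil \log \tau_c \rceil} \le \sum_c 2 \tau_c = 2 \sum_c \tau_c$, which is twice the \smr{} optimum. I expect the main (minor) subtlety to be the bookkeeping that every $t_c$ indeed belongs to $Q$ — i.e. that the largest power of two we ever need does not exceed $2^{\lceil \log T \rceil + \lceil \log m\rceil/2 + 1}$, which follows since $\tau_c \le T$ for every client in any solution serving all clients — and the observation that truncating a repairman's path to a prefix keeps it a valid element of $\Po(r, v_r t)$, which is immediate from the definition of $\Po$ as the set of (non-simple) paths of bounded length from $s_r$. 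No violation of the integrality of the \smr{} solution is used beyond choosing one path per repairman, so the argument is a direct encoding.
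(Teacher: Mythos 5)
Your construction is essentially the paper's own proof: both take an optimal integral \smr{} solution, keep the repairmen's paths, round each client's service time up to the next power of two in $Q$ to set the $y$-variables, and observe that the rounding loses at most a factor of $2$ in the objective. Your version is simply a more carefully spelled-out variant (setting the $x$-variables at every $t \in Q$ via path prefixes so that \cref{plp1,plp2} are verified explicitly), but the idea and the source of the factor $2$ are the same.
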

\begin{proof}
Before proving the lemma, note that we assumed that the smallest element of $Q$ (the smallest latency seen by the clients) is one. Remember it does not change our analysis since the time-stamps in $Q$ grow exponentially which guarantees that the size of $Q$ is polynomial in terms of the inputs.

Intuitively the factor two comes from the fact that we only consider the powers of $2$ for the time-stamps in set $Q$, because if a client must served at time $t$ in an optimal solution in the LP it might wait for the next power of two to get served which can be at most $2t$. More formally, we show every integral solution ($\hat{sol}$) to the \smr{} problem with total latency $\ell^*$ can be transformed to a feasible solution $(\hat{x}, \hat{y})$ to \ref{plp} with the objective value at most $2\ell^*$. If repairman $r_i$ travels path $p_i$ by time $t_i$ in $\hat{sol}$ we set $\hat{x}_{r_i, p_i, t_i} = 1$. If client $c$ is served at time $t$ in $\hat{sol}$, we set $y_{c,t'} = 1$ where $t' = 2^{\left\lceil \log t \right\rceil}$. We set all the other non-set entries of $(\hat{x}, \hat{y})$ to zero. The objective value for $(\hat{x}, \hat{y})$ is at most twice the total latency of $\hat{sol}$ because if a client ($c$) is served at time $t$ in $\hat{sol}$, $c$ contributes at most $t'$ in the objective value for $(\hat{x}, \hat{y})$ where $t' < 2t$ since $2^{\left\lceil \log t \right\rceil} < 2^{\log t +1}  $. Thus the optimal value of \ref{plp} is at most twice the total latency of an optimal solution for \smr{}.
\qed
\end{proof}

In the next subsection we show that we are able to find a solution to the \smr{} problem which has total latency at most $O(\log n)$ times the optimum value of \ref{plp} which along with \cref{lem:2approx} upper bounds the integrality gap of the LP, where $n = |V|$ is the number of nodes in the metric space.
\subsection{Solving \ref{plp} in polynomial time}
\label{solving-plp}
The first difficulty to solve \ref{plp} is that it has exponentially many variables. In order to solve the LP we formulate its dual. The dual LP has exponentially many constraints but polynomially many variables therefore we need a separation oracle for the constraints in order to solve the dual LP in polynomial time.
The dual LP for \ref{plp} is as follows.
\begin{align}
&\max&\sum_{c \in C} \lambda_c &- \sum_{r \in R,t \in Q} \beta_{r,t}&& \tag{\textbf{DLP}} \label{dlp}\\
&s.t.&\sum_{c: p \cap \B(c, v'_c \cdot t) \neq \emptyset} \theta_{c,t} &\leq \beta_{r,t} & \forall r \in R, \forall t \in Q, \forall p \in \Po(r, v_r \cdot t)  \label{pcc} \\
&&\lambda_c \leq t +&  \sum_{t \leq t'} \theta_{c,t'}  &\forall c \in C, \forall t \in Q \label{dlp2}\\
&& \lambda, \beta, \theta &\geq 0 \label{dlp3}
\end{align}
We have exponentially many Constraints (\ref{pcc}), therefore we need a separation oracle for them in order to use Ellipsoid algorithm to solve \ref{dlp}. Given a candidate solution $(\lambda, \beta, \theta)$ for any repairman $r_i \in R$ and time-stamp $t \in Q$ we define Separation Oracle Problem $SOP(r_i, t)$ as follows.
Assume that each client $c$ has profit $\theta_{c,t}$ and  $\B$-ball $\B(c, v'_c \cdot t)$. The objective is to find a path in $\Po(r_i, v_i \cdot t)$ (has maximum length $t \cdot v_i$) which collects the maximum profit where a path collects the profit of any client whose $\B$-ball is hit by the path. If for all $r_i \in R$ and $t \in Q$ the optimal path collects at most $\beta_{r,t}$ profits, there is no violating constraint and $(\lambda, \beta, \theta)$ is a feasible solution; otherwise there exists a separating hyperplane.

The separation oracle explained above is NP-Hard since it contains the orienteering problem as a special case where the radius of all the $\B$-balls are zero. Therefore, we can only hope for an approximate solution for the separation oracle unless $P = NP$.

Note that $SOP(r_i, t)$ is the same as instance $(V, d, r_i, C, t \cdot v_i)$ of the NPCST problem (Definition \ref{npcst-def}) except instead of finding an optimum tree we have to find an optimum path. Because paths are the special cases of the trees, the optimum value for the NPCST instance is at least the optimum value of $SOP(r_i, t)$. Therefore, if we solve the NPCST instance we collect at least the same amount of profit. We will use the $\big( O(\log n), O(\log n), 2 \big)$-approximation algorithm in Theorem \ref{npcst-thm} to solve the NPCST instance and transform the resulting tree to a path by doubling the edges and taking an Eulerian tour which increases the length of the path by a factor of $2$. In fact, we approximately solve $SOP(r_i, t)$ by violating the budget on the resulting path, the radius of clients' $\B$-balls, and not collecting the maximum profit.

Due to all the violations explained above on the constraints of $SOP(r_i, t)$ we cannot bound the objective value of the feasible solution resulting from the $\big( O(\log n), O(\log n), 2 \big)$-approximation algorithm.
To this end, we introduce a relaxation of \ref{plp} (\ref{rplp}) in the following, when $\mu, \omega$ are constant integers greater than or equal to $1$.

\begin{align}
&\min &\sum_{c \in C} \sum_{t \in Q} &t \cdot y_{c,t}&\tag{\textbf{$PLP^{(\mu, \omega)}$}} \label{rplp}\\
&s.t.&\sum_{p \in \Po(r,\mu \cdot v_r \cdot t)} x_{r,p,t} &\leq \omega &\forall r \in R, \forall t \in Q \label{rplp1} \\
&&\sum_{r \in R} \sum_{p \in \Po(r, \mu \cdot v_r \cdot t): p \cap \B(c,\mu \cdot v'_c \cdot t) \neq \emptyset} x_{r,p,t} &\geq \sum_{t' \leq t} y_{c,t'} &\forall c \in C, \forall t \in Q  \label{rplp2} \\
&&\sum_{t \in Q} y_{c, t} &\geq 1 &\forall c \in C \label{rplp3}\\
&&x,y &\geq 0
\end{align}

\cref{rplp1} is the same as \cref{plp1} except instead of $\Po(r, v_r \cdot t)$ we have $\Po(r,\mu \cdot v_r \cdot t)$ which allows repairman $r$ to take a path which is $\mu$ times longer than a regular path in $\Po(r, v_r \cdot t)$. Moreover by putting $\omega$ instead of $1$ we allow each repairman to take $\omega$ routes instead of one. \cref{rplp2} is the same as \cref{plp2} except instead of $\B(c,v'_c \cdot t)$ we have $\B(c,\mu \cdot v'_c \cdot t)$ and instead of $\Po(r, v_r \cdot t)$ we have $\Po(r,\mu \cdot v_r \cdot t)$ which allow both the repairmen and clients to take paths that are $\mu$ times longer.

In the following lemma we show a $(\sigma, \phi, \omega)$-approximation algorithm for NPCST can be used to find a feasible solution to \ref{rplp} whose cost is at most the optimum solution of \ref{plp}. The proof of this lemma which is provided in full  is relatively involved and is
 more general than a lemma used in \cite{CS10}.
\begin{lemma}
\label{relaxation}
Given a $(\sigma, \phi, \omega)$-approximation algorithm for NPCST, one can find a feasible solution to \ref{rplp} in polynomial time, where $\mu = \max(\sigma, 2\cdot \phi)$, with objective value at most $\opt (1 + \epsilon)$ for any $\epsilon > 0$ where $\opt$ is the optimal value of \ref{plp}.
\end{lemma}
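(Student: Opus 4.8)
The plan is to solve \ref{rplp} by column generation: run the ellipsoid method on its dual, use the $(\sigma,\phi,\omega)$-approximation for NPCST as an approximate separation oracle, collect the path-columns it produces, and output an optimal solution of the resulting polynomial-size restricted primal.

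First I would write down the dual $DLP^{(\mu,\omega)}$ of \ref{rplp}: it has variables $\lambda_c,\beta_{r,t},\theta_{c,t}\ge 0$, objective $\max\sum_{c\in C}\lambda_c-\omega\sum_{r\in R,\,t\in Q}\beta_{r,t}$, the $O(|C|\,|Q|)$ constraints $\lambda_c\le t+\sum_{t'\ge t}\theta_{c,t'}$, and the exponential family $\sum_{c\,:\,p\cap\B(c,\mu v'_c t)\neq\emptyset}\theta_{c,t}\le\beta_{r,t}$ over all $r\in R$, $t\in Q$, $p\in\Po(r,\mu v_r t)$; this has exactly the shape of \ref{dlp} but with $\mu$-inflated neighborhoods and paths and an extra factor $\omega$ multiplying the $\beta$-term. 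Given a candidate $(\lambda,\beta,\theta)$, I would check the first family directly, and for each pair $(r_i,t)$ run the $(\sigma,\phi,\omega)$-approximation on the NPCST instance $(V,d,s_i,C,v_i t)$ with profits $\theta_{c,t}$, then double the edges of the returned tree and shortcut an Eulerian walk from $s_i$ to get a path $p$. Since the tree has cost at most $\phi v_i t$, the path has length at most $2\phi v_i t\le\mu v_i t$, so $p\in\Po(r_i,\mu v_i t)$; and every client charged by the algorithm has a tree node, hence a node of $p$, inside $\B(c,\sigma v'_c t)\subseteq\B(c,\mu v'_c t)$, so the profit $p$ collects under the $\mu$-inflated balls is at least what the algorithm reports. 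If that reported profit exceeds $\beta_{r_i,t}$ for some $(r_i,t)$ I return the corresponding (valid) inequality of $DLP^{(\mu,\omega)}$ and remember $p$; otherwise I declare ``no violation''.

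The crux is that ``no violation'' is a sound stopping rule. If the NPCST algorithm reports profit at most $\beta_{r_i,t}$ for every $(r_i,t)$, then since it collects at least a $1/\omega$ fraction of the true NPCST optimum, and that optimum is taken over all trees of cost at most $v_i t$ and hence is at least $\sum_{c\,:\,p'\cap\B(c,v'_c t)\neq\emptyset}\theta_{c,t}$ for every path $p'\in\Po(r_i,v_i t)$ with the \emph{un-inflated} neighborhoods, we get $\sum_{c\,:\,p'\cap\B(c,v'_c t)\neq\emptyset}\theta_{c,t}\le\omega\beta_{r_i,t}$ for all such $p'$. Hence $(\lambda,\omega\beta,\theta)$ satisfies every constraint~(\ref{pcc}) of \ref{dlp}; it still satisfies~(\ref{dlp2}); so it is feasible for \ref{dlp} with objective $\sum_c\lambda_c-\sum_{r,t}\omega\beta_{r,t}=\sum_c\lambda_c-\omega\sum_{r,t}\beta_{r,t}$, \ie the value of $DLP^{(\mu,\omega)}$ at $(\lambda,\beta,\theta)$, which by weak duality is at most $\opt$. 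Now, for a target $\alpha$, I would run the ellipsoid method on the polytope of all $(\lambda,\beta,\theta)\ge 0$ obeying the $\lambda$-constraints, the bound $\sum_c\lambda_c-\omega\sum_{r,t}\beta_{r,t}\ge\alpha$, and the exponential family of $DLP^{(\mu,\omega)}$ restricted to the remembered columns (separated by the oracle above, and seeded with a traversal of a minimum spanning tree by the slowest repairman at the largest time-stamp of $Q$, which by the choice of that time-stamp already visits every node and so makes the restricted primal feasible). Either the run halts at a point where the oracle finds no violation, certifying $\opt\ge\alpha$, or the ellipsoid declares the polytope empty; since every cut it used is a valid inequality of the restricted dual, the restricted dual --- equivalently the restricted primal, the version of \ref{rplp} using only the remembered columns --- then has optimum below $\alpha$. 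Binary-searching $\alpha$ to additive accuracy $\epsilon\cdot\opt$ (legitimate because the data is rational of polynomial bit-length and $\opt\ge|C|\ge 1$, every latency being at least $1$) and solving the final polynomial-size restricted primal exactly produces a feasible solution of \ref{rplp} of value at most $(1+\epsilon)\opt$; the running time is polynomial since the ellipsoid makes polynomially many queries, each contributing at most $|R|\,|Q|$ columns, over polynomially many binary-search rounds.

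The step I expect to be the main obstacle --- and really the only non-routine one --- is making this interface between the approximate oracle and the ellipsoid airtight in both directions: that ``the tri-criteria algorithm reports no violated path at $(\lambda,\beta,\theta)$'' forces $(\lambda,\omega\beta,\theta)$ to be feasible for the \emph{un-relaxed} dual \ref{dlp}, so that the restricted-primal optimum is bounded by $\opt$; and, conversely, that the tree-to-path conversion keeps the generated object inside $\Po(r,\mu v_r t)$ and inside the $\mu$-inflated neighborhoods, so the inequalities handed to the ellipsoid genuinely are constraints of $DLP^{(\mu,\omega)}$ and the generated paths genuinely are columns of \ref{rplp}. The rest is a standard ellipsoid/column-generation argument.
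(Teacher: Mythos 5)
Your proposal is correct and follows essentially the same route as the paper: the same relaxed dual polytope, the same approximate separation oracle built from the tri-criteria NPCST algorithm with Euler-tour doubling (giving $\mu=\max(\sigma,2\phi)$), the same two-sided soundness claim that ``no violation'' yields a feasible point of the unrelaxed dual \ref{dlp} and hence value at most $\opt$, and the same binary search over the objective bound. The only difference is presentational: you recover the primal by solving the restricted primal over the generated columns, while the paper phrases the identical step via Farkas's lemma applied to the polynomially many constraints produced by the Ellipsoid run.
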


\begin{proof}
\label{missingproofs-relaxation}
Consider the following polytope (DLP($\xi; \mu, \omega$)).
\begin{align}
&&\sum_{c \in C} \lambda_c - \sum_{r\in R,t \in Q} \beta_{r,t} & \geq \xi&& \label{rdlp1}\\
&&\sum_{c: p \cap \B(c, \mu \cdot v'_c \cdot t) \neq \emptyset} \theta_{c,t} &\leq \frac{1}{\omega} \beta_{r,t} & \forall r \in R, \forall t \in Q, \forall p \in \Po(r,\mu \cdot v_r \cdot t)  \label{rdlp2} \\
&&\lambda_c \leq t +&  \sum_{t \leq t'} \theta_{c,t'}  &\forall c \in C, \forall t \in Q \label{rdlp3}\\
&& \lambda, \beta, \theta &\geq 0 \label{rdlp4}
\end{align}

DLP($\xi; \mu, \omega$) is feasible if the optimal value of the dual LP for \ref{rplp} is greater than or equal to $\xi$ since Constraints (\ref{rdlp2}), (\ref{rdlp3}), and (\ref{rdlp4}) are the constraints of the dual of \ref{rplp} and Constraint (\ref{rdlp1}) lower bounds the objective value of the dual LP. Therefore DLP($\xi; 1, 1$) is feasible if \ref{dlp} has the optimum value of greater than or equal $\xi$.

First we prove the following claim.
\begin{nclaim}
\label{cl:1}
Given a real value $\xi$ and triple $(\beta, \lambda, \theta)$ as the candidate solution to DLP($\xi; 1, 1$), there is a polynomial time separation oracle that either:
(1) shows $(\beta, \lambda, \theta) \in$ DLP($\xi; 1, 1$), or (2) finds a hyperplane separating $(\beta, \lambda, \theta)$ and DLP($\xi; \mu, \omega$).
\end{nclaim}
\begin{proof}
First we check if triple $(\beta, \lambda, \theta)$ satisfies all \cref{rdlp1,rdlp3,rdlp4}. The checks can be done in polynomial time as there are polynomially many \cref{rdlp1,rdlp3,rdlp4}. If a constraint does not satisfy, then we find a hyperplane separating $(\beta, \lambda, \theta)$ and DLP($\xi; \mu, \omega$) and the claim follows.

We might have exponentially many Constraints (\ref{rdlp2}). In order to check if all of Constraints (\ref{rdlp2}) are satisfied, for every value $t \in Q$ and each repairman $r \in R$ we define instance $\I_{t,r} = (V, d, s_r, C, v_r \cdot t) $ of NPCST (see \cref{npcst-def}) as follows. Node set $V$ and metric $d$ in $\I_{t,r}$ is the same as graph $G$ in the input of \smr{}, the root node $s_r$ is the starting depot of $r$, the cost budget for the tree is $v_r \cdot t$ ($v_r$ is the speed of $r$), and each client $c$ has profit $\theta_{c,t}$ and neighborhood $\B(c, v'_c \cdot t)$. The separation oracle is the following. We run the given $(\sigma, \phi, \omega)$-approximation algorithm on $\I_{t,r}$ and find a tree whose cost is at most $\phi \cdot v_r \cdot t$ and collects at least $\frac{1}{\omega}$ fraction of the optimum profit while violating the radius of the clients' $\B$-ball by $\sigma$ factor. We transform the resulting tree to a path by doubling the edges and taking an Eulerian tour which makes the length of the tour to be at most $2 \cdot \phi \cdot v_r \cdot t$. Note that because $\mu = \max(\sigma, 2\cdot \phi)$ the resulting tour is in $\Po(r,\mu \cdot v_r \cdot t)$ and it collects the profit of client $c$ by visiting a node in its $\B(c, \mu \cdot v'_c \cdot t)$ neighborhood ball.

If there exits repairman $r \in R$ and time $t\in Q$ such that the path resulting from the separation oracle collects profits greater than $\frac{1}{\omega} \beta_{r,t}$, the corresponding Constraint (\ref{rdlp2}) gives a separating hyperplane and the claim follows. If not, we prove by contradiction that $(\beta, \lambda, \theta) \in$ DLP($\xi; 1, 1$). If $(\beta, \lambda, \theta) \not\in$ DLP($\xi; 1, 1$) then at least one of the constraints of DLP($\xi; 1, 1$) has to not hold for $(\beta, \lambda, \theta)$. As we check all \cref{rdlp3,rdlp4} for DLP($\xi; \mu, \omega$) and they are the same in DLP($\xi; 1, 1$) the violating constraint ($\V$) is one of the \cref{rdlp2}. Let the violating constraint $\V$ happens for repairman $r$ and time $t$, and $p^*_{r,t}$ be the path we found by running our separation oracle algorithm on $\I_{t,r}$. Constraint $\V$ being a violating constraint means that there exists a path with length at most $v_r \cdot t$ which collects profits greater than $\beta_{r,t}$ without any violation in the clients' $\B$-ball. Therefore, as $p^*_{r,t}$ is the path resulting from the $(\sigma, \phi, \omega)$-approximation algorithm, $p^*_{r,t}$ is in $\Po(r,\mu \cdot v_r \cdot t)$ (remember $\mu$ is at least $2 \cdot \phi$), collects profits at least  $\frac{1}{\omega} \cdot \beta_{r,t}$ while violating the clients' $\B$-ball by a factor of at most $\mu$ (remember $\mu$ is at least $\sigma$). This means $r$, $t$, and $p^*_{r,t}$ is a hyperplane separating $(\beta, \lambda, \theta)$ and DLP($\xi; \mu, \omega$) which cannot happen. Therefore there is no such violating constraint $\V$ and hence $(\beta, \lambda, \theta)$ is in polytope DLP($\xi; 1, 1$).
\qed
\end{proof}

The following proposition is obtained by running Ellipsoid algorithm \cite{GLS93} for the separation oracle of \cref{cl:1}.
\begin{proposition}
\label{elipsoid}
For any value $\xi$, Ellipsoid algorithm in polynomial time either shows DLP($\xi; \mu, \omega$) is empty or finds a feasible solution $(\beta, \lambda, \theta)$ in the polytope DLP($\xi; 1, 1$).
\end{proposition}
We find the largest value ($\xi^*$), within a factor of $(1 + \epsilon)$, for which there exists a feasible solution $(\beta^*, \lambda^*, \theta^*) \in$ DLP($\xi^*; 1, 1$) by binary search over $\xi$ and \cref{elipsoid}. Because DLP($\xi^*; 1, 1$) is feasible, it means that \ref{dlp} has the optimum value at least $\xi^*$ (since DLP($\xi^*; 1, 1$) has the same constraints as (\ref{dlp}) with an extra constraint to lower bound the objective value). Thus, from the LP duality theorem we conclude $\xi^* \leq \opt$ (Fact 1).

Because $\xi^*$ is the largest value within a factor of $(1 + \epsilon)$ for which DLP($\xi; 1, 1$) is non-empty, Ellipsoid algorithm in \cref{elipsoid} with $\xi = \xi^*( 1 + \epsilon)$ terminates in polynomial time certifying infeasibility of DLP($\xi^*(1  + \epsilon); \mu, \omega$). Thus it generates a collection of constraints of type (\ref{rdlp2}), (\ref{rdlp3}), (\ref{rdlp4}), and (\ref{rdlp1}) which all together constitute an infeasible system of constraints. Lets denote this infeasible system of constraints by $\tau$. Note that $\tau$ consist of polynomially many constraints as Ellipsoid algorithm of \cref{elipsoid} runs in polynomial time and at each step it finds one violating constraint.  We apply the Farkas's lemma to the constraints in $\tau$.
\begin{lemma}[Farkas]
\label{farkas}
For a matrix $A\in \bbR^{n \times m}$ and a vector $b \in \bbR^m$, exactly one of the following holds.
\begin{enumerate}
\item $\exists w \in \bbR^m \geq 0$ such that $Aw \geq b$.
\label{farkas1}
\item $\exists z \in \bbR^n \geq 0$ such that $A^Tz \leq 0$ and $b^Tz > 0$.
\label{farkas2}
\end{enumerate}
\end{lemma}
Here we represent the constraints of type (\ref{rdlp2}), (\ref{rdlp3}), and (\ref{rdlp4}) in $\tau$  by $A^Tz \leq 0$ and Constraint (\ref{rdlp1}) in $\tau$ by $b^Tz > 0$. Therefore as the constraints in $\tau$ are infeasible, Case \ref{farkas2} of Farkas' lemma is false which implies Case \ref{farkas1} is true. Because $Aw \geq b$ is the dual constraints of $A^Tz \leq 0$ which is $\tau$, $Aw \geq b$ is just constraints of $\ref{rplp}$. Thus, Case \ref{farkas1} of Farkas' lemma implies existences of a solution $(x, y)$ that is feasible for \ref{rplp} with objective value at most $\xi^* (1 +\epsilon)$. Because the number of constraints in $\tau$ (size of $A$ in Case \ref{farkas1} of Farkas Lemma) is polynomially bounded, we can actually find a feasible solution $(x, y)$ to \ref{rplp} with objective value at most $\xi^*(1 + \epsilon)$ by Ellipsoid algorithm.
From Fact 1 we know that $\xi^* \leq \opt$, therefore the objective value of solution $(x, y)$ to \ref{rplp} is at most $\opt (1 + \epsilon)$.
\qed

\end{proof}

\subsection{Rounding the LP}
\label{sec:rounding}
We show how to use feasible solution $(x, y)$ taken from \cref{relaxation} to obtain an integral solution to \smr{} with the total latency at most $O(\max(\sigma, 2\phi) \cdot \omega \cdot \opt)$ and thus finish the proof of Theorem \ref{general-thm}. Sum-Movement Repairmen Algorithm (\mra{}) shown in \cref{mra} is our algorithm to do so.

\begin{figure}[h]
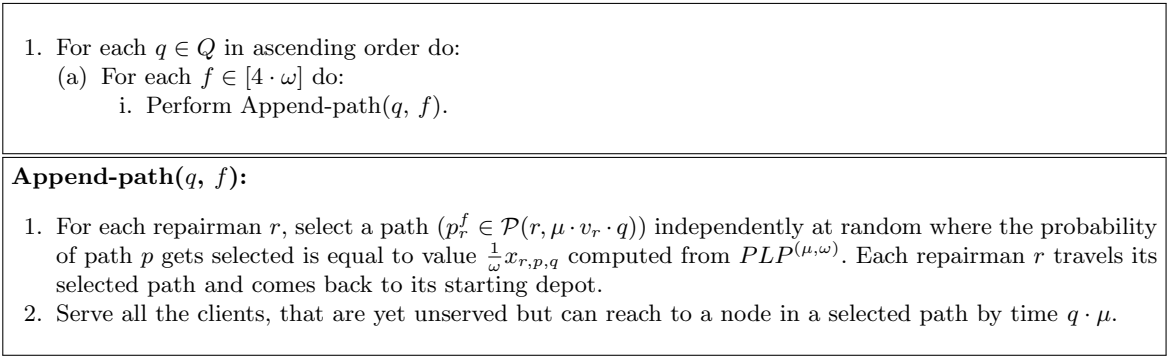

\fbox{\parbox{\textwidth}{
\begin{enumerate}
\item For each $q \in Q$ in ascending order do:
\begin{enumerate}
\item For each $f \in [4 \cdot \omega]$ do: \label{algq}
\begin{enumerate}
\item Perform \stp{q}{f}.
\end{enumerate}
\end{enumerate}
\end{enumerate}
}}
\fbox{\parbox{\textwidth}{
{\bf \stp{q}{f}:}
\begin{enumerate}
\item For each repairman $r$, select a path ($p^f_{r}\in \Po(r, \mu\cdot v_r \cdot q)$) independently at random where the probability of path $p$ gets selected is equal to value $\frac{1}{\omega}x_{r,p,q}$ computed from \ref{rplp}. Each repairman $r$ travels its selected path and comes back to its starting depot. \label{random-selection}
\item Serve all the clients, that are yet unserved but can reach to a node in a selected path by time $q \cdot \mu$. \label{client-cover}
\end{enumerate}
}}
\caption{Movement Repairmen Algorithm (\mra{})}
\label{mra}
\end{figure}

As explained earlier $Q$ in \mra{} is the set $\{ 1, 2, \ldots, 2^i, \ldots, 2^{\left\lceil \log T \right\rceil + \left\lceil \log m \right\rceil / 2  + 1}\}$  where $T$ is the latest service time a client can see which was upper bounded by $\frac{2 \cdot MST(G)}{\min_i{v_i}}$ and the value $\left\lceil \log T \right\rceil + \left\lceil \log m \right\rceil / 2  + 1$ is chosen to guarantee that \mra{} serves all the clients after it finishes. Moreover, $\mu$ and $\omega$ are the constants in \ref{rplp}.

\mra{} serves clients in multiple steps. It starts serving clients with paths that have the maximum latency $1 \cdot \mu$ then it concatenate paths of maximum latency $2 \cdot \mu$, then $4 \cdot \mu$ and so on. These paths come from the set $\Po(r,\mu \cdot v_r \cdot q)$ for $q \in Q$ and the selection is done using \ref{rplp} variables $x_{r,p,q}$. In fact, for each $q \in Q$ we select $\fw$ paths by executing \stp{q}{f} $\fw$ times where $f$ is just used to iterate over set $[\fw]$. This is because we want to have independence between the paths selected at each execution of \stp{q}{f} which helps us to better analyze the number of clients get served in the execution.

We use the following definitions to refer to the clients served by \mra{}.
\begin{definition}
\label{seta}
Let $A^{q,f}$ denotes the set of non-served clients getting served at Instruction \ref{client-cover} of \stp{q}{f}. \ie $A^{q,f}$ is the set of clients $c$ such that $c$ is not served before the execution of \stp{q}{f} but it can reach a node $v$ by time $\mu \cdot q$ such that there exists a repairman $r$ with $v \in p^f_r$ (remember $p^f_r$ is the path selected for $r$ in \stp{q}{f}).
\end{definition}
\begin{definition}
\label{setaa}
Set $\A^{q,f} = \bigcup_{(q',f') \leq (q,f)} A^{q',f'}$ is the set of all clients served by \mra{} up to and including the execution of \stp{q}{f}. Here the operator $\leq$ is the lexicographic ordering for the ordered pairs where the first entry has more priority than the second one.
\end{definition}
We define function $prev(q,f)$ as follows.
\[
prev(q,f) = \left\{
  \begin{array}{l l}
    (q,f-1) & \quad f\neq 1\\
    (\frac{q}{2},4 \cdot \omega) & \quad f = 1
  \end{array} \right.
\]
\begin{definition}
\label{valuef}
Let $(q', f') = prev(q,f)$ and \stp{q'}{f'} be the predecessor of \stp{q}{f}. Let $F^{q,f}$ denote the value of $\sum_{c \in C \setminus \A^{q',f'}} \sum_{t\leq q} y_{c,t}$. Intuitively, $F^{q,f}$ can be taught as the fractional number of clients that are (fractionally) served in feasible solution $(x,y)$ by the time $q$, but not served by \mra{} before the execution of \stp{q}{f}.
\end{definition}
We would like in $A^{q,f}$, be a large fraction of $F^{q,f}$.
First we prove the following lemma to lower bound the probability of a client getting served in the execution of $\stp{q}{f}$.
\begin{lemma}
\label{prob-cover-lem}
Let $q$ be any element of $Q$ and $c$ be any client in $C$. If we randomly select a path for each repairman $r \in R$ such that the probability of selecting $p \in \Po(r, \mu \cdot v_r \cdot q)$ is $\frac{1}{\omega}\cdot x_{r,p,q}$, then the probability of $c$ getting served (a selected path visits a node from $\B(c, \mu q)$) is at least $\frac{1}{2 \omega} \cdot \sum_{q' \leq q} y_{c,q'}$.
\end{lemma}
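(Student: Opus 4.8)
The plan is to bound the probability that client $c$ is \emph{not} served, where "not served" means that none of the randomly selected paths $p^f_r$ (over all repairmen $r$) visits a node of $\B(c,\mu q)$. First I would fix $q$ and $c$ and, for each repairman $r$, partition the paths in $\Po(r,\mu\cdot v_r\cdot q)$ into those that hit $\B(c, \mu q)$ and those that do not. Let $z_r := \sum_{p\in\Po(r,\mu v_r q):\, p\cap \B(c,\mu q)\neq\emptyset} \frac{1}{\omega} x_{r,p,q}$ be the probability that the path chosen for $r$ hits $c$'s ball. Since the selections are independent across repairmen (Instruction~\ref{random-selection}), the probability that $c$ is served is $1 - \prod_{r\in R}(1-z_r)$. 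By \cref{rplp2} applied at time $q$ we have $\sum_{r\in R} z_r = \frac{1}{\omega}\sum_{r\in R}\sum_{p:\,p\cap\B(c,\mu v'_c q)\neq\emptyset} x_{r,p,q} \geq \frac{1}{\omega}\sum_{q'\leq q} y_{c,q'}$; note here I am using that $v'_c\cdot q \le \mu q$ since $\mu\ge 1$ and $v'_c \le mv$... actually more carefully, the ball in \cref{rplp2} is $\B(c,\mu v'_c q)$, and after the scaling of the metric the relevant reachable set by time $q\mu$ is $\B(c,\mu q)$ in the rescaled units where $v'_c$ has been absorbed; I would make this bookkeeping precise but it is routine given the normalization "$mv$ is the maximum speed and we multiply edges by $2\,mv$" from \cref{sec:prel}.

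Next I would convert the product bound into the claimed linear bound. Write $S := \sum_{r} z_r \geq \frac{1}{\omega}\sum_{q'\le q} y_{c,q'}$. The quantity $\prod_r (1-z_r)$ with $z_r\in[0,1]$ and $\sum_r z_r = S$ is maximized (so the serve-probability is minimized) in a way controlled by $S$; the standard inequality is $\prod_r(1-z_r)\le e^{-S}$, giving serve-probability $\ge 1-e^{-S}$. Then I would use the elementary fact that $1-e^{-S}\ge \tfrac{1}{2}S$ for all $S\in[0, c_0]$ for a suitable constant, but here $S$ can exceed $1$; however $\sum_{q'\le q} y_{c,q'}\le 1$ always (by \cref{rplp3} and nonnegativity, the total $y$-mass for $c$ summed over \emph{all} $t$ is $\ge 1$, but the partial sum up to $q$ is at most... hmm, it need not be $\le 1$ — actually $\sum_{t\in Q} y_{c,t}\ge 1$ is only a lower bound, $y$ could be larger). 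So I would instead truncate: it suffices to prove the bound when $\sum_{q'\le q}y_{c,q'}\le 1$, since if it exceeds $1$ we may decrease the $y_{c,q'}$ (and correspondingly the $x$'s they force via \cref{rplp2}) to make the sum exactly $1$ without increasing the left side — or more cleanly, observe the target bound $\frac{1}{2\omega}\sum_{q'\le q}y_{c,q'}$ is meaningful only when it is $\le \frac{1}{2\omega}\le \frac12$, and for the regime $S=\frac{1}{\omega}\sum y \le \frac{1}{\omega}\le 1$ we directly get $1-e^{-S}\ge 1-(1-S+\tfrac{S^2}{2})\ge S-\tfrac{S^2}{2}\ge \tfrac{S}{2}$, i.e. serve-probability $\ge \tfrac{S}{2} = \frac{1}{2\omega}\sum_{q'\le q}y_{c,q'}$, as desired. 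If $\sum_{q'\le q}y_{c,q'}>1$ I would replace it by $1$ and run the same argument, since a path hitting $\B(c,\mu q)$ with the full (larger) $x$-mass only helps.

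The main obstacle I anticipate is not the probabilistic estimate, which is the routine $1-\prod(1-z_r)\ge 1-e^{-\sum z_r}\ge \tfrac12\sum z_r$ chain, but rather the bookkeeping that connects the event "$c$ reaches a selected node by time $q\mu$" (as used in Instruction~\ref{client-cover} and \cref{seta}) to the ball $\B(c,\mu v'_c q)$ appearing in \cref{rplp2}, through the metric rescaling by $2\,mv$ and the convention that after rescaling every client's effective reach per unit time is normalized. I would state this correspondence carefully as a first sentence of the proof, reducing the lemma to the clean claim: $\Pr[\text{some } p^f_r \text{ hits } \B(c,\mu v'_c q)] \ge \frac{1}{2\omega}\sum_{q'\le q} y_{c,q'}$, and then give the two-line probabilistic argument above. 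A secondary subtlety is making sure the "$z_r$'s sum may exceed what we need, so truncate" step is phrased monotonically; I would handle it by the remark that adding extra probability mass to hitting paths only increases $1-\prod_r(1-z_r)$, so we may assume $\sum_r z_r = \min\{1,\ \frac{1}{\omega}\sum_{q'\le q}y_{c,q'}\}\le 1$ and then the Taylor bound $e^{-S}\le 1-S+\tfrac{S^2}{2}$ for $S\in[0,1]$ closes it.
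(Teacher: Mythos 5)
Your proposal is correct and takes essentially the same approach as the paper's proof: both express the per-repairman hit probability via the $\frac{1}{\omega}x_{r,p,q}$ mass on paths meeting the client's ball, use independence with $\prod_{r}(1-z_r)\le e^{-\sum_r z_r}$ together with \cref{rplp2}, and then the Taylor bound $1-e^{-S}\ge S-\tfrac{S^2}{2}\ge \tfrac{S}{2}$ for $S\in[0,1]$. The only cosmetic differences are that the paper obtains the exponential bound via the arithmetic--geometric mean inequality rather than $1-x\le e^{-x}$, and it simply asserts $0\le Y_c\le 1$ where you add the truncation/normalization discussion.
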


\begin{proof}
The probability of a client $c$ getting served by an arbitrary repairman $r \in R$ is $D_r = \frac{1}{\omega} \sum_{p \in \Po(r, \mu \cdot v_r \cdot t): p \cap \B(c,\mu \cdot v'_c \cdot t) \neq \emptyset} x_{r,p,t}$ from the probability distribution used in the rounding. To simplify the notations, let $B = \sum_{r \in R'} D_r$ and $Y_c = \frac{1}{\omega} \sum_{q' \leq q} y_{c,q'}$.

The probability that a client $c$ is not served by any repairman in $R$ is $\prod_{r \in R} (1 - D_r)$.
\begin{align*}
\prod_{r \in R} (1 - D_r) & \leq \left(\frac{{|R|} - \sum_{r \in R} D_r}{|R|}\right)^{|R|} & \quad \text{Arithmetic and Geometric}\\
&&\text{Means Inequality \footnotemark}\\
& = \left(1 - \frac{B}{|R|}\right)^{|R|} & (\text{replacing by }B )\\
& = \left(1 - \frac{1}{\frac{|R|}{B}}\right)^{\frac{|R|}{B}B } & \\
& \leq e^{-B} & \\
& \leq e^{-Y_c} & \text{\cref{rplp2} }\\
\end{align*}
\footnotetext{For any set of $n$ non-negative numbers $x_1, \ldots, x_n$ we have $\frac{x_1 + \ldots + x_2}{n} \geq \sqrt[n]{x_1 \cdot x_2 \cdot \ldots \cdot x_n}$}
From the above inequality we conclude that client $c$ gets served with probability at least  $1 - e^{-Y_c}$. The following inequalities finish the proof of the lemma.
\begin{align*}
1 - e^{-Y_c} & = 1 - \left(\sum_{i = 0}^{\infty} \frac{(-Y_c)^i}{i!}\right)  & \text{by Taylor Expansion}\\
& \geq Y_c - \frac{Y_c^2}{2}& \text{as $0 \leq Y_c \leq 1$}\\
& \geq \frac{1}{2} Y_c & \text{as $0 \leq Y_c \leq 1$}\\
& \geq \frac{1}{2 \omega} \cdot \sum_{q' \leq q} y_{c,q'} & \text{definition of $Y_c$}
\end{align*}
\qed
\end{proof}

We use the following lemma to derandomize selections of the paths in \stp{q}{f} and to show that $A^{q,f}$ is at least $\left\lceil \frac{F^{q,f}}{2 \cdot \omega} \right\rceil$.
\begin{lemma}
\label{coverage}
We can derandomize \stp{q}{f} to deterministically select a path $p^f_r \in \Po(r, q \cdot v_r \cdot \mu)$ for each repairman $r$, such that the set of newly served clients ($A^{q,f}$ as defined in \cref{seta}) to be at least $\left\lceil \frac{F^{q,f}}{2 \cdot \omega} \right\rceil$.
\end{lemma}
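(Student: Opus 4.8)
The plan is to observe that the randomized procedure \stp{q}{f} already serves, in expectation, at least $F^{q,f}/(2\omega)$ of the relevant clients, and then to derandomize it by the method of conditional expectations, fixing the repairmen's paths one at a time while keeping a suitable pessimistic estimator non-decreasing. Since the number of clients served is always an integer, beating the expectation automatically beats its ceiling.

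First I would fix notation. Let $(q',f')=prev(q,f)$ and let $C'=C\setminus\A^{q',f'}$ be the set of clients not yet served when \stp{q}{f} begins; as all earlier steps have already been derandomized, $C'$ is a fixed set. For a repairman $r$ and a client $c$ put
\[
D_r(c)=\frac{1}{\omega}\sum_{p\in\Po(r,\mu v_r q):\,p\cap\B(c,\mu v'_c q)\neq\emptyset}x_{r,p,q},
\]
the probability that the path drawn for $r$ in \stp{q}{f} hits $\B(c,\mu v'_c q)$; since the solution $(x,y)$ produced in \cref{relaxation} has polynomially many nonzero entries, each $D_r(c)$ is computable in polynomial time, and $0\le D_r(c)\le 1$ by \cref{rplp1}. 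For $S\subseteq R$ and $C''\subseteq C'$ define the estimator
\[
\Psi(S,C'')=\sum_{c\in C''}\Bigl(1-\prod_{r\in S}(1-D_r(c))\Bigr),
\]
which, by independence of the repairmen's random choices, is exactly the expected number of clients of $C''$ that get hit when every repairman in $S$ draws its path at random. Then \cref{prob-cover-lem}, applied to each $c\in C'$, gives $\Psi(R,C')\ge\frac{1}{2\omega}\sum_{c\in C'}\sum_{t\le q}y_{c,t}=F^{q,f}/(2\omega)$, the last equality being \cref{valuef} together with $C'=C\setminus\A^{q',f'}$.

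Next I would state the derandomized procedure recursively. Maintain a set $S$ of still-random repairmen and a set $C''$ of still-unserved clients, starting from $S=R$, $C''=C'$; pick any $r\in S$, range over the polynomially many candidate paths $p$ (those with $x_{r,p,q}>0$, plus the empty path), let $X_p=\{c\in C'':p\cap\B(c,\mu v'_c q)\neq\emptyset\}$, and choose $p$ maximizing $|X_p|+\Psi(S\setminus\{r\},C''\setminus X_p)$; set $p^f_r=p$, declare $X_p$ served, and recurse on $(S\setminus\{r\},C''\setminus X_p)$. This runs in polynomial time. The claim, by induction on $|S|$, is that it serves at least $\lceil\Psi(S,C'')\rceil$ clients. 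The base case $S=\emptyset$ holds since $\Psi(\emptyset,C'')=0$. For the inductive step, the key computation is that the expectation of $|X_p|+\Psi(S\setminus\{r\},C''\setminus X_p)$ under the draw $p\sim(\frac{1}{\omega}x_{r,p,q})_p$ equals, term by term in $c\in C''$, $D_r(c)+(1-D_r(c))\bigl(1-\prod_{r'\in S\setminus\{r\}}(1-D_{r'}(c))\bigr)=1-\prod_{r'\in S}(1-D_{r'}(c))$, hence the expectation is $\Psi(S,C'')$; so the maximizing $p$ satisfies $|X_p|+\Psi(S\setminus\{r\},C''\setminus X_p)\ge\Psi(S,C'')$. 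By the induction hypothesis the recursion serves at least $\lceil\Psi(S\setminus\{r\},C''\setminus X_p)\rceil\ge\Psi(S\setminus\{r\},C''\setminus X_p)$ further clients, so the total is at least $|X_p|+\Psi(S\setminus\{r\},C''\setminus X_p)\ge\Psi(S,C'')$, and being an integer it is at least $\lceil\Psi(S,C'')\rceil$. Applying this with $S=R$, $C''=C'$ and using $\Psi(R,C')\ge F^{q,f}/(2\omega)$ gives $|A^{q,f}|\ge\lceil F^{q,f}/(2\omega)\rceil$.

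The crux is the inductive step, and within it the identity that averaging $|X_p|+\Psi(S\setminus\{r\},C''\setminus X_p)$ over the LP-induced distribution on $p$ returns exactly $\Psi(S,C'')$ — i.e.\ that $\Psi$ is a pessimistic estimator that the greedy choice never decreases. Everything else is routine: reducing the search to the polynomial-size support of the LP solution (any path outside the support has zero weight, so the maximum over the support already dominates the expectation), the polynomial-time evaluation of the $D_r(c)$, and checking that the ceiling propagates correctly, which it does precisely because the number served is always an integer and $\Psi(S,C'')$ is monotone under removing a repairman from $S$.
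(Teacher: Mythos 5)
Your proof is correct, and it takes a genuinely different route from the paper's. The paper also derandomizes by induction over the repairmen, but its pessimistic quantity is not the exact expected coverage: it is the purely LP-based potential $\Y^{C',R'}_q=\sum_{c\in C'}\min\bigl(\sum_{q'\leq q}y_{c,q'},\ \frac{1}{\omega}\sum_{r\in R'}\sum_{p:\,p\cap\B(c,\mu v'_c q)\neq\emptyset}x_{r,p,q}\bigr)$, and its greedy rule is simpler than yours: each repairman just picks the path hitting the largest number of still-unserved clients, irrespective of what the remaining repairmen could contribute. The paper's inductive claim (its Claim~\ref{claim:derandomize}) is that this max-coverage greedy serves at least $\lceil\frac12\Y^{C',R'}_q\rceil$ clients; the factor $2$ is paid inside the induction, where the coverage lost by discarding the clients hit by $p^*$ is bounded by $|C'_{p^*}|$ (using $Y^{R'}_{c,q}\leq 1$ and \cref{rplp1} together with the maximality of $p^*$ and the inequality $\min(X,Y+Z)\leq\min(X,Y)+Z$), and the link to $F^{q,f}$ is made at the end directly via \cref{rplp2}, with no probabilistic reasoning in the deterministic part. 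Your argument is instead the standard method of conditional expectations: the estimator $\Psi(S,C'')=\sum_{c\in C''}\bigl(1-\prod_{r\in S}(1-D_r(c))\bigr)$ is the exact expected coverage, the greedy choice maximizes the conditional expectation $|X_p|+\Psi(S\setminus\{r\},C''\setminus X_p)$ rather than raw coverage, and the factor $2$ enters only once, through \cref{prob-cover-lem} when lower-bounding $\Psi(R,C')$ by $F^{q,f}/(2\omega)$. What each buys: your version is more modular (it uses \cref{prob-cover-lem} as a black box, and the key identity is just the tower property of expectation under independent path draws), while the paper's version yields a simpler algorithmic rule and an LP-only invariant that never needs the product-form expectation to be computed. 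Both share the same two delicate points, which you handle explicitly and the paper leaves implicit: the maximization must be restricted to the polynomial-size support of the LP solution (plus the ``no path'' option carrying the residual probability mass allowed by \cref{rplp1}), and integrality of the number of served clients is what lets the ceiling survive the induction.
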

\begin{proof}
If we select a path $p^f_r \in \Po(r, \mu \cdot v_r \cdot q)$ with probability $x_{r,p^f_r,q} / \omega$ for each repairman $r$, from Lemma \ref{prob-cover-lem} we know that the probability of an arbitrary client $c$ getting served is at least $\frac{1}{2 \omega} \cdot \sum_{t \leq q} y_{c,t}$. By linearity of the expectation we conclude the expected number of clients served with these paths is at least $\frac{1}{2 \omega} \cdot \sum_{c \in C} \sum_{t \leq q} y_{c,t}$. Therefore the expected number of new clients that are served with these paths is at least $\frac{F^{q,f}}{2 \omega}$ by definition of $F^{q,f}$ (see \cref{valuef}).

Now, we derandomize the random selection of the paths in \stp{q}{f} so that we serve deterministically at least $\left\lceil \frac{F^{q,f}}{2 \omega } \right\rceil$ number of new clients.
Let $R'\subseteq R$ be an arbitrary subset the set of the repairmen. For each client $c \in C$ and time $q \in Q$ we define variable $Y^{R'}_{c,q}$ as follows.
\begin{equation}
\label{cy}
Y^{R'}_{c,q} = \min\left(\sum_{q' \leq q} y_{c,q'}, \frac{1}{\omega} \sum_{r \in R'}\ \sum_{p \in \Po(r, \mu \cdot v_r \cdot t): p \cap \B(c,\mu \cdot v'_c \cdot t) \neq \emptyset} x_{r,p,t}\right)
\end{equation}
Intuitively, $Y^{R'}_{c,q}$ represents the fractional service client $c$ receives from the Repairmen in $R'$ in solution $(x,y)$, if each repairman could take at most one path. Here, we use $\frac{1}{\omega}$ because in $(x,y)$ each repairman can take up to $\omega$ paths (see \cref{plp1} in \ref{plp}).
Consequently, for an arbitrary subset $C' \subseteq C$, $R' \subseteq R$ and, time $q \in Q$, we define the variable $\Y^{C',R'}$ as follows.
\begin{equation}
\label{ccy}
\Y^{C', R'}_q = \sum_{c \in C'} Y^{R'}_{c,q}
\end{equation}
Intuitively, we can think of $\Y^{C', R'}_q$ to be the (fractional) amount of service that clients in $C'$ receive from the repairmen in $R'$ by time $q$ in feasible solution $(x,y)$. We prove the following claim.
\begin{nclaim}
\label{claim:derandomize}
Let $C'$ be an arbitrary subset of $C$ and $R'$ be an arbitrary subset of $R$. We can deterministically select one path for each repairman in $R'$ such that they serve at least $\left\lceil \frac{1}{2} \Y^{C', R'}_q \right\rceil$ clients from $C'$.
\end{nclaim}
\begin{proof}
We prove this claim by induction on $|R'|$. For the base case when $|R'| = 1$ assume that repairman $r$ is the only member of $R'$. For each path $p \in \Po(r, \mu \cdot v_r \cdot q)$ let $C'_p$ be the set of clients $c$ in $C'$ whose neighborhood ball ($\B(c, v'_c \cdot \mu \cdot t)$) gets hit by $p$.  Let path $p_r^* \in \Po(r, \mu \cdot v_r \cdot q)$ be the path that intersects with the maximum number of neighborhood balls of the clients in $C'$ (\ie $|C'_{p^*}|$ is maximum). We select path $p^*$ for $r$. Consider the following inequalities.
\begin{align*}
\Y^{C', \{r\}}_q &= \sum_{c \in C'} Y^{\{r\}}_{c,q} &\\
& = \sum_{c \in C'} \min\left(\sum_{q' \leq q} y_{c,q'}, \frac{1}{\omega} \sum_{p \in \Po(r, \mu \cdot v_r \cdot t): p \cap \B(c,\mu \cdot v'_c \cdot t) \neq \emptyset} x_{r,p,t}\right) & \text{definition of $Y^{C', \{r\}}_{c,q}$}\\
& \leq \frac{1}{\omega} \sum_{c \in C'}\  \sum_{p \in \Po(r, \mu \cdot v_r \cdot t): p \cap \B(c,\mu \cdot v'_c \cdot t) \neq \emptyset} x_{r,p,t} &\\
& \leq \frac{1}{\omega} \sum_{p \in \Po(r, \mu \cdot v_r \cdot t)}\ \sum_{c \in C' : p \cap \B(c,\mu \cdot v'_c \cdot t) \neq \emptyset} x_{r,p,t} &\\
& \leq \frac{1}{\omega} \sum_{p \in \Po(r, \mu \cdot v_r \cdot t)} |C'_p|   x_{r,p,t} & \text{definition of $C'_p$}\\
& \leq \frac{1}{\omega} \sum_{p \in \Po(r, \mu \cdot v_r \cdot t)} |C'_{p^*}|   x_{r,p,t} &\\
& \leq \frac{1}{\omega} \omega \cdot C'_{p^*}   &\text{\cref{rplp1}}\\
& \leq C'_{p^*}   &
\end{align*}
The above inequality shows $\Y^{C', \{r\}}_q \leq C'_{p^*}$ and hence $\left\lceil \frac{1}{2} \Y^{C', \{r\}}_q \right\rceil \leq C'_{p^*}$ as $C'_{p^*}$ is an integer value which completes the proof for the base case.

Assume that the claim holds for any subset of repairmen with size $k'$ as the induction hypothesis. We prove that the claim holds for an arbitrary subset $R' \subseteq R$ of size $k' + 1$. Let $r \in R'$ be a repairman in $R'$. Similar to the base case let path $p^* \in \Po(r, \mu \cdot v_r \cdot q)$ be the path that intersects with the maximum number of $\B$-balls of the clients in $C'$ (\ie $|C'_{p^*}|$ is maximum). We select path $p^*$ for $r$ which serves $|C'_{p^*}|$ new clients from $C'$. By the induction hypothesis we can select one path for each of remaining repairmen $R' \setminus\{r\}$ such that they serve at least $\left\lceil \frac{1}{2} \Y^{C'\setminus C'_{p^*}, R' - \{r\}}_q \right\rceil$ many clients. In the following we prove that $\left\lceil \frac{1}{2} \Y^{C', R'}_q \right\rceil \leq C'_{p^*} + \left\lceil \frac{1}{2} \Y^{C'\setminus C'_{p^*}, R' - \{r\}}_q \right\rceil$ which completes the proof of the induction for $R'$ and hence the claim.
\begin{align*}
&\left\lceil \frac{1}{2} \Y^{C', R'}_q \right\rceil\\
&= \left\lceil \frac{1}{2} \sum_{c \in C'} Y^{R'}_{c,q} \right\rceil &\\
&= \left\lceil \frac{1}{2} \left(\sum_{c \in C'\setminus C'_{p^*}} Y^{R'}_{c,q} + \sum_{c \in C'_{p^*}} Y^{R'}_{c,q}\right) \right\rceil &\\
&= \Bigg\lceil \frac{1}{2} \sum_{c \in C'\setminus C'_{p^*}} \min \left(\sum_{q' \leq q} y_{c,q'}, \frac{1}{\omega} \sum_{r' \in R'}\ \sum_{p \in \Po(r', \mu \cdot v_{r'} \cdot t): p \cap \B(c,\mu \cdot v'_c \cdot t) \neq \emptyset} x_{r',p,t}\right)\\
&+ \frac{1}{2} \sum_{c \in C'_{p^*}} Y^{R'}_{c,q} \Bigg\rceil & \text{def. of $Y^{R'}_{c,q}$}\\
&= \Bigg\lceil \frac{1}{2} \sum_{c \in C'\setminus C'_{p^*}} \min \Big(\sum_{q' \leq q} y_{c,q'}, \frac{1}{\omega} \sum_{r' \in R'\setminus\{r\}}\ \sum_{p \in \Po(r', \mu \cdot v_{r'} \cdot t): p \cap \B(c,\mu \cdot v'_c \cdot t) \neq \emptyset} x_{r',p,t}\\
&+ \frac{1}{\omega} \sum_{p \in \Po(r, \mu \cdot v_{r'} \cdot t): p \cap \B(c,\mu \cdot v'_c \cdot t) \neq \emptyset} x_{r,p,t}\Big)
+ \frac{1}{2} \sum_{c \in C'_{p^*}} Y^{R'}_{c,q} \Bigg\rceil & \text{excluding $r$ from $R'$}\\
&= \Bigg\lceil \frac{1}{2} \sum_{c \in C'\setminus C'_{p^*}} \min \left(\sum_{q' \leq q} y_{c,q'}, \frac{1}{\omega} \sum_{r' \in R'\setminus\{r\}}\ \sum_{p \in \Po(r', \mu \cdot v_{r'} \cdot t): p \cap \B(c,\mu \cdot v'_c \cdot t) \neq \emptyset} x_{r',p,t}\right)\\
&+ \frac{1}{2\omega} \sum_{c \in C'\setminus C'_{p^*}}\ \sum_{p \in \Po(r, \mu \cdot v_r \cdot t): p \cap \B(c,\mu \cdot v'_c \cdot t) \neq \emptyset} x_{r,p,t} + \frac{1}{2} \sum_{c \in C'_{p^*}} Y^{R'}_{c,q} \Bigg\rceil
\end{align*}
Here the last inequality results form the following property of the $\min$ function that for any $Z \geq 0$ and $X,Y \in \bbR$ we have $\min(X, Y + Z) \leq \min(X,Y) + Z$. In the following we continue with replacing the $\min$ function with $\Y^{C'\setminus C'_{p^*}, R' - \{r\}}_q$ by noting its definition in the Inequality (\ref{ccy}).
\begin{align*}
& \Bigg\lceil \frac{1}{2} \sum_{c \in C'\setminus C'_{p^*}} \min \left(\sum_{q' \leq q} y_{c,q'}, \frac{1}{\omega} \sum_{r' \in R'\setminus\{r\}}\ \sum_{p \in \Po(r', \mu \cdot v_{r'} \cdot t): p \cap \B(c,\mu \cdot v'_c \cdot t) \neq \emptyset} x_{r',p,t}\right)\\
&+ \frac{1}{2\omega} \sum_{c \in C'\setminus C'_{p^*}}\ \sum_{p \in \Po(r, \mu \cdot v_r \cdot t): p \cap \B(c,\mu \cdot v'_c \cdot t) \neq \emptyset} x_{r,p,t} + \frac{1}{2} \sum_{c \in C'_{p^*}} Y^{R'}_{c,q} \Bigg\rceil\\
&= \Bigg\lceil \frac{1}{2} \Y^{C'\setminus C'_{p^*}, R' - \{r\}}_q\\
&+ \frac{1}{2\omega} \sum_{c \in C'\setminus C'_{p^*}}\ \sum_{p \in \Po(r, \mu \cdot v_r \cdot t): p \cap \B(c,\mu \cdot v'_c \cdot t) \neq \emptyset} x_{r,p,t} + \frac{1}{2} \sum_{c \in C'_{p^*}} Y^{R'}_{c,q} \Bigg\rceil & \text{def. of $\Y^{C'\setminus C'_{p^*}, R' - \{r\}}_q$}\\
&\leq \Bigg\lceil \frac{1}{2} \Y^{C'\setminus C'_{p^*}, R' - \{r\}}_q\\
&+ \frac{1}{2\omega} \sum_{c \in C'\setminus C'_{p^*}}\ \sum_{p \in \Po(r, \mu \cdot v_r \cdot t): p \cap \B(c,\mu \cdot v'_c \cdot t) \neq \emptyset} x_{r,p,t} + \frac{1}{2} |C'_{p^*}| \Bigg\rceil & \text{Noting that $Y^{R'}_{c,q} \leq 1$}\\
&= \Bigg\lceil \frac{1}{2} \Y^{C'\setminus C'_{p^*}, R' - \{r\}}_q\\
&+ \frac{1}{2\omega} \sum_{p \in \Po(r, \mu \cdot v_r \cdot t)}\ \sum_{c \in C'\setminus C'_{p^*}: p \cap \B(c,\mu \cdot v'_c \cdot t) \neq \emptyset} x_{r,p,t} + \frac{1}{2} |C'_{p^*}| \Bigg\rceil & \text{reordering the}\\
&& \text{$sum$ functions}\\
&\leq \left\lceil \frac{1}{2} \Y^{C'\setminus C'_{p^*}, R' - \{r\}}_q
+ \frac{1}{2\omega} \sum_{p \in \Po(r, \mu \cdot v_r \cdot t)}\ |C'_{p^*}| x_{r,p,t} + \frac{1}{2} |C'_{p^*}| \right\rceil& \text{$p^*$ serves the maximum}\\
&& \text{ number of clients}\\
&\leq \left\lceil \frac{1}{2} \Y^{C'\setminus C'_{p^*}, R' - \{r\}}_q
+ \frac{1}{2\omega} \omega |C'_{p^*}| + \frac{1}{2} |C'_{p^*}| \right\rceil& \text{\cref{rplp1}}\\
&\leq \left\lceil \frac{1}{2} \Y^{C'\setminus C'_{p^*}, R' - \{r\}}_q\right\rceil + |C'_{p^*}|&\\
\end{align*}
\qed
\end{proof}
Similar to \cref{valuef} for value $F^{q,f}$, let $\A^{q', f'}$ be the set of served clients before execution of \stp{q}{f}.
We prove the following inequality about $\Y^{C', R'}_q$ where $R' = R$ and $C' = C \setminus \A^{q', f'}$ .
\begin{align*}
&\Y^{C \setminus \A^{q', f'}, R}_q \\
&= \sum_{c \in C \setminus \A^{q', f'}} \min\left(\sum_{q' \leq q} y_{c,q'}, \frac{1}{\omega} \sum_{r \in R}\ \sum_{p \in \Po(r, \mu \cdot v_r \cdot t): p \cap \B(c,\mu \cdot v'_c \cdot t) \neq \emptyset} x_{r,p,t}\right) & \text{Inequality (\ref{ccy})}\\
& \geq \sum_{c \in C \setminus \A^{q', f'}} \frac{1}{\omega} \sum_{q' \leq q} y_{c,q'} & \text{\cref{rplp2} of \ref{rplp}}\\
&=  \frac{F^{q,f}}{\omega} & \text{\cref{valuef}}
\end{align*}
The above inequality proves that $\left\lceil \frac{1}{2} \Y^{C \setminus \A^{q', f'}, R}_q \right\rceil \geq \left\lceil \frac{F^{q,f}}{2 \cdot \omega} \right\rceil$. Therefore by using Claim \ref{claim:derandomize} where $R' = R$ and $C' = C \setminus \A^{q', f'}$, we serve at least $\frac{F^{q,f}}{2 \cdot \omega}$ clients which finishes proof of the lemma.
\qed
\end{proof}

We prove the following lemma which combined with \cref{relaxation} finishes the proof of \cref{general-thm}.
\begin{lemma}
\label{rounding}
A feasible solution $(x, y)$ to the \ref{rplp} with objective value $\opt$ (the optimum value for \ref{plp}) can be rounded to an integral solution to \smr{} with total latency $O(\mu \cdot \omega) \cdot \opt$.
\end{lemma}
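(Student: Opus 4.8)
The plan is to analyze the algorithm \mra{} run on the feasible solution $(x,y)$ and to charge the latency of each client to the geometric time-stamp at which \mra{} serves it. First I would observe that if a client $c$ is served at Instruction~\ref{client-cover} of some \stp{q}{f}, then $c$ reaches a node on a selected path in $\Po(r,\mu\cdot v_r\cdot q)$ by time $\mu\cdot q$; concatenating all the back-and-forth traversals performed before this step, the actual time at which $c$ is served is at most $\mu$ times the sum $\sum_{q'\le q}\sum_{f\in[\fw]} 2q'$ of all the round-trip lengths executed so far, which is a geometric sum bounded by $O(\mu\cdot\omega\cdot q)$. So it suffices to show that the number of clients still unserved after all steps with time-stamp $\le q$ is comparable to the number of clients that are fractionally unserved by time $q$ in $(x,y)$; then a standard ``geometric latency'' argument converts this into the $O(\mu\cdot\omega)\cdot\opt$ bound.

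The heart of the argument is \cref{coverage}: at each \stp{q}{f} the algorithm can be derandomized to serve at least $\lceil F^{q,f}/(2\omega)\rceil$ new clients, where $F^{q,f}$ is the fractional number of clients served by time $q$ in $(x,y)$ but not yet served by \mra{}. I would use this as follows. Fix $q\in Q$ and let $U_q$ be the set of clients unserved by \mra{} just before the first \stp{q}{1}. For $c\in U_q$, \cref{rplp3} together with the fact that $c$ was not served in any earlier step gives that $\sum_{t\le q} y_{c,t}$ contributes fully, so the $F^{q,f}$ values over the $\fw$ iterations $f=1,\dots,\fw$ start at roughly $\sum_{c\in U_q}\sum_{t\le q}y_{c,t}$. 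Each iteration removes a $\ge \tfrac{1}{2\omega}$ fraction of the current residual fractional mass (since the clients it serves each carried fractional mass $\le 1$), so after $\fw$ iterations the residual fractional mass is at most $(1-\tfrac{1}{2\omega})^{\fw}\le e^{-2}<\tfrac14$ times the mass we started the $q$-block with. Iterating this across the geometrically increasing time-stamps and using \cref{rplp3} to guarantee every client eventually has its full unit of $y$-mass accounted for, I get that each client is served by \mra{} at a time-stamp $q$ with $q=O(1)\cdot(\text{the time }t\text{ at which }(x,y)\text{ finishes serving it})$ in a suitable amortized sense.

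Concretely, the clean way to run the bookkeeping is: let $n_q$ be the number of clients unserved by \mra{} after the $q$-block, and let $\phi_q=\sum_{c\text{ unserved after }q\text{-block}}\sum_{t\le q}y_{c,t}$ be the residual fractional demand; the displayed computation in \cref{coverage} shows $\Y^{C\setminus\A,R}_q\ge F^{q,f}/\omega$, and combining the $\fw$ successive applications of \cref{claim:derandomize} yields $\phi_q\le \tfrac14\big(\phi_{q/2}+\sum_{c\in U_q}\sum_{q/2<t\le q}y_{c,t}\big)$, hence $\phi_q$ shrinks geometrically and all clients are served by the last element of $Q$ (justifying the choice of $|Q|$). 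A client served in the $q$-block incurs latency $O(\mu\omega q)$ as computed above, and summing $\sum_q q\cdot(\text{\# clients served in }q\text{-block})$ against $\opt=\sum_c\sum_t t\,y_{c,t}$ via the geometric decay of $\phi_q$ gives total latency $O(\mu\omega)\cdot\opt$.

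The main obstacle I expect is the amortized charging in the last step: \cref{coverage} only controls the \emph{new} residual fractional mass $F^{q,f}$, which mixes together clients whose $y$-mass was placed at many different earlier times, so one must be careful that a client whose LP latency is small is indeed served by \mra{} at a small time-stamp rather than being repeatedly ``passed over.'' The resolution is that $F^{q,f}$ counts \emph{all} $y_{c,t}$ with $t\le q$ for clients not yet served by \mra{}, so a client with large $\sum_{t\le q}y_{c,t}$ that remains unserved keeps contributing mass at every iteration of the $q$-block and beyond; the geometric shrinkage of $\phi_q$ then forces such a client to be served within $O(1)$ doublings of the smallest $q$ with $\sum_{t\le q}y_{c,t}\ge \tfrac12$, which is what the latency bound needs. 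Getting the constants in this ``geometric latency'' argument right — and in particular checking that the $\fw$ inner iterations buy exactly the constant-factor residual reduction needed — is the only delicate part; everything else is the traversal-length estimate and linearity.
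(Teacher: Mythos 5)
Your overall route is the paper's own: the per-block traversal bound of $O(\mu\cdot\omega\cdot q)$ (the paper's ``Fact 1'', giving $16\mu\omega q$), the observation that the $\fw$ inner executions of \stpn{} together with \cref{coverage} shrink the residual fractional mass by a factor $\left(1-\frac{1}{2\omega}\right)^{\fw}\le\frac14$, the recursion $\phi_q\le\frac14\left(\phi_{q/2}+h_q\right)$ with $h_q=\sum_{c}y_{c,q}$ (this is exactly Claim~\ref{claim:stp}), the conclusion that the extra $\left\lceil\log m\right\rceil/2+1$ time-stamps in $Q$ force every client to be served integrally, and finally an amortized summation of block latencies against $\opt=\sum_s 2^s h_{(2^s)}$.

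The one step you get wrong is precisely the step you flag as delicate. Your stated resolution --- that the geometric shrinkage of $\phi_q$ forces each \emph{individual} client to be served within $O(1)$ doublings of the smallest $q$ with $\sum_{t\le q}y_{c,t}\ge\frac12$ --- is false: $F^{q,\fw+1}$ is only an aggregate bound, and a single client whose entire LP mass sits at time $1$ can be passed over for $\Theta(\log m)$ doublings while the residual keeps shrinking, because fresh mass $h_q$ from other clients re-inflates the residual at every block; this is exactly why $Q$ must extend $\left\lceil\log m\right\rceil/2+1$ steps beyond $\left\lceil\log T\right\rceil$ for the ``everyone is served'' argument. Fortunately no per-client statement is needed: the correct closing is the aggregate charging you mention only in passing, carried out in the paper as Claim~\ref{claim:latency} (an induction over the blocks), or equivalently by summation by parts. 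Writing $N_a$ for the number of clients unserved after the $2^a$-block, Claim~\ref{claim:stp} gives $N_a\le\sum_{s>a}h_{(2^s)}+\sum_{s\le a}4^{-(a-s+1)}h_{(2^s)}$, so the total latency is at most $\sum_a 16\mu\omega\, 2^a\left(N_{a-1}-N_a\right)\le 16\mu\omega\left(m+\sum_a 2^a N_a\right)=O(\mu\omega)\cdot\sum_s 2^s h_{(2^s)}=O(\mu\omega)\cdot\opt$, since the geometric weights $2^a 4^{-(a-s+1)}$ sum to $O(2^s)$ and $m\le\sum_s 2^s h_{(2^s)}$. Replace your per-client claim with this aggregate computation and your plan coincides with the paper's proof.
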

\begin{proof}
For the proof, we show \mra{} serves all the clients with total latency $O(\mu \cdot \omega) \cdot \opt$.
The following defintion is our last definition in this section.

\begin{definition}
\label{def-h}
For any $q \in Q$ we denote $h_q$ to be $\sum_{c \in C} y_{c,q}$. As $y_{c,q}$ denotes how much client $c$ is served in time $q$, $h_q$ can be thought as the total amount of fractional service clients receive at time $q$ in feasible solution $(x, y)$.
\end{definition}

Note that after executing \stp{q}{4 \cdot \omega} we jump to the next element in $Q$, \ie we execute \stp{2\cdot q}{1} and append paths of length $2q \cdot \mu$. Although our algorithm does not execute \stp{q}{\fw + 1}, we use value $F^{q, \fw + 1}$ (see \cref{valuef}) to denote the fractional number of clients that are served in feasible solution $(x,y)$ by time $q$ but are not served by \mra{} with paths of length at most $q \cdot \mu$ (executions of \stp{q}{f} for all $f$ in $[\fw]$). We us the following claim to upper bound $F^{q, \fw + 1}$.
\begin{nclaim}
\label{claim:stp}
For any $q = 2^a \in Q$ we have $F^{q, \fw + 1} \leq \sum_{s = 0}^a \left(\frac{1}{4^{a-s+1}}\right) h_{(2^s)}$.
\end{nclaim}
\begin{proof}
First we prove that for any $q \in Q$ we have $F^{q, \fw + 1} \leq \frac{F^{q,1}}{4}$.
By Lemma \ref{coverage} performing \stp{q}{f} serves at least $\left\lceil \frac{F^{q,f}}{2 \cdot \omega} \right\rceil$ new clients for any $f \in [4 \cdot \omega]$. In other words, $F^{q,f+1}$ is at most $F^{q,f}\left(1 - \frac{1}{2 \cdot \omega}\right)$ which implies that after the execution of $\stp{q}{f}$ we drop the number of non-served clients by at least a factor of $\left(1 - \frac{1}{2 \cdot \omega}\right)$. By iterating $f$ over set $[\fw]$ we conclude that after executing \stp{q}{4 \cdot \omega} the total number of non-served clients is at most $F^{q,1} \cdot \left(1 - \frac{1}{2 \cdot \omega}\right)^{4\cdot \omega}$ which concludes that $F^{q, \fw + 1} \leq \left(F^{q,1} \cdot \left(1 - \frac{1}{2 \cdot \omega}\right)^{4\cdot \omega}\right)$. Simplifying, we get

\begin{align}
\notag F^{q, \fw + 1} \notag & \leq F^{q,1} \cdot \left(1 - \frac{1}{2 \cdot \omega}\right)^{4\omega}& \text{since\ } \left(1 - \frac{1}{x}\right)^{x} \leq \frac{1}{e}\\
\notag & \leq F^{q,1} \cdot \left(\frac{1}{e}\right)^2& \\
\label{one-step} & \leq \frac{1}{4} \cdot F^{q,1}
\end{align}
We prove the claim by induction on $a$. The base case when $q' = 2^0$ is the direct result of Inequality (\ref{one-step}) by noting that $F^{1,1} = h_1$ (from the definition of $F^{1,1}$). Assume that for $q' = 2^{a'}$ the claim holds as the induction hypothesis. The following inequalities prove the claim's statement for $2q' = 2^{a' + 1}$ which finish the proof of the induction and hence the claim.
\begin{align*}
F^{2q', \fw + 1} &\leq \frac{1}{4} F^{2q',1}& \text{Due to Inequality (\ref{one-step}) with $q = 2q'$}\\
 &\leq \frac{1}{4} \left(F^{q', \fw + 1} + h_{2q'}\right) & \text{\cref{valuef} and \cref{def-h}}\\
 &\leq \sum_{0 \leq s \leq a'+1} \left(\frac{1}{4^{(a' + 1)-s+1}}\right) h_{(2^s)} & \text{Induction hypothesis}
\end{align*}
\qed
\end{proof}

Now we prove that \mra{} serves all the clients. Remember that $T$ was the latest time a client can be served which means that $h_t$ is equal to zero for all $t \geq T$. The greatest element of $Q$ is $2^{\left\lceil \log T \right\rceil + \left\lceil \log m \right\rceil / 2  + 1}$. After executing \stp{2^{\left\lceil \log T \right\rceil + \left\lceil \log m \right\rceil / 2 + 1 }}{4 \cdot \omega} we have at most the following number of clients to serve by Claim \ref{claim:stp}.

$$\sum_{0 \leq s \leq \left\lceil \log T \right\rceil + \left\lceil \log m \right\rceil / 2  +1 } \left(\frac{1}{4^{\left\lceil \log T \right\rceil + \left\lceil \log m \right\rceil / 2 -s+2}}\right) h_{(2^s)}$$
We show that this value is strictly less than one.
\begin{align*}
&\sum_{0 \leq s \leq \left\lceil \log T \right\rceil + \left\lceil \log m \right\rceil / 2 } \left(\frac{1}{4^{\left\lceil \log T \right\rceil + \left\lceil \log m \right\rceil / 2 -s+1}}\right) h_{(2^s)} &\\
&= \sum_{0 \leq s \leq \left\lceil \log T \right\rceil} \left(\frac{1}{4^{\left\lceil \log T \right\rceil + \left\lceil \log m \right\rceil / 2 -s+1}}\right) h_{(2^s)} & \text{Noting that $h_t = 0$ for $t \geq T$}\\
&\leq \sum_{0 \leq s \leq \left\lceil \log T \right\rceil} \left(\frac{1}{4^{\left\lceil \log m \right\rceil / 2 + 1}}\right) h_{(2^s)} & \\
&\leq \left(\frac{1}{4^{\left\lceil \log m \right\rceil/2 + 1}}\right) m  & \text{Noting that $\sum_{0 \leq s \leq \left\lceil \log T \right\rceil} h_{(2^s)} = m$}\\
&< 1
\end{align*}
Because \mra{} serves the clients integrally, after executing \stp{2^{\left\lceil \log T \right\rceil + \left\lceil \log m \right\rceil / 2 + 1 }}{4 \cdot \omega}, there will be no non-served clients. Therefore \mra{} serves all the clients before it finishes.

Now we bound the sum of latencies produced by \mra{} on the clients. At each \stp{q}{f}, we add a path for each repairman $r_i$ who can travel its path and come back in time at most $2\cdot q \cdot \mu$. Therefore repairman $r_i$ travels at most $8 \cdot \omega \cdot q \cdot \mu$ units of time for the paths added at \stp{q}{f} for $1 \leq f \leq 4\cdot \omega$. Repairman $r_i$ before starting to travel the path added at \stp{q}{1} has to travel all the paths added before. Traveling previous paths takes the total of $\sum_{i = 0}^{\log q - 1} 8 \cdot \omega \cdot 2^i \cdot \mu \leq 8 \cdot \omega \cdot \mu \cdot q$. Therefore we can assume that all the clients that are served at \stp{q}{f} for $1 \leq f \leq 4\cdot \omega$ have latency at most $16 \cdot \mu \cdot \omega \cdot q$ (Fact 1), $8 \cdot \omega \cdot q \cdot \mu$ for the paths added at $\stp{q}{.}$ and $8 \cdot \omega \cdot q \cdot \mu$ for the paths added before \stp{q}{1}.

From Claim \ref{claim:stp} we know for $q = 2^a$ we have $F^{q, \fw + 1} \leq \sum_{0 \leq s \leq a} \left(\frac{1}{4^{a-s+1}}\right) h_{(2^s)}$. Remember that set $\A^{q, 4 \cdot \omega}$ is the set of all the clients that \mra{} has served in or before executing \stp{q}{4 \cdot \omega}. From the definition of $\A^{q, 4 \cdot \omega}$ we conclude the following inequality.
\begin{align}
\notag |\A^{q, 4 \cdot \omega}| &\geq \sum_{0 \leq s \leq a} h_{(2^s)} - \sum_{0 \leq s \leq a} \left(\frac{1}{4^{a-s+1}}\right) h_{(2^s)} & \text{ \cref{setaa}}\\
\label{frac-cover} &= \sum_{0 \leq s \leq a} \left(\frac{4^{a-s+1} - 1}{4^{a-s+1}}\right) h_{(2^s)}
\end{align}

Note that the size of $\A^{q, 4 \cdot \omega}$ is integral. In fact, Inequality (\ref{frac-cover}) shows \mra{} serves at least $\left\lceil \sum_{0 \leq s \leq a} \left(\frac{4^{a-s+1} - 1}{4^{a-s+1}}\right) h_{(2^s)} \right\rceil$ clients after it executes \stp{q}{\fw}. Note that from Fact 1 each client that is served in or before execution of \stp{q}{\fw} sees a latency at most $16 \cdot \mu \cdot \omega \cdot q$ and each client that is served in the succeeding execution of \stpn{} sees a higher latency. For the sake of explanation and to avoid dealing with the ceiling function, we slightly abuse the notation $|\A^{q, 4 \cdot \omega}|$. We assume that \mra{}  serves exactly $\sum_{0 \leq s \leq a} \left(\frac{4^{a-s+1} - 1}{4^{a-s+1}}\right) h_{(2^s)}$ (possibly fractional) clients after it executes \stp{q}{\fw} and assume that it serves the rest (if there are more) in the succeeding executions of \stpn{}. Note that this way we do not decrease the total latency.
We use the following claim to upper bound the total latency of \mra{}.
\begin{nclaim}
\label{claim:latency}
After \mra{} executes \stp{q}{4 \cdot \omega}, the total latency of $\sum_{0 \leq s \leq a} \left(\frac{4^{a-s+1} - 1}{4^{a-s+1}}\right) h_{(2^s)}$ clients that are served is at most $\sum_{0 \leq s \leq a} 32 \cdot \mu \cdot \omega \cdot 2^s \cdot \left(\frac{4^{a-s+1} - 1}{4^{a-s+1}}\right) h_{(2^s)}$.
\end{nclaim}
\begin{proof}
Let $q = 2^a$, we prove the claim by induction on $a$. For the base case when $a = 0$ Inequality (\ref{frac-cover}) implies \mra{} serves at least $\frac{3}{4} h_1$ clients after it executes \stp{1}{\fw}. From Fact 1 the total latency for these clients is $16 \cdot 2^0 \cdot \omega \cdot \mu \cdot \frac{3}{4} h_0$.

For $a = k'$ we assume after \mra{} executes \stp{2^{k'}}{\fw} it serves $\sum_{0 \leq s \leq k'} \left(\frac{4^{k'-s+1} - 1}{4^{k'-s+1}}\right) h_{(2^s)}$ clients with total latency $\sum_{0 \leq s \leq k'} 32 \cdot \mu \cdot \omega \cdot 2^{s} \cdot \left(\frac{4^{k'-s+1} - 1}{4^{k'-s+1}}\right) h_{(2^s)}$ as the induction hypothesis.

For $a = k' + 1$, Inequality (\ref{frac-cover}) shows \mra{} serves at least $\sum_{0 \leq s \leq k' + 1} \left(\frac{4^{k' + 1-s+1} - 1}{4^{k' + 1-s+1}}\right) h_{(2^s)}$ clients after it executes \stp{2^{k' + 1}}{\fw}. From the induction hypothesis we serve $\sum_{0 \leq s \leq k'} \left(\frac{4^{k'-s+1} - 1}{4^{k'-s+1}}\right) h_{(2^s)}$ clients with latency $\sum_{0 \leq s \leq k'} 32 \cdot \mu \cdot \omega \cdot 2^s \cdot \left(\frac{4^{k'-s+1} - 1}{4^{k'-s+1}}\right) h_{(2^s)}$. Each of the remaining $\sum_{0 \leq s \leq k' + 1} \left(\frac{4^{k' + 1-s+1} - 1}{4^{k' + 1-s+1}}\right) h_{(2^s)} - \sum_{0 \leq s \leq k'} \left(\frac{4^{k'-s+1} - 1}{4^{k'-s+1}}\right) h_{(2^s)}$ clients are served with latency $16 \cdot \mu \cdot \omega \cdot 2^{k' + 1}$ (from Fact 1). We bound the total latency in the following inequalities which finish the proof of the induction and hence the claim.
\begin{align*}
&\sum_{0 \leq s \leq k'} 32 \cdot \mu \cdot \omega \cdot 2^s \left(\frac{4^{k'-s+1} - 1}{4^{k'-s+1}}\right) h_{(2^s)} + \\
&\left(\sum_{0 \leq s \leq k' + 1} \left(\frac{4^{k' + 1-s+1} - 1}{4^{k' + 1-s+1}}\right) h_{(2^s)} - \sum_{0 \leq s \leq k'} \left(\frac{4^{k'-s+1} - 1}{4^{k'-s+1}}\right) h_{(2^s)}\right) \cdot 16 \cdot \mu \cdot \omega \cdot 2^{k' + 1}\\
&= \sum_{0 \leq s \leq k'} 32 \cdot \mu \cdot \omega \cdot 2^s \cdot \left(\frac{4^{k'-s+1} - 1}{4^{k'-s+1}}\right) h_{(2^s)} + \\
&\left(\sum_{0 \leq s \leq k'} \left(\frac{3}{4^{k' + 1-s+1}}\right) h_{(2^s)}\right) \cdot 16 \cdot \mu \cdot \omega \cdot 2^{k' + 1} + \left(\frac{3}{4}\right) h_{\left(2^{k' + 1}\right)} \cdot 16 \cdot \mu \cdot \omega \cdot 2^{k' + 1}\\
&= \sum_{0 \leq s \leq k'} 32 \cdot \mu \cdot \omega \cdot 2^s \cdot \left(\frac{4^{k' + 1-s+1} - 1}{4^{k' + 1-s+1}}\right) h_{(2^s)}  + \left(\frac{3}{4}\right) h_{\left(2^{k' + 1}\right)} \cdot 16 \cdot \mu \cdot \omega \cdot 2^{k' + 1}\\
&\leq \sum_{0 \leq s \leq k' + 1} 32 \cdot \mu \cdot \omega \cdot 2^s \cdot \left(\frac{4^{k' + 1-s+1} - 1}{4^{k' + 1-s+1}}\right) h_{(2^s)}
\end{align*}
\qed
\end{proof}

Note that from \cref{def-h} for $h_q$, we can rewrite the objective value of $(x,y)$ in \ref{rplp} as $\sum_{0 \leq s \leq \left\lceil \log T \right\rceil + \left\lceil \log m \right\rceil / 2  + 1} 2^s h_{(2^s)}$. The total latency of \mra{} is at most $\sum_{0 \leq s \leq \left\lceil \log T \right\rceil + \left\lceil \log m \right\rceil / 2  + 1} 32 \cdot \mu \cdot \omega \cdot 2^s \cdot h_{(2^s)}$ from Claim \ref{claim:latency}. Therefore \mra{} has the total latency at most $32 \cdot \mu \cdot \omega \cdot \opt $.
\qed
\end{proof}

\section{Neighborhood Prize Collecting Steiner Tree}
\label{npcst-sec}
We start this section by proving \cref{npcst-hardness}.
\subsection{Proof of \cref{npcst-hardness}}
\begin{proof}
\label{missingproofs-npcst-hardness}
As described in the introduction section, serving neighborhoods instead of a single node in the covering problems like neighborhood TSP makes the problem significantly harder. We show an $\Omega(\log^{2 - \epsilon})$ hardness result about Neighborhood Steiner Tree (NST) problem defined formally below.
\begin{definition}
An instance $\I = (\M, C)$ of NST consists of a metric space $\M = (V, d)$ where $V$ is the node set and $d:(V \times V) \rightarrow \mathbb{Q}^+$ is the distance function, and a set of clients $C$ where each client $c$ has a neighborhood ball $\B(c,t_c)$ which is the set of all nodes $u$ such that $d(u, c) \leq t_c$. The objective for NST is to find a tree $T$ with minimum cost such that for each client $c \in C$, $T$ serves at least one node of $\B(c,t_c)$.
\end{definition}
Note that as opposed to NPCST, in NST we do not have bound on the resulting tree and profits for the clients but we have to serve all the clients.

We reduce from the Group Steiner Tree (GST) problem. An instance $\I' = (\M', \G')$ of GST consists of a metric space $\M' = (V', d')$ where $V'$ is the node set and $d':(V' \times V') \rightarrow \mathbb{Q}^+$ is the distance function and a collection of groups $\G' = \{G'_1, G'_2, \ldots, G'_k\}$ where each group $G'_i$ is a subset of nodes in $V'$. The objective for GST is to find a tree $T'$ with minimum cost such that serves at least one node from each group.  We use the following result of Halperin and Krauthgamer \cite{HK03} about the group Steiner tree problem.

\begin{theorem}
\label{gst-hardness}
\cite{HK03}: For every fixed $\epsilon > 0$, group Steiner tree cannot be approximated within ratio $\log^{2-\epsilon} k$, unless $NP \subseteq ZTIME(n^{polylog(n)})$; this holds even for trees.
\end{theorem}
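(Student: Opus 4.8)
\textit{Proof proposal.} Theorem~\ref{gst-hardness} is quoted from \cite{HK03}; the plan is to outline the gap‑amplifying reduction that proves it. The goal is to produce, in quasi‑polynomial time, from a hard instance of a PCP‑type problem an instance of Group Steiner Tree \emph{on a tree} together with thresholds $a<b$ satisfying $b/a\ge\log^{2-\epsilon}(\text{size})$, such that distinguishing optimum $\le a$ from optimum $\ge b$ is as hard as deciding an NP‑complete language.

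I would start from the $\Omega(\log n)$‑inapproximability of \textsc{Set Cover} (Lund--Yannakakis, Raz--Safra, Feige), which is essentially Group Steiner Tree on a shallow tree already: put a root $r$, a node $v_S$ for every set $S$ joined to $r$ by a unit‑weight edge, a leaf $\ell_{S,e}$ for every $e\in S$ joined to $v_S$ by a zero‑weight edge, and let the group of a ground element $e$ be $\{\ell_{S,e}:e\in S\}$; then a minimum subtree connecting $r$ to one leaf of each group is exactly a minimum set cover. This gives the base hard instance on a tree and the ``first logarithm''.

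The heart of the argument is a recursive (self‑improving) step: from a Group‑Steiner‑Tree‑on‑trees instance $\mathcal{I}$ with optimality gap $g$, build a taller tree instance $\mathcal{I}'$ whose gap is $g$ plus an additive $\Omega(\log(\text{current size}))$, by hanging a rescaled copy of a Feige‑style partition‑system gadget below every leaf of $\mathcal{I}$, taking the new groups to be ``products'' of the old groups with the gadget's groups, and choosing two blocks of edge weights so that the cost of any feasible subtree splits additively into an ``$\mathcal{I}$‑cost'' and a ``gadget‑cost''. Iterating this step $t$ times produces a tree of depth $\Theta(t)$ and optimality gap $\Theta(t\log(\cdot))$. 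Tuning $t$ (roughly $t=\Theta(\log n/\log\log n)$) keeps the final instance size quasi‑polynomial in the original input while driving the gap to $\log^{2-\epsilon}$ of that size; hence a deterministic $\log^{2-\epsilon}$‑approximation for Group Steiner Tree on trees, composed with this reduction (which is randomized, owing to the partition systems), would decide NP in $ZTIME(n^{\mathrm{polylog}(n)})$. Every instance in the construction is a tree, so the hardness holds for trees, and general graphs follow a fortiori.

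The main obstacle is the soundness of the recursive step: one must show that \emph{every} feasible subtree of $\mathcal{I}'$ of cost below the claimed bound can be decoded into a feasible subtree of $\mathcal{I}$ of cost below $g$ together with a cheap solution to the gadget---equivalently, that a covering subtree cannot amortize its cost by sharing internal nodes among many groups. This needs a careful averaging/charging argument built on the defining property of the partition system (no small subfamily covers many ``rows''), and the bookkeeping that keeps the additive $+\Omega(\log(\cdot))$ intact through $t$ rounds of composition---so that slack does not accumulate multiplicatively across levels, which is exactly what would kill the $\log^{2}$ and leave only a single logarithm---is the delicate part of the whole proof.
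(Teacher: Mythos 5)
This theorem is not proved in the paper at all: it is imported verbatim from \cite{HK03} and used only as a black box inside the reduction that establishes \cref{npcst-hardness} (NST hardness via GST). So for the purposes of this paper a citation is the ``proof,'' and reconstructing Halperin--Krauthgamer's argument is beyond its scope; there is no internal proof to compare your outline against.

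Judged as a standalone proof attempt of the cited result, your sketch has the right architecture --- the Set Cover base case realized as group Steiner tree on a depth-two tree, recursive composition with partition-system gadgets, quasi-polynomial size after $t=\Theta(\log n/\log\log n)$ levels, and a randomized construction explaining why the conclusion is only $NP\subseteq ZTIME(n^{\mathrm{polylog}(n)})$ --- but it has a genuine gap exactly where you flag it: the soundness of the recursive step. Showing that a subtree of the composed instance that covers every (product) group cannot amortize its cost across levels and across groups, i.e.\ that any solution of cost below the threshold can be ``decoded'' level by level into cheap solutions of the constituent instances, is the entire technical content of \cite{HK03}; their analysis is a delicate probabilistic argument over the random partition systems hung at the nodes of the tree, not a straightforward charging/averaging step, and without it the claimed additive $\Omega(\log(\cdot))$ gain per level (hence the second logarithm) is unsubstantiated. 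A secondary issue: deriving a \emph{Las Vegas} (zero-error) quasi-polynomial algorithm for NP from a randomized gap reduction is not automatic --- a reduction with perfect completeness and high-probability soundness naively yields only a one-sided-error conclusion --- so the passage from ``the reduction is randomized'' to the $ZTIME$ hypothesis also needs the specific argument given in \cite{HK03}. As it stands, your proposal is a plausible roadmap of that paper rather than a proof.
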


We show a polynomial-time reduction from any instance $\I'$ of GST to an instance $\I$ of NST where any $\alpha$-approximation for $\I$ results in an $\alpha$-approximation algorithm for $\I'$. We build $\I$ from $\I'$ as follows. Let assume the distance between the farthest pair of points in $V'$ is $M'$. The node set in $\I$ is $V'$ plus $k$ dummy nodes $y_1, y_2, \ldots, y_k$ where each $y_i$ corresponds to $g'_i$ in $\G'$, \ie $V = V' \cup \{y_1, y_2, \ldots, y_k\}$. The distances between nodes of $V$ that correspond to the nodes in $V'$ is the same as $d'$. Each dummy node $y_i$ is connected to all the nodes $u$ whose corresponding node in $V'$ is in $G'_i$ with an edge of length $M'$. The set of clients $C$ in $\I$ is $\{y_1, \ldots, y_k\}$ where each $y_i$ has neighborhood ball with radius $M'$ ($\B(y_i, M')$).

Note that  this reduction can be done in polynomial time. Any solution $T'$ for GST in $\I'$ by taking the corresponding nodes and edges in $\I$ can be transformed to a solution $T$ for NST in $\I$ with the same cost since $T'$ serves at least one client from each group in $\I'$ and hence $T$ serves one node from each neighborhood in $\I$. Now we prove that any solution $T$ for NST in $\I$ can be transformed to a solution $T'$ for GST in $\I'$ with at most the same cost. $T'$ contains each node $v'_i$ whose corresponding node $v_i$ is in $T$ and is not a dummy node. Firstly, if a dummy node $y_i$ is a leaf in $T$ we simply ignore $y_i$ in $T'$ without affecting servicing the groups since $y_i$ is not a node of any group in $\G'$. Secondly, if $T$ uses a dummy node to go from node $v_i$ to a node $v_j$ which costs $2M$, we instead use the shortest path between $v'_i$ and $v'_j$ (the corresponding nodes to $v_i, v_j$) in $V'$ which costs at most $M$. Thus, $T$ can be transformed to a tree $T'$ in $\I'$ using only nodes in $V'$ that serves all the groups, with at most the same cost as $T$.

Because each solution for $\I$ can be transformed to a solution to $\I'$ with at most the same cost and vice versa, we can conclude that the optimum solution for both $\I'$ and $\I$ have the same cost. Moreover, we can get the output of an $\alpha$-approximation algorithm for $\I$ and transform it to a solution to the GST with at most the same cost which is an $\alpha$-approximation algorithm for the GST problem. Thus, from \cref{gst-hardness} we conclude that there is no $O(\log^{2-\epsilon} k)$-approximation algorithm for NST, unless $NP \subseteq ZTIME(n^{polylog(n)})$ for every fixed $\epsilon > 0$.
\qed
\end{proof}

\subsection{Proof of \cref{npcst-thm}}

We use the result of Fakcharoenphol \etal \cite{FRT04} to embed metric space  $\M = (V, d)$ into a distribution of dominating trees with a distortion at most $O(\log n)$ \footnote{The Metric embedding problems are studied well in the theory community. Bartal \cite{B96} first defined probabilistic embeddings and gives the distortion ratio $O(\log^2 n)$. Bartal \cite{B98} improved this ratio to $O(\log n \log \log n)$ by using the ideas inspired from Seymour's work on feedback arc set \cite{S95}. The $O(\log n)$ distortion ratio of Fakcharoenphol \etal \cite{FRT04} is the best possible distortion ratio one can hope for.}. Here, a tree $T$ is dominating $d$ if for any two nodes $u,v \in V$ we have $d_T(u,v) \geq d(u,v)$ where $d_T(u,v)$ is the length of the unique path between $u$ and $v$ in $T$. Fakcharoenphol \etal show that their embedding can be done deterministically into at most $O(n \log n)$ dominating trees where $n$ is the number of the nodes using the same technique first introduced by Charikar \etal \cite{CCG+98}.  More formally, they proved that the metric $d$ can be embedded into a distribution $\pi$ of  $p$ trees $T_1, T_2, \ldots, T_p$ where $p \in O(n \log n)$ such that the following holds.
\begin{equation*}
\forall u,v \in V,\quad  \sum_{i = 1}^p \pi(i) d_{T_i}(u,v) < O(\log n) \cdot d(u,v)
\end{equation*}
In order to be more specific we assume that $O(\log n)$ in the above inequality is $A \log n$ for an appropriate constant value $A$, more formally we use the following inequality.
\begin{equation}
\label{frt}
\forall u,v \in V,\quad  \sum_{i = 1}^p \pi(i) d_{T_i}(u,v) < A \cdot \log n \cdot d(u,v)
\end{equation}
Note that the average distortion in the distances of all the pairs are bounded, however we cannot ensure that there is a single tree in which all the distances are distorted with at most a factor of $ A \cdot \log n$. Moreover, the neighborhood balls of the clients are also not preserved in the trees. Therefore, instead of hitting the neighborhoods of the clients we try to minimize the total service cost of all the clients where the service cost of a client ($c$) in a tree ($H$) is its profit $\theta_c$ times its distance to $H$ divided by $t_c$ (the radius of its neighborhood ball). More formally, We define the following alternative problem to solve for each tree $T_i$ which helps us solve the NPCST problem in the original metric $d$.

\begin{definition}
\label{def-tscst}
Instance $\I = (T = (V,E), r, C, B, B')$ of the Total Service Cost Steiner Tree (TSCST) problem consist of a tree $T$ on the node set $V$ and edges $E$ rooted at $r$ where each client $c \in C$ has profit $\theta_c$ and is assigned to a neighborhood ball with radius $t_c$, length bound $B \in \bbZ^+$, and service cost $B' \in \bbQ^+$. The objective is to find a tree $H$ (subtree of $T$) to serve a subset $C' \subseteq C$ such that $\sum_{c \in C'} \theta_v$ is maximized. The constraints on $H$ are as follow: (1) the total length of $H$ has to be at most $B$ and (2) the service cost of $C'$ which is defined as $\sum_{c \in C'} \theta_c \cdot \frac{d_T(c, H)}{t_c}$ has to be at most $B'$ where $d_T(c, H)$ is the distance of the location of $c$ to its nearest node in $H$ in metric $d_T$.
\end{definition}
Intuitively, in the TSCST as opposed to NPCST, we try to minimize the overall violations on the radius of the neighborhoods instead of having hard capacity on the neighborhoods' radius. 
\begin{lemma}
\label{dp-gen}
For an arbitrary small positive value $\epsilon > 0$, we can design an algorithm whose running time is polynomial in $n$ and $\frac{1}{\epsilon}$ which solves the TSCST problem efficiently while violates service cost bound $B'$ by a factor at most $(1 + \epsilon)$.
\end{lemma}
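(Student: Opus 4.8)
\noindent The plan is to solve TSCST by a bottom-up dynamic program on the tree $T$, obtaining the claimed $(1+\epsilon)$ relaxation entirely by discretizing the (rational) service-cost axis of the DP table; no other constraint will be violated.

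First I would root $T$ at $r$ and record the structural fact that a feasible $H$ is a connected subtree containing $r$ (a rooted Steiner tree), hence \emph{upward closed}: if $v\in H$ then so is $\mathrm{parent}(v)$. Thus $H$ is described by a frontier, and every node outside $H$ lies in a unique maximal hanging subtree $T_w$, with $w$ a child of some $H$-node $a$; since $H\cap T_w=\emptyset$, \emph{every} client located in $T_w$ has the same nearest node of $H$, namely the attachment point $a$, so its service cost is exactly $\tfrac{\theta_c}{t_c}\,d_T(c,a)=\tfrac{\theta_c}{t_c}\bigl(d_T(c,r)-d_T(a,r)\bigr)$ — a fixed number once the depth $d_T(a,r)$ is known. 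I would also binarize $T$ with zero-length edges so each node has at most two children (this changes neither the metric $d_T$ nor the length of any subtree). Assume, as after the scaling in \cref{sec:prel}, that edge lengths are non-negative integers, so the length budget ranges over the $B+1$ values $0,\dots,B$.

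The DP maintains, for each node $v$: (i) $\mathsf{In}_v[\ell][s]$, the maximum profit obtainable from the clients in $T_v$ assuming $v\in H$, using length at most $\ell$ on the $H$-edges inside $T_v$ and (discretized) service cost at most $s$ on those clients — all of whose attachment points already lie in $T_v$; and (ii) for each relevant ancestor depth $D$, $\mathsf{Hang}_{v,D}[s]$, the maximum profit from the clients of $T_v$ assuming $v$ sits in a hanging subtree attached at the depth-$D$ ancestor, where client $c$ is charged $\tfrac{\theta_c}{t_c}(d_T(c,r)-D)$ and the total charge is at most $s$. The table $\mathsf{Hang}_{v,D}$ is an ordinary tree-knapsack, built by a $(\max,+)$-convolution of the (at most two) children's tables for the same $D$, together with the local choice of whether to take the client at $v$. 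The table $\mathsf{In}_v$ is assembled from its children: a child $u$ is either placed in $H$ (convolve with $\mathsf{In}_u$, paying $\ell(v,u)$ into the length axis) or cut, making $T_u$ a hanging subtree attached at $v$ (convolve with $\mathsf{Hang}_{u,\,d_T(v,r)}$), plus the choice about the client at $v$ (cost $0$). The output is $\max\{\mathsf{In}_r[\ell][s]: \ell\le B,\ s\le(1+\epsilon)B'\}$. A node $u$ needs $\mathsf{Hang}_{u,D}$ only for the depths $D$ of its strict ancestors, so there are $O(n^2)$ knapsack subproblems in all.

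The $(1+\epsilon)$ slack and the running time come from discretizing: round each client's service-cost contribution \emph{up} to the nearest multiple of $\delta:=\epsilon B'/|C|$, so the $s$-axis needs only the $O(|C|/\epsilon)$ values $0,\delta,\dots,\lceil(1+\epsilon)B'/\delta\rceil\delta$; then every table has polynomial size and the whole DP runs in time polynomial in $n$ and $1/\epsilon$. For correctness: a solution feasible for the exact bound $B'$ selects at most $|C|$ clients whose true costs sum to $\le B'$, hence whose rounded costs sum to $\le B'+|C|\delta=(1+\epsilon)B'$, so it is a legal DP state and the DP returns profit at least the optimum; conversely the subtree the DP outputs has length $\le B$ and true service cost at most its rounded service cost, which is $\le(1+\epsilon)B'$. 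I expect the main obstacle to be exactly this bookkeeping: a client's cost is a \emph{global} quantity (its distance to $H$), so one must verify that the attachment-depth parametrization records it faithfully under every way of forming $H$, and that the additive rounding, summed over up to $|C|$ clients and carried through the tree, inflates only the $B'$-budget, and only by a factor $1+\epsilon$, leaving the length bound $B$, the connectivity of $H$, and the profit objective untouched.
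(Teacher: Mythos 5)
Your proposal is correct and takes essentially the same route as the paper: discretize the service-cost axis in units of $\epsilon B'/|C|$ (you round costs up and relax the budget to $(1+\epsilon)B'$, the paper rounds them down and lets the returned solution overshoot $B'$ by at most $\epsilon B' = X\cdot|C|$ --- equivalent guarantees) and then solve the discretized problem exactly by a pseudo-polynomial dynamic program over the tree, exploiting the fact that a client cut off from $H$ is served from the attachment point of its maximal hanging subtree. The remaining differences are only organizational --- the paper indexes subproblems by (node, length budget, scaled cost budget) and handles an excluded child subtree by a knapsack at the cut node, whereas you propagate \emph{Hang} tables parametrized by the attachment ancestor's depth --- and both arguments, yours and the paper's, rest on the same standing assumption that edge lengths and hence $B$ are integral and polynomially bounded so that the length axis of the table stays polynomial.
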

Note that the profits in the NPCST problem ($\theta$) come from the dual variables of our LP.  We are interested in the service costs that are integers and polynomially bounded by the size of input in order to design an efficient algorithm to solve TSCST. Therefore, we define value $X$ to be $\frac{B' \cdot \epsilon}{|C|}$ and scale and round the service cost for each client $c$ to value $\floor{\frac{\theta_c \cdot d_T(v, H)}{t_c \cdot X}}$. More formally, we define the following scaled version of the TSCST problem in which all the service costs are positive integers as opposed to rational numbers.

\begin{definition}
\label{STSCST}
Instance $\hat{\I} = (T = (V,E),r, C, B, \hbp)$ of Scaled Total Service Cost Steiner Tree (STSCST) problem consist of tree $T$ rooted at $r$ on the node set $V$ and edges $E$, set of clients $C$ where each client $c \in C$ has profit $\theta_c \in \bbQ^+$, length bound $B \in \bbZ^+$, and service cost bound $\hbp \in \bbZ^+$. The objective is to find a tree (subtree of $T$) ($H$) rooted at $r$ to serve a subset $C' \subseteq C$ such that $\sum_{c \in C'} \theta_c$ is maximized. The constraints on $H$ are as follow: (1) the total length of $H$ has to be at most $B$ and (2) the service cost of $C'$ which is defined as $\sum_{c \in C'} \floor{\frac{\theta_c \cdot d_T(c, H)}{t_c \cdot X}}$ has to be at most $\hbp$ where $d_T(c, H)$ is the distance of location of $c$ to its nearest node in $H$ in metric $d_T$.
\end{definition}
In order to prove \cref{dp-gen}, for a given instance $\I = (T = (V,E),r, C, B, \hbp)$ of TSCST we create a corresponding instance ($\hat{\I} = (T = (V,E),r, C, B, \hbp$) of STSCST where $\hbp = \floor{\frac{B}{X}}$.
Clearly every solution to $\I$ is a solution to $\I'$ as we round down all the service costs. Therefore, the optimum value of $\I'$ is at least the optimum value $\I$. In the following lemma we show that STSCST can be solved efficiently. We find an optimal solution to $\I'$ and announce it a as a solution to TSCST, however the solution might violate the service cost bound $\hbp$. The violation is at most $X \cdot |C|$ since we have at most $|C|$ clients and for each client the scaled value of service cost can be $X$. As $X \cdot |C|$ is equal to $B' \cdot \epsilon$, the solution we find in $\I'$ collects the maximum profit and violate the service cost by at most a factor of $\epsilon$ in $\I$ and hence the proof of \cref{dp-gen} follows.

\begin{lemma}
\label{dp}
There is an algorithm with running time $O(n\cdot B^2 \cdot \hbp^2 + n \cdot |C| \cdot B \cdot \hbp^3)$ and solves the STSCST problem efficiently.
\end{lemma}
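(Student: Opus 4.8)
The plan is to solve STSCST by a bottom-up dynamic program on $T$, after first turning $T$ into an equivalent binary tree in the usual way: expand each node of degree $\delta$ into a path of $\delta-1$ zero-length edges, leaving every client at its original node. This keeps the number of nodes $O(n)$ and, since the new edges are free, does not change the optimum. The structural fact that makes a DP possible is that any feasible $H$ is a connected subtree containing $r$, so for every node $v$ we have $H\cap\mathrm{subtree}(v)=\emptyset$ exactly when $v\notin H$, and otherwise $H\cap\mathrm{subtree}(v)$ is a connected subtree rooted at $v$. Hence the nodes outside $H$ fall into ``dangling'' subtrees, each joined to $H$ by a single edge, and for a client $c$ located inside such a subtree, $d_T(c,H)$ is exactly the tree-distance from $c$ to the attachment point in $H$ — a quantity that is fixed once we know which edge attaches that dangling subtree.

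I would maintain, at each node $v$, two tables. An \emph{in} table $g^{\mathrm{in}}_v[b][s]$ records the maximum total profit obtainable from the clients in $\mathrm{subtree}(v)$ under the assumption $v\in H$, when $b\le B$ units of length are spent on the $H$-edges strictly inside $\mathrm{subtree}(v)$ and $s\le\hbp$ units of the scaled service budget are spent by the served clients of $\mathrm{subtree}(v)$. An \emph{out} table $g^{\mathrm{out}}_v[d][s]$ records the same quantity under the assumption $v\notin H$ (so no $H$-edge lies below $v$), where $d$ is the distance from $v$ up to the nearest ancestor lying in $H$. Leaves are initialized by a knapsack over the clients at the leaf. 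At an internal node $v$ with children $v_1,v_2$ joined by edges of length $\ell_1,\ell_2$, the \emph{in} table is obtained by letting each child either be pulled into $H$ (contributing $g^{\mathrm{in}}_{v_i}$ shifted by $\ell_i$ in the length coordinate) or be left dangling at distance $\ell_i$ (contributing the $d=\ell_i$ slice of $g^{\mathrm{out}}_{v_i}$), taking a $(\max,+)$-convolution of the two children's contributions in the $(b,s)$ coordinates, and finally folding in the clients located at $v$ (which are served at distance $0$, hence at scaled cost $0$); the \emph{out} table at $v$ is obtained by convolving the $d+\ell_1$ and $d+\ell_2$ slices of the children's \emph{out} tables with the knapsack of the clients at $v$ served at distance $d$. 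The optimum is then $\max_{b\le B,\,s\le\hbp} g^{\mathrm{in}}_r[b][s]$.

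The one point that needs care — and the main obstacle — is that $d$ is a sum of edge lengths and is not a priori polynomially bounded, so the \emph{out} tables cannot be indexed naively by $d$. The way around this is to observe that a client $c$ at distance $D$ from $H$ has scaled service cost $\floor{\theta_c D/(t_c X)}$, and a client whose cost exceeds $\hbp$ can never belong to a feasible served set; hence as $d$ grows each client's cost passes through at most $\hbp+1$ relevant thresholds, so for each fixed $s$ the function $d\mapsto g^{\mathrm{out}}_v[d][s]$ is a non-increasing step function with only polynomially many breakpoints, which I keep explicitly (equivalently, one rounds $d$ directly into the $\hbp$ available service-cost units, at the price of an extra additive slack of $|C|$ on $B'$, absorbed by using $X/2$ in place of $X$). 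With the tables represented this way, every table has $O(B\hbp)$ entries per relevant distance value, a child-merge is an $O(B^2\hbp^2)$ $(\max,+)$-convolution, and folding the clients at a node into a table is a knapsack-style update; summing these costs over the $O(n)$ nodes, and accounting for how many distinct distances and service-cost thresholds are touched, yields the claimed $O(nB^2\hbp^2 + n|C|B\hbp^3)$ running time.
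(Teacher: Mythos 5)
Your DP is correct, but it handles the crucial point --- clients whose subtree contains no node of $H$ --- differently from the paper. You carry an explicit ``out'' state $g^{\mathrm{out}}_v[d][s]$ parameterized by the distance $d$ to the nearest ancestor in $H$, and since $d$ is not polynomially bounded you need the extra breakpoint machinery: for fixed $s$ the function is a non-increasing step function whose value can only change when some client's scaled cost $\floor{\theta_c (d_T(c,v)+d)/(t_c X)}$ crosses an integer $\le \hbp+1$, giving $O(|C|\cdot\hbp)$ relevant breakpoints per node; that argument is sound and keeps the algorithm exact and within the claimed time bound. The paper avoids any distance-indexed state altogether: its subproblem $DP(v,B_v,\hbp_v)$ always assumes the tree contains the subproblem root $v$, and it does a case analysis on which children of $v$ lie in the tree; clients sitting in a child subtree that the tree does not enter are served at their (fixed, precomputable) distance to $v$, so they are folded in by a plain knapsack with budget split off from $\hbp_v$, and the only enumeration is over the split of the two budgets between the children. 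So the two solutions share the same skeleton (binarized tree, bottom-up DP over the length and service budgets), but your version needs the step-function compression that the paper's formulation sidesteps, while the paper's version needs the knapsack subroutine and the four-case analysis; both land at essentially the same running time. One caveat: your parenthetical fallback of rounding $d$ into service-cost units ``at the price of an extra additive slack of $|C|$ on $B'$'' would no longer solve STSCST exactly, whereas \cref{dp} is invoked inside \cref{dp-gen} precisely as an exact solver (the single $(1+\epsilon)$ slack is already accounted for there), so you should rely on the exact breakpoint representation rather than on a second rounding.
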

\begin{proof}
Note that since we assume in the input of NPCST all the distances are polynomially bounded, value $B$ is also polynomially bounded. Value $\hbp$ is at most $|C|\frac{1}{\epsilon}$. Therefore, the running time of our algorithm is polynomial in terms of inputs.

We replace every node with at least $3$ children by a complete binary tree of edges of cost zero (except for the last edge
entering a leaf that has the weight) thus imposing at most $2$ children for any node. Therefore, we assume that every node in $T$ has at most two children. The {\em level number} of a node $v$ in $T$ is the number of edges in the unique path from $v$ to $r$. We denote by $sub(v)$ the subtree containing node $v$ as its root and all its children.

We design a Dynamic Programming (DP) to solve the problem. The subproblems to our DP algorithm are defined for the subtrees rooted at each node and all possible combination of length bounds and service costs.
\begin{definition}
\label{def-sub}
For each node $v$, value $B_v \left(0 \leq B_v \leq B\right)$, and value $\hbp_v (0 \leq \hbp_v \leq \hbp)$, sub-problem ($DP(v, B_v, \hbp_v)$) to our dynamic programming is instance $\left(sub(v), r, C_v, B_,\hbp_v\right)$ of STSCST where $C_v$ is the set of all clients whose starting locations are in $sub(v)$.
\end{definition}

We call that sub-problem $DP(u, B_u, \hbp_u)$ is smaller than $DP(v, B_v, \hbp_v)$ if $u$ is a children of $v$. The following claim shows that each sub-problem $DP(v, B_v, \hbp_v)$ can be solved efficiently using only smaller sub-problems.

\begin{nclaim}
\label{dp-sub}
Each subproblem $DP(v, B_v, \hbp_v)$ can be solved in polynomial time given we can find an optimum solution for all its smaller sub-problems.
\end{nclaim}
\begin{proof}
In order to solve $DP(v, B_v, \hbp_v)$ we need to use an algorithm which solves the Knapsack problem. In the Knapsack problem we are given a set of items $i_1, i_2, \ldots, i_q$ where each item $i_j$ has weight $w(i_j) \in \bbZ^+$ and value $t(i_j)\in \bbQ^+$ and a knapsack which can hold items with total weight at most $W\in \bbZ^+$. The objective is to pack a subset of the items into the knapsack so that their total value is maximized and their total weight is at most $W\in \bbZ^+$. The following is a well-known result about the Knapsack problem.

\begin{theorem}
\label{knapsack}
(\cite{IK75}) There is an algorithm for the Knapsack problem which finds the packing of items with maximum total value such that their total weight is at most $W$. The algorithm running time is $O(W \cdot q)$.
\end{theorem}
Note that the objective for sub-problem $DP(v,B_v, \hbp_v)$ is to find a tree ($H_v$), a subtree of $sub(v)$, and a subset of clients ($C'_v$) such that $\sum_{c \in C'_v} \theta_c$ is maximized with constraints that the weight of $H_v$ is at most $B_v$ and $\sum_{c \in C'_v} \theta_c \cdot \frac{d_T(c, H_v)}{t_c} \leq \hbp_v$. Consider an optimum tree $H^*_v$, there are only four different cases regarding if $H^*v$ contains children of $v$ or not. More formally, $H^*_v$ lies in one of the following cases: (Case 1) if $H^*_v$ contains none of the children of $v$, (Case 2) if $v$ has only one children and $H^*_v$ contains it, (Case 3) if $v$ has two children and $H^*_v$ contains only one of them, and finally (Case 4) if $v$ has two children and $H^*_v$ contains both of them. In the following we show how to find an optimum tree for each case. We select the one which collects the most profit as the solution of our algorithm. Because $H^*_v$ lies in one of the cases our solution is an optimum tree and the claim follows.

{\bf Case 1:} If $H^*_v$ does not contain any children of $v$ then resulting tree $H_v$ also contain none of the children of $v$ and hence $H_v$ contains only $v$. Therefore for any client $c \in sub(v)$ we have $d_T(c, H_v) = d_T(c, v)$ and hence the service cost of $c$ is $\theta_c \cdot \frac{d_T(c, v)}{t_c}$. Thus, the problem is only finding a subset of clients with maximum profit such that their total service cost is at most $\hbp_v$. In order to solve this problem we define the corresponding instance of the knapsack problem as follows. For each client $c \in C_v$ we create an item $c$ with value $t(c) = \theta_c$ and weight $w(c) = \theta_c \cdot d_T(c, v)$ in the corresponding knapsack instance. The weight bound $W$ is equal to $\hbp_v$. We use \cref{knapsack} to find a subset of items (clients) with maximum total profit where their total weight (service cost) is at most $\hbp_v$. The running time of the our algorithm is $O(\hbp_v \cdot |C|)$ as there are at most $|C|$ nodes in $sub(v)$.

{\bf Case 2:} Let $v_1$ be the only children of $v$. Because $H^*_v$ contains $v_1$, the resulting tree $H_v$ must contain $v_1$ and the length bound $B_v$ has to be at least $d_T(v, v_1)$, otherwise in this case the resulting tree is an empty tree with zero profit. Therefore an optimum solution for this case is the tree resulting from adding $v$ and its connecting edge $e(v, v_1)$ to an optimum solution of $DP(sub(v_1), B_v - d_T(v, v_1), \hbp_v)$. The set of served clients by the solution of $DP(sub(v_1), B_v - d_T(v, v_1), \hbp_v)$ union all the clients reside in $v$. Note that as $DP(sub(v_1), B_v - d_T(v, v_1), \hbp_v)$ is a smaller sub-problem we can assume that we have an optimum solution for this sub-problem. The running time for this case is $O(1)$.

{\bf Case 3:} Let $v_1, v_2$ be the two children of $v$. We explain for the case when $H^*_v$ contains $v_1$ but not $v_2$, the alternative is symmetric. We build $H_v$ as follows. Similar to Case 2, $B_v$ has to be at lease $d_T(v, v_1)$ in order to compensate for connecting $v_1$ to $v$. In this case we have the total budget of $\hbp_v$ for the service costs of both the clients in $sub(v_1)$ and clients in $sub(v_2)$. If we spend $\hbp_{v_1}$ for the service costs of the clients in $sub(v_1)$ then the budget for the service costs of clients in $sub(v_2)$ is $\hbp_{v_2} = \hbp_v - \hbp_{v_1}$. Note that the service cost for the clients reside in $v$ is zero therefore we serve them for free. Given $\hbp_{v_1}$ we can find an optimum tree and hence the optimum set of served clients in $sub(v_1)$ by taking the solution of the smaller sub-problem $DP(sub(v_1), B_v - d_T(v, v_1), \hbp_{v_1}$. Note that the nearest node in $H_v$ to the clients in $sub(v_2)$ is actually $v$ as we know $H_v$ does not contain $v_2$. Therefore given $\hbp_{v_2}$, we solve the corresponding instance of the Knapsack problem for the nodes in $sub(v_2)$ with weight bound $\hbp_{v_2}$ similar to Case 1. We iterate over all possible values of $\hbp_{v_1}$ from the set $\{0, 1, \ldots, \hbp_v\}$ and take the value for which we can collect the most profit from the clients in $sub(v_1)$ and clients in $sub(v_2)$. The running time for this case is $O(\hbp^2_v \cdot |C|)$ because we a loop of $O(\hbp_v)$ to iterate over the values of $\hbp_{v_1}$ where for each iteration we have to solve an instance of the Knapsack problem with $O(\hbp_v \cdot |C|)$.

{\bf Case 4:} Let $v_1, v_2$ be the two children of $v$. Similar to Case 2, $B_v$ has to be at lease $d_T(v, v_1) + d_T(v, v_2)$ in order to compensate for connecting both $v_1$ and $v_2$ to $v$ in the resulting tree $H_v$. Here, we have total service cost bound $\hbp_v$ and length bound $B_v - d_T(v, v_1) + d_T(v, v_2)$ to spend in the nodes in $sub(v_1)$ and $sub(v_2)$. If we spend $\hbp_{v_1}$ for the service costs for the clients in $sub(v_1)$ and $B_{v_1}$ for the length bound in $sub(v_1)$ we can spend at most $\hbp_v - \hbp_{v_1}$ for the service costs and $B_v - \hbp_{v_1} - d_T(v, v_1) - d_T(v, v_2)$ for the length in $sub(v_2)$. Given $\hbp_{v_1}$ and $B_{v_1}$ we can find an optimum tree and hence the optimum set of served clients in $sub(v_1)$ by taking the solution of smaller sub-problem $DP(sub(v_1), B_{v_1}, \hbp_{v_1})$ and find an optimum tree and hence the optimum set of served clients in $sub(v_2)$ by taking the solution of smaller sub-problem $DP(sub(v_2), B_v - B_{v_1} - d_T(v, v_1) - d_T(v, v_2), \hbp_v - \hbp_{v_1})$. We iterate over all possible values of $B_{v_1}$ from the set $\{0, \ldots, B_v - d_T(v, v_1) - d_T(v, v_2)\}$ and all possible values of $\hbp_{v_1}$ from the set $\{0, \ldots, \hbp_v \}$ and take the combination which collects the maximum profit from the clients in $sub(v_1)$ and $sub(v_2)$. The running time for this case is $O(B_v \cdot \hbp_v)$ as for each combination of values for $B_{v_1}$ and $\hbp_{v_1}$ we need $O(1)$ operations and there are at most $O(B_v \cdot \hbp_v)$ combinations.

Note that the running time of our algorithm is the sum of running time of all the cases and hence $O(B_v \cdot \hbp_v + \hbp^2_v \cdot |C|)$.
\qed
\end{proof}
Note that if node $u$ is a leaf then sub-problem $DP(u, B_u, \hbp_u)$ does not depend on any smaller sub-problem and can be solved efficiently by Claim \ref{dp-sub}. In order to avoid the dependency on the smaller sub-problems in Claim \ref{dp-sub} we solve the smaller sub-problems first. More formally, we run the following algorithm.
\begin{enumerate}
\item For each node $u\in V$ in the non-increasing order of the level number do:
\begin{enumerate}
\item For each value $B_u$ in $\{0, \ldots, B\}$ do:
\begin{enumerate}
\item For each value $\hbp_u$ in $\{0, \ldots, \hbp\}$ do:
\begin{enumerate}
\item Solve sub-problem $DP(sub(u), B_u, \hbp_u)$ using Claim \ref{dp-sub} and store the solution in the memory.
\end{enumerate}
\end{enumerate}
\end{enumerate}
\end{enumerate}
Note that in the above algorithm when we use Claim \ref{dp-sub} to solve sub-problem $DP(sub(u), B_u, \hbp_u)$ all its sub-problems are solved beforehand. The solution to the STSCST problem is the solution for the sub-problem $DP(r, B, \hbp)$ by its definition (see \cref{def-sub}). As we have the total of $O(n \cdot B \cdot \hbp)$ sub-problem and each of them take $O(B_v \cdot \hbp_v + \hbp^2_v \cdot |C|))$ time to solve by Claim \ref{dp-sub}, the total running time of our algorithm is $O(n\cdot B^2 \cdot \hbp^2 + n \cdot |C| \cdot B \cdot \hbp^3)$.
\qed
\end{proof}

In the following we show how \cref{dp-gen} helps us solve the NPCST problem and finishes the prove of \ref{npcst-thm}.
Note that the maximum possible value of $B'$ is $\sum_{c \in C} \frac{M}{\min_c t_c} \cdot \theta_v$ where $M$ is the distance between the farthest pair of nodes over all the trees in the set $\{T_1, T_2, \ldots, T_p\}$.
In order to solve instance $\I = (V, d, r, C, L)$ of the NPCST problem, for each tree $T_i$ in distribution $\pi$ and each value $2^j$ where $0 \leq j \leq \left\lceil \log \left(\sum_{c \in C} \frac{M}{\min_c t_c} \cdot \theta_v\right) \right\rceil$, we define  a corresponding instance $\I_{i,j} = (T_i, r, C, 4 \cdot A \cdot \log n \cdot L , 2^j)$ of TSCST as follows. The nodes set, root, and clients in $\I_{i,j}$ are the same as $\I$ but instead of distance function $d$ we have the distance function $d_{T_i}$.  Let $H_{i,j}$ be the solution for instance $\I_{i,j}$ using \cref{dp-gen}. We transform tree $H_{i,j}$ to the original metric ($\M(V, d)$) to obtain tree $\hat{H}_{i,j}$. Lets $C_{i,j}$ be the set of all served clients by $\hat{H}_{i,j}$ in $d$ with violation $16 \cdot A \cdot \log n $ in the neighborhoods, \ie $C_{i,j}$ contains each client $c \in C$ such that the neighborhood ball $\B(c, 16 \cdot A \cdot \log n \cdot t_c)$ is hit by $\hat{H}_{i,j}$. Our solution to the NPCST problem is tree $\hat{H}_{i,j}$ for $i \in [p]$ and $j \in [A \log n]$ whose $\sum_{c \in C_{i,j}} \theta_c$ is maximized.

In the following we show that $\hat{H}_{i,j}$ indeed serves $\frac{1}{2} \opt$. Let tree $T^*$ be an optimum solution to the NPCST problem in the original metric $d$ which has maximum length $L$ and it hits $\B$-ball of a subset ($C^* \subseteq C$) of the clients such that their total profit is optimum ($\opt = \sum_{v \in C^*} \theta_v$). Similar to \cref{def-tscst}, we define the service cost of $T^*$ for the clients in $C^*$ to be $\sum_{c \in C^*} \theta_v \cdot \frac{d(v, T^*)}{t_c}$. Note that as $T^*$ touches the neighborhood ball ($\B(c,t_c)$) of each client $c$ that it serves, $\sum_{c \in C^*} \theta_v \cdot \frac{d(v, T^*)}{t_c}$ can be at most $\opt$ (Fact 1).

The expected service cost of $T^*$ in the distribution $\pi$ of the trees is at most $A \cdot \log n \cdot \opt$ by the following equations.
\begin{align*}
\sum_{i = 1}^p \pi(i) \sum_{c \in C^*}  \frac{d_{T_i}(T^*,c)}{t_c} \cdot \theta_c &= \sum_{c \in C^*} \sum_{i = 1}^p \pi(i) \frac{d_{T_i}(T^*,c)}{t_c} \cdot \theta_c & \text{Linearity of the expectations}\\
&< A \cdot \log n \cdot \sum_{c \in C^*} \frac{d(T^*,c)}{t_c} \theta_c & \text{By Inequality (\ref{frt})}\\
&= A \cdot \log n \cdot \opt
\end{align*}
Similar to the above we can conclude that the expected length of $T^*$ in distribution $\pi$ of the trees is $A \cdot \log n \cdot L$.
By Markov's inequality, we can conclude that there are at least $\frac{3}{4} \cdot p$ trees in distribution $\pi$ in which length of $T^*$ is at most $4 \cdot A \cdot \log n \cdot L$. Similarly, there are at least $\frac{3}{4}\cdot p$ (possibly different) trees in distribution $\pi$ in which the service cost of $T^*$ is at most $4 \cdot A \cdot \log n \cdot \opt$.  The intersection of these two sets of trees contains at least $\frac{1}{2} p$ trees.  Therefore, among $T_1, T_2, \ldots, T_p$ there exists at least one tree $T_k$ whose corresponding instance $(T_k = (V,E), r, C, 4 \cdot A \cdot \log n \cdot L, 4 \cdot A \cdot \log n \cdot \opt)$ of TSCST (see \cref{def-tscst}) has a solution which collects at least $\opt$ profits. Hence, the optimal value of instance $\I_{k, j} = \left\lceil \log(4 \cdot A \cdot \log n \cdot \opt) \right\rceil$ of TSCST that we run in our algorithm is at least $\opt$ (Fact 2).

First note that the total number of instances $\I_{i,j}$s, we solve for TSCST is at most polynomial in terms of input as $i \in O(n \log n)$ and $0 \leq j \leq \left\lceil \log \left(\sum_{c \in C} \frac{M}{\min_c t_c} \cdot \theta_v\right) \right\rceil$. Consider our algorithm when it solves instance $\I_{i,j}$ where $i = k$ and $j = \left\lceil \log (4 \cdot A \cdot \log n \cdot \opt) \right\rceil$ which are taken from Fact 2. We prove that the total profit of the clients in $C_{i,j}$ served by tree $\hat{H}_{i,j}$ is at least $\frac{1}{2} \opt$ which completes the proof of \cref{npcst-thm}.
From Fact 2 we know that $\hat{H}_{i,j}$ collects $\opt$ profits with total service cost at most $8 \cdot A \cdot \log n \cdot \opt$. Now consider set $C'_{i,j}$ of all the clients $c$ whose $\frac{d\left(c, \hat{H}_{i,j}\right)}{t_c}$ is at most $16 \cdot A \log n$, from Markov's inequality we can conclude $\sum_{c \in C'_{i,j}} \theta_c$ is at least $\frac{1}{2} \sum_{c \in C_{i,j}} \theta_v = \frac{1}{2} \opt$. Therefore, $\hat{H}_{i,j}$ whose length is at most $4 \cdot A \cdot \log n \cdot L$ collects at least $\frac{\opt}{2}$ profits while violates the neighborhoods by factor $16 \cdot A \log n$.

\subsection{Description of an $\left(O(1), O(1), O(\log n)\right)$-Approximation Algorithm for the NPCST Problem}
\label{ref-alg}
Let $B_c$ denotes $\B(c,t_c)$. We ignore clients $c$ for which  $d(r,B_c)>L$ (these nodes are not covered in any optimal solution). Let $\theta_{max}$ denote the maximum profit of the clients among the remaining clients. Consider the following natural LP-relaxation. We have edge variables $\{x_e\}$ indicating if $e$ is in the resulting tree or not, and a variable $z_c$, $0\leq z_c\leq 1$, for every client $c$ indicating if $B_c$ is touched by the tree. The objective function of the LP is to maximize  $\sum_{c\in C} \theta_c\cdot z_c$, and the constraints are as follow. The first set of constraints are $\sum_{e \in \delta(S)} x_e\geq z_c$, for every node $c$ and set $S \subseteq V$ containing $B_c$ and not containing $r$ where $\delta(S)$ is the set of all the edges that have exactly one endpoint in $S$. The second set of constraints are $\sum_e d_e \cdot x_e\leq L$ ensuring the cost constraint of the tree. 

This LP can be solved efficiently by noting that there is a separation oracle for the first set of constraints in which we contract each set $B_c$ and use min-cut max-flow theorem between the contracted super node and root $r$. Let $(x^*,z^*)$ be an optimal solution and $K = \sum_c z_c$ . We can ignore all nodes $c$'s such that $z_c\leq 1/K^2$; this decreases the objective value by at most $\theta_{max}/K$ which is less than or equal to $\opt/K$. The remaining $z_c$'s can be bucketed into $O(log K)$ groups, where the $z_c$ values in each group are within a factor of $2$ of each other. The contribution from one of these buckets is at least $\OPT/\log K$, and we focus on such a bucket $C'\subseteq C$. Now perform a facility-location style clustering of the $B_c$ balls for the clients in $C'$ (this part is essentially the same as what we did for \cref{max-mr}). repeatedly pick the client with smallest neighborhood radius $t_c$, and remove all terminals $w$ such that $d(c,w)\leq 10\max(t_c,t_w)=10t_w$ (the constant $10$ is somewhat arbitrary), and set $nbr(w)=c$. Now, we can pretend that $w$ is co-located with $c$, and that $z_w=z_c$ and hence the first set of constraints still holds; this loses only a constant factor in the profit (since $z_c$ and $z_w$ are within a factor of $2$ of each other) and gives a constant-factor violation in the radius. We show the subset of $C'$ that we pick as the cluster centers by $C''$. Now, we contract the $B_c$'s for the cluster centers, and set the profit of the contracted node to be $\sum_{w\in C':nbr(w)=c\ or\ w=c}\theta_w$. 

Note that $(x^*,1-z^*)$ induces a solution to the Prize Collecting Steiner Tree (PCST) problem  instance defined by the contracted nodes. We can build a tree $T$ of cost $O(L)$ such that the expected profit of the contracted nodes connected by $T$ is in fact at least $\sum_{c \in C''}\theta_c \cdot z_c$ which is in $\Omega(\opt/\log k)$. This can be done by using results of Bang-Jensen et al. \cite{BFJ95}, which gives another way of obtaining a $2$-approximation algorithm for Prize Collecting Steiner Tree (PCST). The result of Bang-Jensen et al. \cite{BFJ95} implies that from $x^*$ we can build a convex combination of trees such that each client $c$ appears in $z^*(c)$ fractional number of the trees.  Note that $\sum_{c \in C''}\theta_c \cdot z_c = \Omega(\opt/\log k)$ and the average cost of the trees is $\sum_e x^*_e \leq L$. By Markov inequalities we can conclude that at least $\frac{3}{4}$ of the trees have cost at most $4\cdot L$. Similarly, by Markov Inequalities at least $\frac{3}{4}$ of the trees collect at least $\frac{1}{4} \Omega(\opt/\log k)$ amount of profit. Therefore we conclude that at least half of the trees have length at most $4 \cdot L$ and collect at least $\frac{1}{4} \Omega(\opt/\log k)$ amount of profit. Let $T$ be one of them.

Now we take $T$ and uncontract the super nodes. We need to give a connected graph, which can be done  by connecting all the nodes of $T$ incident to $B_c$ to $c$, where $c$ is a cluster center. Since any two cluster centers are far apart, the extra cost incurred in this can be charged to the edges of $T$ incident to the contracted node corresponding to $B_c$.

\section{Proof of \cref{npcst-2d}}
\label{sec:euclidean}
We prove formally \cref{npcst-2d} in the $2$-dimensional Euclidean space. We believe that our method can be generalized to the $D$-dimensional space; the changes needed to do so are brought at the end of this section. We ignore optimizing the constants to keep the algorithm and its analyze easier to explain.

Let us assume that the radius of the biggest neighborhood is $\frac{M}{2}$. We tile the plane with regular hexagons of side length $M$ so that every point in the plane is covered with exactly one tile and root node $r$ is at the center of a hexagon (as shown in \cref{clarfig}).

\begin{figure}[!h]
  \centering
      \includegraphics[scale=0.7]{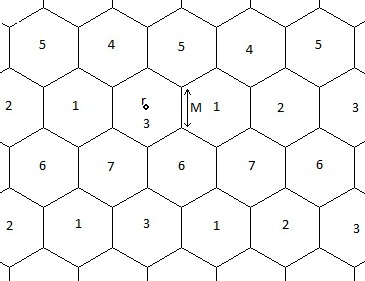}
  \caption{Tiling the plane with regular hexagons. The digits in each hexagon denotes its color. Root node $r$ is centered at one of the tiles and the length of the edges of the hexagons is $M$.}
  \label{clarfig}
\end{figure}

We color the hexagons with $7$ colors such that no two neighboring hexagons get the same color. This coloring can be done by coloring each non-colored hexagon with one of the $7$ colors that is not used in a neighboring colored hexagon. The coloring is possible for all the non-colored hexagon since it has $6$ neighboring hexagon and there are $7$ colors  (see \cref{clarfig}).

We construct an auxiliary graph from $G$ in order to assign the profit of each neighborhoods to a single node which relaxes the problem to the Budgeted Prize Collecting Steiner Tree (BPCST) \cite{JMP00}. In the BPCST problem we are given graph where all the nodes have a profit, a root node $r$, and a budget $L$; the objective is to find a tree with length at most $L$ which the sum of profits of the nodes it contain is maximum. There is a $(4 + \epsilon)$-approximation algorithm for BPCST due to Chekuri \etal \cite{CKP08}.

We call a hexagonal tile an {\em occupied-tile} if it contains at lest one node of $G$. We correspond each occupied-tile to a node in $V$. We call the nodes that are corresponded to the occupied tiles {\em center} nodes. If the occupied-tile contain root $r$ then its center node is $r$ otherwise the center node is an arbitrary node which is inside the tile. We construct the auxiliary graph ($\hvG = (\hvV, \hvE)$) as follow. $\hvV \subseteq V$ is the set of all the center nodes that correspond to occupied-tiles so that there is a bijection between nodes in $\hvV$ and the occupied-tiles. Note that root $r$ is a center node and is in $\hvV$ since $r$ is assigned to an occupied-tile. $\hvE \subseteq E$ is the subset of the edges in $E$ that have both endpoints in $\hvV$. We assign each client to all the occupied tiles its $\B$-ball intersects since each client resides in a node of $G$ at least one occupied-tile exists for each client. The profit of each center node in $\hvV$ is the sum of profits of the clients assigned to its corresponding occupied-tile.

Neighborhood Prize Collecting Steiner Tree Algorithm (NPCSTA) showed in \cref{2d-alg} is our $(4P + 1, 35, 12 + \epsilon)$-approximation algorithm for Euclidean NPCST where $P$ is the ratio of the largest neighborhood radius over the shortest one. Remember NPCSTA being $(4P+1, 35, 12 + \epsilon)$-approximation algorithm for the instance $(V, d, r, C, L)$ of the ENPCST problem, means it finds a Steiner tree $T$ with the following properties. $T$ is said to hit a client $c$ with neighborhood ball $\B(c,t_c)$ if $T$ has at least one node in $\B(c, t_c \cdot (4P + 1))$, the weight of $T$ is at most $(35 \cdot L$, and sum of the profits of the clients got hit by $T$ is at least $\frac{1}{12 + \epsilon} \opt$ where $\opt$ is the amount of profit an optimum tree collects with no violation in any bound. We use the $(4 + \epsilon')$-approximation algorithm of \cite{CKP08} for the BPCST problem as a black box in our algorithm (here we set $\epsilon' = \frac{\epsilon}{3}$).

\begin{figure}[!h]
\noindent
\fbox{\parbox{\textwidth}{
\begin{enumerate}
\item For each color $i =1, 2, \ldots, 7$:
\begin{enumerate}
\item Find a tree $\hvT_i$ rooted at $r$ in $\hvG$ with weight at most $5 \cdot L$ which collects the maximum profit only from the nodes that correspond to the occupied-tiles of color $i$ using the $(4+\epsilon')$-approximation algorithm of \cite{CKP08} for the BPCST problem.
\end{enumerate}
\item Return $T = \hvT_1 \cup \ldots \cup \hvT_7$ as the result. The set of covered clients are all the clients that are assigned to the occupied-tiles whose center node is in $T$.
\end{enumerate}
}}
\caption{Neighborhood Prize Collecting Steiner Tree Algorithm}
\label{2d-alg}
\end{figure}

We prove the following lemma which implies \cref{npcst-2d}.
\begin{lemma}
NPCSTA (showed in \cref{2d-alg}) is a $(4P +1, 35, 12 + \epsilon)$-approximation algorithm for the ENPCST problem where $P$ is the ratio of the largest neighborhood radius over the shortest one.
\end{lemma}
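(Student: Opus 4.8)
The plan is to verify, for the tree $T=\hvT_1\cup\cdots\cup\hvT_7$ returned by NPCSTA, the three defining properties of a $(4P+1,35,12+\epsilon)$-approximation: the radius-violation factor $4P+1$, the weight bound $35L$, and the profit bound $\frac{1}{12+\epsilon}\opt$. The first two are immediate; essentially all the content is in the profit bound, and within it, in exhibiting a good feasible solution for each of the seven BPCST instances.

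\textbf{Weight and radius.} Each $\hvT_i$ comes from the $(4+\epsilon')$-approximation for BPCST run with budget $5L$, so $\mathrm{weight}(\hvT_i)\le 5L$; the seven trees share the root $r$, so $T$ is connected with $\mathrm{weight}(T)\le 35L$. For the radius violation, suppose NPCSTA declares a client $c$ covered, i.e.\ $c$ is assigned to some occupied tile $D$ with center node $w(D)\in T$. Assignment means $\B(c,t_c)$ meets $D$, so some $y\in D$ has $d(c,y)\le t_c$; as $w(D)$ and $y$ lie in one hexagon of side $M$ (diameter $2M$), $d(c,w(D))\le t_c+2M$. Since the largest neighborhood radius equals $M/2$ and the ratio of largest to smallest radius is $P$, we have $M/2\le P\cdot t_c$, so $d(c,w(D))\le(1+4P)t_c$; hence $T$ has a node in $\B(c,(4P+1)t_c)$, as required.

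\textbf{Profit.} I would start from an optimal NPCST tree $T^*$: $r\in T^*$, $\mathrm{weight}(T^*)\le L$, hitting $\{\B(c,t_c):c\in C^*\}$ with $\sum_{c\in C^*}\theta_c=\opt$. For each $c\in C^*$ fix $u_c\in T^*\cap\B(c,t_c)$, let $D_c$ be the (occupied) tile containing $u_c$, and $w_c=w(D_c)$; then $c$ is assigned to $D_c$, so $\theta_c$ is part of the BPCST-profit of $w_c$ in the instance for color $\mathrm{col}(D_c)$. Partition $C^*=\bigsqcup_{i=1}^7 C^*_i$ by $\mathrm{col}(D_c)$. The crux is to show, for each $i$, that there is a tree in $\hvG$ rooted at $r$ of weight at most $5L$ containing every node of $W_i:=\{w_c:c\in C^*_i\}$ — call this $(\star)$. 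Given $(\star)$, that tree is feasible for the color-$i$ BPCST instance and collects BPCST-profit at least $\sum_{c\in C^*_i}\theta_c$ (clients with a common $D_c$ are pooled into that tile's profit, collected once), so the $(4+\epsilon')$-approximation yields $\hvT_i$ of profit $\ge\frac{1}{4+\epsilon'}\sum_{c\in C^*_i}\theta_c$; summing over $i$ and using disjointness of the $C^*_i$ gives $\sum_i\mathrm{profit}_i(\hvT_i)\ge\frac{1}{4+\epsilon'}\opt$. Every client that contributes to some $\mathrm{profit}_i(\hvT_i)$ is covered by $T$; and because $t_c\le M/2$, the ball $\B(c,t_c)$ has diameter at most $M$, hence meets at most $3$ hexagonal tiles of side $M$, which (using that distinct same-colored tiles are more than $M$ apart) have $3$ distinct colors, so each covered client is counted at most $3$ times in $\sum_i\mathrm{profit}_i(\hvT_i)$. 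Therefore $\mathrm{profit}(T)\ge\frac13\sum_i\mathrm{profit}_i(\hvT_i)\ge\frac{1}{3(4+\epsilon')}\opt=\frac{1}{12+\epsilon}\opt$ with $\epsilon'=\epsilon/3$.

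\textbf{Where the work is: proving $(\star)$.} This is the one real obstacle, and it is exactly where the $7$-coloring and the budget $5L$ are used. Fix color $i$ and pick $x_D\in T^*\cap D$ for each color-$i$ tile $D$ that $T^*$ meets; let $p_i$ be the number of such tiles. I would first argue $p_i=O(L/M)$: the points $\{x_D\}$ lie in $T^*$, a connected set of length $\le L$, and are pairwise $\Omega(M)$ apart because in the hexagon $7$-coloring distinct same-colored tiles are separated by $\Omega(M)$; a connected set of length $L$ can contain only $O(L/M)$ points that are pairwise $\Omega(M)$ apart. Next, take the minimal subtree $T'$ of $T^*$ spanning $\{x_D\}\cup\{r\}$ (length $\le L$) and attach to each $x_D$ the straight segment to $w(D)$ (length at most the tile diameter $2M$); this is a connected set through $W_i\cup\{r\}$ of length $\le L+2Mp_i$, so the Euclidean minimum spanning tree of $W_i\cup\{r\}$ — which is a genuine tree in $\hvG$ since all its vertices are center nodes — has weight at most $2(L+2Mp_i)=O(L)$. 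The delicate part is pinning the constants from the standard $7$-coloring so that this bound is at most exactly $5L$ (equivalently $p_i\le\frac{3L}{4M}$); this rests on the precise separation distance of same-colored hexagons and on a careful choice of the center nodes and of the representatives $x_D$. Everything outside $(\star)$ is bookkeeping.
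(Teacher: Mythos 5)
Your overall architecture is exactly the paper's: the weight bound $35L$ and the radius bound $d(c,w(D))\le t_c+2M\le(4P+1)t_c$ are argued the same way, and the profit analysis follows the same route --- partition the clients served by an optimal tree $T^*$ according to the color of the tile in which $T^*$ hits their ball, exhibit for each color $i$ a feasible budget-$5L$ solution of the color-$i$ BPCST instance collecting at least $\opt_i$, invoke the $(4+\epsilon')$-approximation, and lose a further factor $3$ because a ball of diameter at most $M$ meets at most three tiles, of three distinct colors. The entire technical content of the lemma is therefore concentrated in your step $(\star)$, and that is precisely where your proposal has a genuine gap.

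The paper proves $(\star)$ (its Claim~\ref{iprojection}) by a charging argument: it projects $T^*$ onto the color-$i$ centers plus $r$, charges each edge of the projected tree $\hvT^*_i$ to an edge-disjoint ``creator path'' of $T^*$, and uses that same-colored tiles are non-adjacent, so every creator path between two color-$i$ centers has length at least $M$ while the detour to the two centers costs at most $4M$ (at most $2M$ for root edges); this gives weight at most $5\cdot\mathrm{weight}(T^*)\le 5L$ with no need to count tiles. Your replacement --- bound the number $p_i$ of color-$i$ tiles met by $T^*$ and then double an MST through $W_i\cup\{r\}$ --- does not reach the required constant. The coloring in \cref{2d-alg} is greedy and only guarantees that same-colored tiles are non-adjacent, hence at pairwise distance at least $M$; from this the only bound you can extract is $p_i\le \frac{L}{M}+1$ (between consecutive visits to distinct color-$i$ tiles, $T^*$ travels at least $M$), and plugging it into your chain gives $2(L+2Mp_i)\approx 10L$, or about $6.6L$ even if you invoke the planar Steiner ratio --- not $5L$. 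The inequality you say you need, $p_i\le\frac{3L}{4M}$, is not implied by anything you establish (indeed, with a greedy coloring a density-$\frac13$ independent set of tiles can all receive color $i$, with nearest same-colored tiles at distance exactly $M$, so proving any separation better than $M$ is impossible and a bound like $\frac{3L}{4M}$ would need a delicate amortized geometric argument you do not supply). Since the budget $5L$ is hard-coded into the algorithm and the lemma's constants $(4P+1,35,12+\epsilon)$ depend on it, $(\star)$ with the constant $5$ must actually be proved; the paper's creator-path charging does this directly, and your counting route, as written, does not.
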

\begin{proof}
We use the structure of an optimum solution in order to show that there is also a tree in the graph $\hvG$ which also collects the same profit as the optimal tree with small increases in the neighborhood balls and the weight bound.
Let $T^*$ be an optimal tree collecting $\opt$ profit. Let $\hvT^*_i$ be the projection of $T^*$ on the set of nodes in $\hvG$ that correspond to the tiles of color $i$ plus root node $r$. More specifically, we build $\hvT^*_i$ as follows. If $T^*$ contains a node in a tile of color $i$ then $\hvT^*_i$ contains the node in $\hvG$ that corresponds to the tile of color $i$. $\hvT^*_i$ also contain root node $r$. There is an edge ($\hve = \hvv_1 \hvv_2$) between two nodes ($\hvv_1$, $\hvv_2$) of $\hvT^*_i$ if and only if there is a node $v_1$ in $T^*$ that resides in the tile corresponding to $\hvv_1$ and a node $v_2$ in $T^*$ that resides in the tile corresponding to $\hvv_2$ such that the unique path between $v_1$ and $v_2$ in $T^*$ does not pass through any node that resides in a tile that corresponds to a node of $\hvT^*_i$. The call the unique path which was the reason for adding an edge between $\hvv_1$ and $\hvv_2$ a {\em creator} path.  In the following claim we prove that the weight of $\hvT^*_i$ is at most $5$ times larger than $T^*i$.

\begin{figure}
        \centering
        \fbox{
        \begin{subfigure}[b]{0.4\textwidth}

                \centering
                \includegraphics[width=\textwidth]{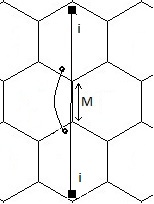}
                \caption{An edge between two center nodes in the tiles of color $i$ compared to its corresponding path in $T^*$ (creator path) shown as the curved line.}
        \label{itoi}
        \end{subfigure}%
        }\quad
        \fbox{
        \begin{subfigure}[b]{0.4\textwidth}
                \centering
                \includegraphics[width=\textwidth]{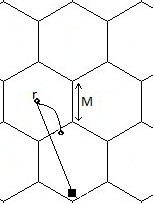}
                \caption{An edge between the root node ($r$) and a center node in $\hvT^*_i$ compared to its corresponding path in $T^*$ (creator path) shown as the curved line.}
                \label{rtoi}
        \end{subfigure}
        }
        \caption{Edges in $\hvT^*_i$ compared to paths in $T^*$. The squares represents the center nodes in $\hvV$ and circles represent the non-center nodes in $V$. The straight lines are the edges of $\hvT^*_i$ and the curvy lines are the paths of $T^*$ (the creator paths for the edges of $\hvT^*_i$).}
        \label{fig:animals}
\end{figure}
\begin{nclaim}
\label{iprojection}
The weight of $\hvT^*_i$ is at most $5$ times larger than the weight of $T^*_i$.
\end{nclaim}
\begin{proof}
The weight of $\hvT^*_i$ can be larger than $T^*_i$ as we select the center nodes of the tiles instead of the original nodes in $T^*_i$. By the construction $\hvT^*_i$, each of its edge is added to $\hvT^*_i$ because of a path in $T^*_i$. Moreover each path of $T^*_i$ contribute in adding at most one edge in $\hvT^*_i$ since only the endpoints of the path are in the occupied-tiles whose center is in $\hvT^*_i$ and all the other nodes are inside the other occupied-tiles (neither colored $i$ nor contain $r$). Therefore, we can charge each edge of $\hvT^*_i$ to their corresponding path. There are two types of edges in $\hvT^*_i$. The first type (Type One) contains the edges between two center nodes of tiles with color $i$ (see \cref{itoi}) and the second type (Type Two) contains the edges between $r$ and a center node of a tile with color $i$ (see \cref{rtoi}).

Note that we require no two tiles of color $i$ be neighbor of each other (This paragraph indeed shows the crucial point of coloring). Therefore, each creator path in $T^*i$ has length at least $M$ if its endpoints are the closest points in the two tiles (see Figure \cref{itoi}). The corresponding edge in $T^*_i$ can be at most $4M$ units of length larger than its creator path since its two center nodes can be the farthest points of the two tiles as opposed to the creator path (see \cref{itoi}). Therefore in the worst case the length of the Type One edges in $\hvT^*_i$ are $5$ times larger than its creator path.

The end points of creator paths for the Type Two edges can be in the neighboring tiles. The length of the creator paths for the Type Two edges is at least $\frac{\sqrt{3}}{2}M$ which is the distance of the center of the tile (node $r$) to the middle of a side (see \cref{rtoi}). The length of the corresponding edge in $\hvT^*_i$ can be at most $2M$ units of length larger than the creator path since the center node of the tile with color $i$ can be at the farthest point of the tile (Not that the distance between two farthest point in a tile is $2M$). Therefore in the worst case the maximum multiplicative inflation in the Type Two edges of $\hvT^*_i$ happens when the tiles are neighbors and the multiplicative factor is $\frac{(2 + \sqrt{3}/2)M}{\sqrt{3}/2\cdot M} \thickapprox 3.31$ (see \cref{rtoi}, in fact the multiplicative factor is smaller but we upper bound it by $3.31$).

Because the multiplicative increase compared to the creator paths for the Type One edges is at most $5$ and for the Type Two edges is at most $3.31$, the weight of $\hvT^*_i$ is at most $5$ times larger than the weight of $T^*_i$.
\end{proof}

In order to prove lemma we show that the resulting tree of NPCSTA ($T$) satisfies all the three criteria in the following three claims.
We start with the first criteria to show that the multiplicative increase in the neighborhoods' radius is $4P + 1$.

\begin{nclaim}
For each client $c$ that is in the set of served clients by NPCSTA, $T$ visits a node whose distance from $c$ is $4P+1$ times the radius of the neighborhood of $c$ ($t_c$).
\end{nclaim}
\begin{proof}
Let assume that $c$ is in the set of served clients because its neighborhood intersects with a tile of color $i$ whose center is in $T_i$. Similar to proof of Claim \ref{iprojection} the center node of the tile can be far from the point where neighborhood of $c$ intersects. Note that this distance is at most $2M$. Therefore $T_i$ visits a node which is at most $2M + t_c$ away from $c$ where $t_c$ is the radius of $c$'s neighborhood. Because the ratio of the largest neighborhood to the smallest one is $P$, $T_i$ and hence $T$ contain a node which has distance $(4P+1) \cdot t_c$ from $c$.
\qed
\end{proof}

\end{proof}
In the following we prove that the weight of the output of NPCSTA ($T$) is at most $35L$ (the second criterion of the tri-criteria approximation algorithm).
\begin{nclaim}
Weight of $T$ is at most $35L$.
\end{nclaim}
\begin{proof}
Weight of each tree $T_i$ where $i \in \{1, \ldots, 7\}$ is $5L$ and all of them contain root $r$ and hence connected to each other. Weight of $T$ is at most $35L$ because it is the union of all the $T_i$s.
\qed
\end{proof}
The following claim proves the bound of the third criterion of NPCSTA and finishes the proof of the lemma.
\begin{nclaim}
$T$ collects at least $\frac{1}{12 + \epsilon}$ fraction of the optimal profit.
\end{nclaim}
\begin{proof}
Let $\opt_i$ be the amount of profit $T^*$  collects from the clients ($C_i$) that are assigned to a tile of color $i$. Because the side length of each hexagon is $M$ and no two tiles with color $i$ intersects; the distance between two tiles of color $i$ is at least $M$. As the radius of the largest neighborhood is $\frac{M}{2}$ the $\B$-ball of each client in $C_i$ is intersecting to exactly one tile of color $i$. Therefore, there is a unique way of assigning clients in $C_i$ to the tiles of color $i$. From Claim \ref{iprojection} we can conclude that there is a tree $\hvT^*_i$ which visits all the center nodes of the tiles with color $i$ that contain at least one node of $T^*$. Therefore $\hvT^*$ collects at least $\opt_i$  profit from the color $i$ tiles and has weight $5L$. Thus, $T_i$ which is the result of $(4+\epsilon')$-approximation algorithm of \cite{CKP08} collects at least $\frac{1}{4 + \epsilon'} \opt_i$.

Note that the radius of the maximum neighborhood is $\frac{M}{2}$ and the side length of the hexagons is $M$. Each neighborhood can intersects with at most $3$ hexagons since if it intersects with $4$ tiles two of them cannot be neighbor of each other and have distance $M$. Therefore the profit of each client contribute in at most $3$ colors. From each color $i$, tree $T$ collects at least $\frac{1}{4 + \epsilon'} \opt_i$ and each client's profit can appear in at most $3$ colors; the proof of the Claim follows by summing over all the $7$ colors and setting $\epsilon' = \frac{\epsilon}{3}$.
\qed
\end{proof}
The approach for generalizing the above algorithm to the $D$-dimensional space is the same as NPCSTA (\cref{2d-alg}). We decompose the space into tiles using $D$-dimensional cubes and proceed as before. Each cube touches at most $O(2^D)$ other cubes so we have at most $O(2^D)$ colors. The farthest points between two points in a $D$-dimensional cube is $O(2^D)$. Therefore, we believe that our algorithm can be generalized to the $D$-dimensional Euclidean space and get $(O(2^D), O(2^D), O(2^D))$-approximation algorithm. 
\section{Max-MR problem}
\label{max-mr}
The other objective which is usually considered in the movement framework is to minimize the maximum latency. This objective is taken from the applications when there is a deadline by which all the clients have to be served. In our application the deadline specifies the latest time we can serve the last client which can be referred to as minimizing the maximum latency. We refer to this objective in our setting as the Max-MR problem. We give a constant-factor approximation algorithm for the case when all repairmen have the same speed and prove \cref{thm:max-mr}. Client serving problems with max objective are studied thoroughly for lots of different scenarios \cite{FHK76,li,EGK+04,arkin,campbell,xuxu,XW,KS11,nagarajan}.

\begin{proof}
We start by guessing the optimum maximum latency ($\opt$) by which all the clients will be served. Our algorithm for a given guessed value ($T$) for the maximum latency, either finds a path for each repairman that serve all the clients with latency at most $10T$ or announce that $T \leq \opt$. Therefore, we can find an appropriate value $T$ such that $\opt \leq T \leq \opt (1 + \epsilon)$ for a small positive constant $\epsilon$ by binary search over the range $[1, \frac{2 \cdot MST(G)}{\min_i{v_i}}]$. The upper value in the range is indeed an upper bound for the maximum latency which is the time required for the slowest repairman to visit all the nodes in the graph with a path obtained by doubling the edges of a MST.

Our algorithm proceed as follows. For a given $T$ we assign to each client $c$ a neighborhood $\B(c, t_c) = \{v | d(v, c) \leq t_c\}$ of radius $t_c = v'_c \cdot T$ where $v'_j$ is the speed of $c_j$. Now, for each client $c$ at least one repairman should visit at least one node in $\B(c_j, t_c)$. We run a clustering algorithm to cluster these neighborhoods as follows. We start with a client ($c$) whose neighborhood has the smallest radius and tag it as a {\em leader} client. We assign all the clients $c'$ for which $\B(c, t_c) \bigcap \B(c', 9t_{c'} \neq \emptyset$ to the client $c$ and tag them {\em slave} clients, \ie a client is slave if its stretched neighborhood with radius $9$ times the original radius intersects with the neighborhood of a leader client.  After that, we discard all the tagged clients and make a non-tagged client with the smallest neighborhood radius a leader and proceed as before. We continue tagging until all the clients get tagged.

\begin{nclaim}
\label{claim:extraedges}
The distance between the neighborhoods of any two leader clients ($c$ and $c'$) is at least $8$ times the radius of the larger neighborhood. Here the distance between neighborhood $\B(c, t_c)$ and neighborhood $\B(c', t_{c'})$ is the distance between the closest pair of nodes one in $\B(c, t_c)$ and the other in $\B(c', t_{c'})$.
\end{nclaim}
\begin{proof}
Without loss of generality assume $t_c \geq t_{c'}$. We prove the claim by contradiction. Assume the distance is less than $8t_c$, then $\B(c, 9 t_c)$ intersects with the neighborhood of $c'$. This makes $c$ tagged as a slave client of $c'$ in the process of tagging when we tagged $c'$ as a leader client.
\end{proof}

We contract each leader ball to a super-node maintaining the metric properties to obtain graph $\hvG$. We define an instance of the rooted version of the min-max $k$-tree cover \cite{EGK+04,arkin} over $\hvG$. In the min-max $k$-tree cover we are given a set of terminals in a metric space along with $k$ root nodes. The objective is to find $k$ trees rooted in the root nodes to cover all the terminals in the graph such that the length of the maximum tree is minimized. In $\hvG$ the terminals are the super-nodes and the root nodes are the starting location of the repairmen (which can be contracted into a super-node).

We use the $4$-approximation algorithm of Even et al.~\cite{EGK+04} for the min-max $k$-tree cover to find $k$-trees in $\hvG$ rooted at the root nodes and covering all the super-nodes. The set of paths for the repairmen are constructed as follows. For each repairmen $r$ we take the tree ($\hvT_r$) rooted at the starting location of $r$ found by the algorithm of Even et al.~\cite{EGK+04}. We double the edges of $\hvT_r$, build an Eulerian walk, and obtain path $\hvP_r$. Now, consider the edges of $\hvP_r$ in the graph $G$ when we uncontract each super-node to the neighborhood ball of its corresponding leader client. We reconnect $\hvP_r$ in $G$ by adding an extra edge for each pair of separated nodes that were a single super-node in $\hvP_r$. Moreover, if the root node is in a super-node and is disconnected from the rest of the path, we reconnect it by adding an extra edge to the node that was in the same super node as the root in $\hvP_r$. Let the connected path in $G$ obtained from $\hvP_r$ be $P_r$. If length of $P_r$ is less than $10\cdot v \cdot T$ for each repairman $r$ where $v$ is the common speed of all the repairmen, then set $\{P_r\}_{r \in R}$ is the result of our algorithm for the given $T$. Moreover, each client gets service from the closest node in one of the paths in $\{P_r\}_{r \in R}$. Otherwise if length of $P_r$ is greater than $10\cdot v \cdot T$ our algorithm announces that $T$ is less than $\opt$ (the optimum solution of Max-MR defined over $G$). We prove the following lemma which proves the correctness of our algorithm and finishes the proof of \cref{thm:max-mr}.

\begin{lemma}
If the guessed value $T$ is not smaller than $\opt$ then the length of $P_r \leq 10 \cdot v \cdot T$ for each repairman $r$ and the maximum latency of the clients is $10T$.
\end{lemma}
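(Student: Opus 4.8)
\emph{Proof plan.} The plan is to show that when $T\ge\opt$ the auxiliary min-max $k$-tree cover instance built on $\hvG$ has a feasible solution of value at most $vT$, to carry this bound through the $4$-approximation, the Eulerian-walk step, and the un-contraction/reconnection so that every $P_r$ has length at most $10vT$, and finally to verify that each client---leader or slave---can reach a node of some $P_r$ by time $10T$.

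\emph{Step 1: a cheap $k$-tree cover.} Fix an optimal Max-MR solution; it has value $\opt\le T$ and consists of paths $p^*_1,\dots,p^*_k$, with $p^*_i$ starting at $s_i$ and of length at most $v\cdot\opt\le vT$, that together serve all clients. Since $\opt\le T$, the node at which a client $c$ is served lies in $\B(c,v'_c\opt)\subseteq\B(c,v'_c T)=\B(c,t_c)$, so in particular every leader ball is hit by some $p^*_i$. Contracting the leader balls can only shrink pairwise distances, so the images of $p^*_1,\dots,p^*_k$ in $\hvG$ are walks of length at most $vT$ whose union visits every super-node; deleting repeated edges turns them into $k$ trees rooted at the contracted depot. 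Hence the min-max $k$-tree cover instance on $\hvG$ has optimum at most $vT$.

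\emph{Step 2: rounding and reconnection.} Run the $4$-approximation of Even et al.~\cite{EGK+04} to obtain trees $\hvT_r$ in $\hvG$, each of weight at most $4vT$, together covering all super-nodes, then double edges and take an Eulerian walk to get the closed walks $\hvP_r$ of length at most $8vT$. Now un-contract the super-nodes: a super-node $B_\ell$ is the image of a leader ball of radius $t_{c_\ell}$, and every incident edge of $\hvT_r$ reattaches to a single node of $B_\ell$, so a repair is needed only at a pass of $\hvP_r$ through $B_\ell$ whose entering and leaving edges differ; there are at most $\deg_{\hvT_r}(B_\ell)$ such passes, and I would repair each (and also add a short detour when $B_\ell$ is a leaf) by routing through the leader node $s'_{c_\ell}$, at cost at most $2t_{c_\ell}$ per pass. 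Since a pass through a leaf contributes no mismatch repair, every super-node carrying nontrivial repair cost is internal in $\hvT_r$ and hence incident to an edge of $\hvT_r$ joining it to another leader ball, which by \cref{claim:extraedges} has length at least $8t_{c_\ell}$. Charging the $O(\deg_{\hvT_r}(B_\ell)\cdot t_{c_\ell})$ repair cost at $B_\ell$ against these long incident edges shows the total repair cost over $P_r$ is at most a constant fraction of the weight of $\hvT_r$, and keeping track of the constants gives $|P_r|\le 10vT$; travelling at speed $v$, repairman $r$ then traverses $P_r$ by time $10T$. I expect this charging to be the main obstacle: \cref{claim:extraedges} only guarantees that leader-to-leader edges are long, so one must separately rule out (or pay for) internal non-terminal nodes of $\hvT_r$ and edges incident to the single combined root.

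\emph{Step 3: latency.} Each $P_r$ is reconnected so as to visit the leader node $s'_{c_\ell}$ of every leader ball it covers. A leader client $c$ itself then has a node of $P_r$ within $\B(c,t_c)$ (so within $v'_c T$ of $s'_c$) and reaches it by time $T\le 10T$. A slave client $c$ was attached to a leader $c_\ell$ picked earlier, so $t_{c_\ell}\le t_c$ and there is a point $z$ with $d(z,s'_{c_\ell})\le t_{c_\ell}$ and $d(z,s'_c)\le 9t_c$; hence
\[
d(s'_c,s'_{c_\ell})\ \le\ d(s'_c,z)+d(z,s'_{c_\ell})\ \le\ 9t_c+t_{c_\ell}\ \le\ 10t_c\ =\ 10\,v'_c T,
\]
so $c$ reaches $s'_{c_\ell}\in P_r$ by time $10T$, where repairman $r$ (who is at $s'_{c_\ell}$ by time $10T$) meets it. Therefore every client is served with latency at most $10T$, and when instead some $|P_r|>10vT$ Step~1 shows $T<\opt$, justifying the algorithm's rejection; this completes the plan.
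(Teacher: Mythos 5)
Your Steps~1 and~3 are sound and essentially coincide with the paper's argument (Step~1 is exactly the paper's feasibility bound of $v\cdot\opt\le vT$ for the min-max $k$-tree cover instance, and your Step~3 variant of routing the reconnection through the leader centers even makes the $10t_c$ bound for slave clients come out cleanly). The genuine gap is precisely where you flagged it, in Step~2. You pay the repair cost at a super-node $B_\ell$ against individual edges of $\hvT_r$ incident to $B_\ell$ and invoke Claim~\ref{claim:extraedges} to argue those edges are long. But that claim only lower-bounds distances between two \emph{leader balls}, while $\hvT_r$ lives in $\hvG$, whose node set still contains all ordinary nodes of $G$; an internal super-node may have all of its tree neighbors be Steiner nodes lying arbitrarily close to the ball, so no incident edge need have length $8t_{c_\ell}$ and your charge has nothing to pay against. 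The sentence ``internal in $\hvT_r$ and hence incident to an edge joining it to another leader ball'' is simply false in general. Moreover, even when one such long edge exists, it cannot absorb a repair cost of up to $\deg_{\hvT_r}(B_\ell)\cdot 2t_{c_\ell}$, and edges touching the root are also outside the scope of Claim~\ref{claim:extraedges}; so the constant $10$ does not follow from the scheme as written, and you did not supply the accounting you promise with ``keeping track of the constants.''

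The paper closes exactly this hole by charging along the walk rather than against single tree edges. After (harmlessly) pruning any subtree of $\hvT_r$ containing no super-node, two consecutive super-node visits on the Eulerian walk $\hvP_r$ are visits to two \emph{distinct} leader balls, so the walk segment between them has length at least the distance between those two balls, which Claim~\ref{claim:extraedges} bounds below by $8\max(t_{c_\ell},t_{c_{\ell'}})$ --- irrespective of how many short Steiner edges the segment uses. Each shortcut added inside a ball has length at most $2t_{c_\ell}\le \frac{1}{4}\cdot 8t_{c_\ell}$ and is charged to the segment immediately following that visit, and each segment is charged at most once; hence the total added length is at most one quarter of the length of $\hvP_r\le 8vT$, i.e.\ at most $2vT$, giving $|P_r|\le 10vT$. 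This per-segment charging (plus the pruning remark that makes consecutive ball visits distinct) is the missing idea; without it, the Steiner-node/degree issue you yourself identified is fatal to the claimed bound.
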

\begin{proof}
Let assume $\opt \leq T$. In the following claim we prove that the length of the paths of the repairmen is at most $10 \cdot v \cdot T$.
\begin{nclaim}
\label{maxrepairman}
Let repairman $r$ be the repairman whose $P_r$ has the maximum length. The length of $P_r$ is at most $10 \cdot v \cdot T$.
\end{nclaim}
\begin{proof}
Note that every path assigned to the repairmen in an optimal solution to the Max-MR has length at most $v \cdot \opt$. The optimal solution of the min-max $k$-tree cover problem defined over $\hvG$ is at most $v \cdot \opt$  because the paths in an optimal solution of Max-MR after contracted into the super nodes, are a candidate solution to the min-max $k$-tree cover in $\hvG$. Therefore the length of the maximum tree in the solution we obtain from the $4$-approximation algorithm is  at most $4 \cdot v \cdot \opt$ and hence it is at most $4\cdot v \cdot T$. Therefore the length of $\hvP_r$ is at most $8\cdot v \cdot T$ as it is obtained by doubling the edges of the tree assigned to $r$.

Path $P_r$ obtained from $\hvP_r$ by adding extra edges inside the neighborhood of the leader clients. We start from the starting location of $r$ and go along path $P_r$. Whenever we encounter an edge inside a leader client's neighborhood which is not in $\hvP_r$ we charge its length to the path in $\hvP_r$ that goes from the neighborhood of the current leader client to the neighborhood of the next leader client. From Claim \ref{claim:extraedges} we know that the length of the path to which we charged the length of the extra edge is at least $8$ times the radius of the neighborhood. Because the length of the extra edge is at most twice the radius of the neighborhood, the length of the extra edge is at most $\frac{1}{4}$ of the path it is charged to.  Therefore the total length of $P_r$ is at most $10\cdot v \cdot T$.
\qed
\end{proof}
In the following claim we prove that for each client $c$ there exist a node with distance at most $10 t_c$ from $c$ which is along the path of a repairman.
\begin{claim}
\label{maxclient}
For each client $c$ there exist at least one repairman $r$ such that there is at least one node which is in $\B(c, 10t_c)$ and in $P_r$.
\end{claim}
\begin{proof}
If $c$ is a leader client then the claim follows directly as the paths of the repairmen visit at least one node from the neighborhood ($\B(c, t_c)$ of each leader client. If $c$ is a slave client $\B(c, 9 t_c)$ intersects with the neighborhood of a leader client $c'$. Note that $t_c \geq t_{c'}$ since we tag the clients as leaders in the increasing order of their neighborhood radius. Therefore the distance of $c$ from any node in $\B(c', t_{c'})$ is at most $10 t_c$. As $c'$ is a leader client the paths of repairmen visit at least one node from $\B(c', t_{c'})$. Therefore, there exists a node which is in $\B(c, 10t_c)$ and resides in the path of a repairman.
\end{proof}
From Claim \ref{maxrepairman} we conclude that all the repairman can travel their assigned paths by time $10T$. From Claim \ref{maxclient} we conclude that each client by time $10T$ can go to a node which is in the path of a repairman. Therefore, the maximum latency of the clients is $10T$.
\end{proof}
\qed
\end{proof}

\section{Acknowledgment}
The authors would like to thank an anonymous reviewer of Approx+Random 2013 conference who provided us with helpful comments and more importantly an alternative proof which gives an $\left( O(1), O(1), O(\log n)\right)$-approximation algorithm for the NPCST problem.

\bibliographystyle{alpha}
\bibliography{refs}


\end{document}